\tikzstyle{vertex}=[circle, draw, inner sep=0pt, minimum size=6pt]
\newcommand{\vertex}{\node[vertex]}
\newlist{exoenum}{enumerate}{3}
\setlist[exoenum,1]{label=\arabic*)}
\setlist[exoenum,2]{label=\alph*)}
\setlist[exoenum,3]{label=\roman*)}
\newcommand{\poubelle}[1]{}
\newcommand{\ra}{\rightarrow}
\newcommand{\la}{\leftarrow}
\newcommand{\N}{\hbox{I\hskip -2pt N}}
\renewcommand\geq{\geqslant}
\renewcommand\leq{\leqslant}
\renewcommand\ge{\geqslant}
\renewcommand\le{\leqslant}
\newcommand{\e}{\epsilon}
\newcommand{\rs}{\rightsquigarrow}
\DeclareMathOperator{\tri}{\chi_T}
\newenvironment{subproof}{\par\noindent {\it Proof}.\ }{\hfill$\lozenge$\par\vspace{11pt}}
\theoremstyle{plain}
\newtheorem{theorem}{Theorem}
\newtheorem{proposition}[theorem]{Proposition}
\newtheorem{claim}{Claim}[theorem]
\newtheorem{property}{Property}[theorem]
\newtheorem{corollary}[theorem]{Corollary}
\newtheorem{lemma}[theorem]{Lemma}
\theoremstyle{definition}
\newtheorem{remark}[theorem]{Remark}
\newtheorem{conjecture}[theorem]{Conjecture}
\DeclareMathOperator{\dist}{dist}
\DeclareMathOperator{\Forb}{Forb}
\DeclareMathOperator{\trans}{tt}
\DeclareMathOperator{\Or}{Or}
\DeclareMathOperator{\Reach}{Reach}
\title{\bf $\chi$-bounded families of oriented graphs\thanks{This work was supported by ANR under contract STINT ANR-13-BS02-0007.}}
\author{Pierre Aboulker\thanks{Project Coati, I3S (CNRS, UNSA) and INRIA,
Sophia Antipolis, France.}
 \and 
 J\o rgen Bang-Jensen\thanks{Department of
Mathematics and Computer Science, University of Southern Denmark,
Odense, Denmark.  Part of this work was done while this author was
visiting project COATI, Sophia Antipolis.  Hospitality and financial
support from Labex UCN@Sophia, Sophia Antipolis is gratefully
acknowledged.  The research of Bang-Jensen was also supported by the
Danish research council under grant number 1323-00178B.} 
\and 
Nicolas Bousquet\thanks{LIRIS, Ecole Centrale Lyon, France.} 
\and 
Pierre Charbit\thanks{IRIF, Universit\'e Paris Diderot, France and Project Gang , INRIA.} 
\and 
Fr\'ed\'eric Havet$^{\dag}$ 
\and 
Fr\'ed\'eric Maffray\thanks{G-SCOP, CNRS and Universit\'e Grenoble Alpes, Grenoble, France.} 
\and Jose Zamora\thanks{Departamento de Matem\'aticas, Universidad
Andres Bello, Santiago, Chile. This author was partially supported by N\'ucleo Milenio Informaci\'on y Redes ICM/FIC RC130003 and Basal program PFB-03 CMM} }
\begin{document}

\maketitle

\begin{abstract}
A famous conjecture of Gy\'arf\'as and Sumner states for any tree $T$
and integer $k$, if the chromatic number of a graph is large enough,
either the graph contains a clique of size $k$ or it contains $T$ as
an induced subgraph.  We discuss some results and open problems about
extensions of this conjecture to oriented graphs.  We conjecture that
for every oriented star $S$ and integer $k$, if the chromatic number
of a digraph is large enough, either the digraph contains a clique of
size $k$ or it contains $S$ as an induced subgraph.  As an evidence,
we prove that for any oriented star $S$, every oriented graph with
sufficiently large chromatic number contains either a transitive
tournament of order $3$ or $S$ as an induced subdigraph.  We then
study for which sets ${\cal P}$ of orientations of $P_4$ (the path on
four vertices) similar statements hold.  We establish some positive
and negative results.
\end{abstract}

\section{Introduction}

What can we say about the induced subgraphs of a graph $G$ with large
chromatic number?  Of course, one way for a graph to have large
chromatic number is to contain a large complete subgraph.  However, if
we consider graphs with large chromatic number and small clique
number, then we can ask what other subgraphs must occur.  We can avoid
any graph $H$ that contains a cycle because, as proved by
Erd\H{o}s~\cite{Erd59}, there are graphs with arbitrarily high girth
and chromatic number; but what can we say about trees?  Gy\'arf\'as
\cite{Gya75} and Sumner~\cite{Sum81} independently made the following
beautiful and difficult conjecture.

\begin{conjecture}[Gy\'arf\'as \cite{Gya75} and Sumner~\cite{Sum81}]
\label{gyarfas-sumner}
For every integer $k$ and tree $T$, there is an integer $f(k,T)$ 
such that every graph with chromatic number at least $f(k,T)$  
contains either a clique of size $k$, or an induced copy of $T$. 
\end{conjecture}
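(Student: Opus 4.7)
This statement is the Gy\'arf\'as--Sumner conjecture itself, one of the central open problems in structural graph theory, so realistically I can only sketch the natural inductive strategy and indicate where it breaks down. The plan is to proceed by induction on $|V(T)|$, the base case being trivial. For the induction step, pick a leaf $v$ of $T$, let $u$ be its unique neighbor in $T$, and set $T' = T - v$; I want to show that whenever $\chi(G)$ is sufficiently large compared with $f(k,T')$, some induced copy of $T'$ in $G$ can be extended by one vertex outside it into an induced copy of $T$.

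The main workhorse would be BFS levelling. Pick $x \in V(G)$ and let $L_i$ be the set of vertices at distance exactly $i$ from $x$; a standard counting argument gives, for some $i$, either $\chi(G[L_i]) \geq \chi(G)/2$ (in which case one recurses inside a single layer) or a dense ``downward'' structure between $L_{i-1}$ and $L_i$ that can be exploited by Ramsey arguments. Working in a layer has the crucial side benefit that one controls, for every induced copy of $T'$ in $L_i$, a set of candidate vertices in $L_{i-1}$ that may play the role of the missing leaf $v$.

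The hard part, which is exactly why the conjecture has resisted attack for nearly fifty years, is to guarantee a vertex $w$ outside a copy of $T'$ whose induced neighborhood on that copy equals $\{u\}$. Arranging $w$ adjacent to $u$ is cheap; arranging $w$ to be \emph{non}-adjacent to each of the other $|V(T')|-1$ vertices of the copy is the true obstruction, because the unions of neighborhoods of those vertices can easily absorb every available candidate. The strategy I would pursue is to produce, by iterated applications of the layering above, a family of many ``disjointly extendable'' induced copies of $T'$ whose potential extension sets are disjoint, and then apply a Ramsey-type argument to the bipartite adjacency pattern between extension candidates and the copies. This is, in essence, the line that succeeds for paths (Gy\'arf\'as), for trees of radius at most two (Kierstead--Penrice), for brooms, and for caterpillars (Scott), but none of the known refinements controls the simultaneous non-adjacencies for an arbitrary tree $T$; this simultaneous control of non-edges is where I would expect any attempted proof to stall.
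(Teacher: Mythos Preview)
You have correctly identified the situation: the statement in question is a \emph{conjecture}, not a theorem, and the paper does not prove it. The paper merely states Conjecture~\ref{gyarfas-sumner} as motivation, lists the known special cases (stars via Ramsey, paths via Gy\'arf\'as, radius-two trees via Kierstead--Penrice, subdivisions of stars via Scott), and then moves on to study oriented analogues. There is therefore no ``paper's own proof'' to compare your proposal against.

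Your sketch of the standard layering/induction approach and your diagnosis of the obstruction---controlling the simultaneous non-adjacencies needed to extend an induced copy of $T'$ to one of $T$---are accurate and in line with how the problem is understood in the literature. Since the conjecture remains open, no complete argument along these lines is expected or possible here.
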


We can rephrase this conjecture, using the concept of $\chi$-bounded
graph classes introduced by Gy\'arf\'as~\cite{Gya87}.  A class of
graph ${\cal G}$ is said to be \emph{$\chi$-bounded} if there is a
function $f$ such that $\chi(G)\leq f(\omega(G))$ for every $G\in
{\cal G}$; such a function $f$ is called a \emph{$\chi$-bounding}
function.  For instance, the class of perfect graphs is $\chi$-bounded
with $f(k)=k$ as a $\chi$-bounding function.

For a graph $H$, we write $\Forb(H)$ for the class of graphs that do
not contain $H$ as an induced subgraph.  For a class of graphs ${\cal
H}$, we write $\Forb({\cal H})$ for the class of graphs that contain
no member of ${\cal H}$ as an induced subgraph.  As we have remarked,
$\Forb(H)$ is not $\chi$-bounded when $H$ contains a cycle.  The
conjecture of Gy\'arf\'as and Sumner (Conjecture~\ref{gyarfas-sumner})
asserts that $\Forb(T)$ is $\chi$-bounded for every tree $T$.  In
fact, an easy argument shows that the conjecture is equivalent to the
following one .

\begin{conjecture}\label{conj:chi-bounded}
$\Forb(H)$ is $\chi$-bounded if and only if $H$ is a forest.
\end{conjecture}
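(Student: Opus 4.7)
The statement is phrased as a conjecture because one implication---forest $\Rightarrow$ $\chi$-bounded---is equivalent to Conjecture~\ref{gyarfas-sumner}, which is itself open. My plan is therefore to prove the \emph{equivalence} of the two conjectures. The direction ``Conjecture~\ref{conj:chi-bounded} $\Rightarrow$ Conjecture~\ref{gyarfas-sumner}'' is immediate, since every tree is a forest. For the converse, I will handle the ``only if'' and ``if'' halves of Conjecture~\ref{conj:chi-bounded} separately, the first unconditionally and the second by reduction to the tree case.

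For the ``only if'' direction, suppose $H$ is not a forest, so that $H$ contains a cycle. Applying Erd\H{o}s's theorem~\cite{Erd59} with any girth strictly greater than $|V(H)|$, I obtain a family of graphs with unbounded chromatic number, clique number at most $2$, and no cycle of length at most $|V(H)|$. Such graphs cannot contain $H$ as an induced subgraph, so $\Forb(H)$ is not $\chi$-bounded. This step does not invoke Gy\'arf\'as--Sumner.

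For the ``if'' direction, assume Conjecture~\ref{gyarfas-sumner} and let $F$ be a forest with components $T_1,\dots,T_r$. I build a tree $\widehat{T}$ containing $F$ as an induced subgraph by picking a vertex $u_i$ in each $T_i$ and linking $u_i$ to $u_{i+1}$ through a newly introduced path of length $2$, with one fresh internal vertex $w_i$. Since every new edge is incident to some $w_i\notin V(F)$, the induced subgraph $\widehat{T}[V(F)]$ has no edges beyond those of $F$, that is, $\widehat{T}[V(F)] = F$. Hence $\Forb(F) \subseteq \Forb(\widehat{T})$, and applying Gy\'arf\'as--Sumner to the tree $\widehat{T}$ yields that $\Forb(\widehat{T})$, and therefore $\Forb(F)$, is $\chi$-bounded.

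There is no real obstacle here: this is the ``easy argument'' alluded to in the text. The only subtlety is in the tree construction, where using a single edge rather than a length-$2$ path between two components would introduce a spurious edge between $V(T_i)$ and $V(T_{i+1})$ and spoil the equality $\widehat{T}[V(F)] = F$; this is precisely why one fresh internal vertex is inserted on each linking path.
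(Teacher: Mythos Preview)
Your argument is correct and is precisely the ``easy argument'' the paper alludes to without writing out: the ``only if'' direction via Erd\H{o}s's high-girth construction, and the ``if'' direction by embedding a forest as an induced subgraph of a single tree so that Gy\'arf\'as--Sumner applies. The paper gives no details to compare against, so there is nothing further to add.
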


There are not so many cases solved for this conjecture, let us recall the main ones.
\begin{itemize}
\item 
Stars: Ramsey's Theorem implies easily that $\Forb(K_{1,t})$ is
$\chi$-bounded for every $t$.  
\item 
Paths: Gy\'arf\'as~\cite{Gya87} showed that $\Forb(P)$ is
$\chi$-bounded for every path $P$.  
\item 
Trees of radius 2: using the previous result, Kierstead and
Penrice~\cite{KiPe94} proved that $\Forb(T)$ is $\chi$-bounded for
every tree $T$ of radius two (generalizing an argument of Gy\'arf\'as,
Szemer\'edi and Tuza~\cite{GST80} who proved the triangle free case).
This result is proved using a result, attributed to Hajnal and R\"odl
(see \cite{KiPe94}) but apparently denied by Hajnal (see
\cite{KiRo96}), stating that $\Forb(\{T,K_{n,n}\})$ is $\chi$-bounded
for every tree $T$ and every integer $n$.  
\item 
Subdivision of stars: it is a corollary of the following topological
version of Conjecture~\ref{gyarfas-sumner} established by
Scott~\cite{Sco97}: {\it for every tree $T$ and integer $k$, there is
$g(k,T)$ such that every graph $G$ with $\chi(G) > g(k,T)$ contains
either a clique of size $k$ or an induced copy of a subdivision of
$T$.}
\end{itemize}

\smallskip

More generally, if ${\cal H}$ is a finite class of graphs, then
$\Forb({\cal H})$ is $\chi$-bounded only if ${\cal H}$ is a
forest, and Conjecture~\ref{conj:chi-bounded} states that the converse
is true.  In contrast, there are infinite classes of graphs ${\cal H}$
containing no trees that are $\chi$-bounded.  A trivial example is the
set of odd cycles, since graphs with no (induced) odd cycles are
bipartite.  Another well-known example are {\it Berge graphs} which
are the graphs with no odd holes and no odd anti-holes as induced
subgraphs.  An induced cycle of length at least 4 is a \emph{hole}.
An induced subgraph that is the complement of a hole is an
\emph{antihole}.  A hole or antihole is {\it odd} (resp.  {\it even})
if it has a odd (resp.  even) number of vertices.  The celebrated
Strong Perfect Graph Theorem~\cite{CRST06}, states that Berge graphs
are perfect graphs, i.e. graphs for which chromatic number equals
clique number.  In other words, the class of Berge graphs is
$\chi$-bounded with the identity as bounding function.
Many super-classes of the class of Berge graphs are conjectured or proved to be $\chi$-bounded.
In fact, Scott and Seymour~\cite{ScSe} proved that if $G$ is
odd-hole-free, then $\chi(G)\leq 2^{3^{\omega(G)}}$.  This upper bound
is certainly not tight.  Better bounds are known for small values of
$\omega(G)$.  If $\omega(G)=2$, then $G$ has no odd cycles and so is
bipartite.  If $\omega(G)=3$, then $\chi(G)\leq 4$ as shown by
Chudnovsky et al.~\cite{CRST10}.

\begin{theorem}[Chudnovsky et al.~\cite{CRST10}]\label{thm:CRST10}
Every odd-hole-free graph with clique number at most $3$ has chromatic
number at most $4$.
\end{theorem}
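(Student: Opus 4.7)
The natural strategy is induction on $|V(G)|$ via a minimum counterexample. Let $G$ be an odd-hole-free graph with $\omega(G)\le 3$ and $\chi(G)\ge 5$, of minimum order. Standard reductions make $G$ connected and $4$-critical (every proper induced subgraph is $4$-colorable) and rule out clique-cutsets: any clique in $G$ has at most $3$ vertices, so $4$-colorings of the two sides of such a decomposition can be permuted to agree on the cut clique and glued into a $4$-coloring of $G$, contradicting $\chi(G)\ge 5$. Since induced subgraphs of $G$ are again odd-hole-free with $\omega\le 3$, minimality applies to the two sides.

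For local structure, fix $v\in V(G)$. Then $N(v)$ contains no triangle (else $v$ would lie in a $K_4$) and is itself odd-hole-free. A triangle-free graph that contains any odd cycle has shortest odd cycle of length at least $5$, and that shortest cycle must be induced (a chord would yield a shorter odd cycle), hence an odd hole; so $N(v)$ is bipartite. Thus $G$ is \emph{locally bipartite}, a strong structural constraint but not by itself sufficient to force $\chi(G)\le 4$.

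From here I would try an \emph{even pair} argument in the style of Fonlupt--Uhry: a non-adjacent pair $\{u,v\}$ such that every induced path from $u$ to $v$ has even length can be contracted to a single vertex without changing $\chi$. If such a pair exists and the contraction preserves both odd-hole-freeness and $\omega\le 3$, minimality of $G$ gives a contradiction. Failing that, I would pick a longest induced path $P$ in $G$, classify the remaining vertices by their attachments to $P$ in the style of Gy\'arf\'as' proof that $\Forb(P)$ is $\chi$-bounded, and use the bipartiteness of each $N(v)$ to extend a $4$-coloring of $G-V(P)$ to $G$. The main obstacle is that odd-hole-free graphs admit no clean decomposition theorem analogous to the Strong Perfect Graph Theorem: one must either exhibit concrete obstructions to the existence of an even pair (which force very specific configurations around short induced odd paths) or carry out a detailed case analysis of the attachments to $P$, and this is where the technical weight of the proof in \cite{CRST10} lies.
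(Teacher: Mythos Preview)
The paper does not prove this theorem. Theorem~\ref{thm:CRST10} is quoted as a result of Chudnovsky, Robertson, Seymour and Thomas~\cite{CRST10} and is used as a black box (specifically in the proof of Theorem~\ref{TT3P+(2,1)}, to handle the odd-hole-free case). There is therefore no proof in the paper to compare your proposal against.

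As for the proposal itself: it is not a proof but an outline of possible strategies, and you say as much in the final paragraph. The preliminary reductions are fine (minor slip: a minimum counterexample with $\chi\ge 5$ is $5$-critical, not $4$-critical), and the observation that every neighbourhood is bipartite is correct and useful. But the two avenues you suggest---contracting an even pair, or classifying attachments to a longest induced path---are not carried out; you neither show that an even pair exists, nor that contracting one preserves the hypotheses, nor do you execute the path-attachment analysis. The actual argument in~\cite{CRST10} is a substantial structural theorem (roughly: such a graph either has a bisimplicial vertex, or is the line graph of a bipartite subdivision of $K_4$, or admits a suitable decomposition), and neither of your sketched routes recovers this without major additional work.
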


\bigskip

The goal of this paper is to extend some results known about
Conjecture \ref{gyarfas-sumner}.  Let $T$ be a tree for which we know
that Conjecture \ref{gyarfas-sumner} is true, and let ${\cal D}_T$ be a set of orientations
of $T$.  Then one can consider the class $\Forb({\cal D})$ of oriented
graphs that have an orientation without any induced subdigraph in
${\cal D}_T$.  Different sets ${\cal D}_T$ will define different
superclasses of $\Forb(T)$, and one can wonder which of these are
still $\chi$-bounded.  Equivalently, if one defines the chromatic
number or clique number of an oriented graph to be that of its
underlying graph, one can also talk about a $\chi$-bounded classes of
oriented graphs, and we can ask which set of oriented trees, when
forbidden as induced subdigraphs, defines $\chi$-bounded classes of
oriented graphs.  After a section establishing notations and basic
tools, we consider oriented stars (i.e. orientations of $K_{1,n}$) and
oriented paths (i.e. orientations of paths).

Before detailing those results, let us note that in this oriented
setting, if we do not demand the subdigraph to be induced, then the
problem is radically different.  Burr proved that every
$(k-1)^2$-chromatic oriented graph contains every oriented tree of
order $k$.  This was slightly improved by Addario-Berry et
al.~\cite{AHS+13} by replacing $(k-1)^2$ by $(k^2/2-k/2+1)$.  The
right bound is conjectured~\cite{Burr80} to be $(2k-2)$.

\subsection{Oriented stars}

 We conjecture the following : 

\begin{conjecture}\label{conj:star}
For any oriented star $S$, $\Forb (S)$ is $\chi$-bounded.
\end{conjecture}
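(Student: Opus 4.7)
The plan is to combine Ramsey's theorem with a case analysis on the orientation of $S$; let $p$ and $q$ denote the numbers of out-leaves and in-leaves of $S$ respectively.

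\emph{One-sided case.} I would first treat the case $q = 0$ (the case $p = 0$ being symmetric). If $S = K_{1,n}^+$ is the all-out star, take any $D \in \Forb(S)$ with $\omega(D) < k$. For every vertex $v$, the set $N^+(v)$ contains no $K_k$ and no independent set of size $n$, for otherwise $v$ together with such an independent set would form an induced $S$. Ramsey's theorem then forces $|N^+(v)| < R(n,k)$, so $D$ has maximum out-degree less than $R(n,k)$. Its underlying graph therefore has at most $(R(n,k)-1) \cdot |V(D)|$ edges, is $(2R(n,k)-2)$-degenerate, and hence $\chi(D) \le 2R(n,k) - 1$.

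\emph{Mixed case.} For $p, q \ge 1$, I would apply the undirected Gy\'arf\'as--Sumner theorem for $K_{1,N}$ (itself a direct consequence of Ramsey's theorem): for any $D$ with $\omega(D) < k$ and $\chi(D)$ sufficiently large, there exists a vertex $v$ with an induced $K_{1,N}$ in the underlying graph centered at $v$ and leaf-set $L$, for $N \gg p+q$. Partition $L = L^+ \cup L^-$ according to the direction of the arc at $v$. If $|L^+| \ge p$ and $|L^-| \ge q$, choosing $p$ vertices from $L^+$ and $q$ from $L^-$ yields an induced copy of $S$, and we are done.

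\emph{Main obstacle.} The hard subcase is when the partition is very imbalanced, say $|L^-| \le q-1$, leaving $v$ with many pairwise non-adjacent out-neighbors but only a few pairwise non-adjacent in-neighbors. Ramsey applied to $N^-(v)$ (which has no $K_k$ and no independent set of size $q$) yields $|N^-(v)| < R(q,k)$, locally bounding the in-degree of $v$. I expect the principal obstacle to be globalizing this local control: vertex-by-vertex Ramsey bounds on one side of each vertex do not straightforwardly combine into a bound on $\chi(D)$, because the chromatic complexity of $D$ may keep shifting between the out- and in-neighborhoods of successive candidate centers. A resolution would likely require either a structural strengthening adapted to oriented graphs, in the spirit of Scott's topological version of Gy\'arf\'as--Sumner, or a new averaging tool that exploits the fact that every vertex is subject to such a local Ramsey dichotomy; absent these, a natural fallback is to additionally forbid a small obstruction such as the transitive triangle $TT_3$ in order to pin down the imbalanced case.
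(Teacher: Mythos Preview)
The statement you are addressing is Conjecture~\ref{conj:star}, which is \emph{open} in the paper; there is no proof to compare against. The paper only establishes partial cases: the one-sided stars $S_{0,\ell}$ (via Ramsey, exactly as in your first paragraph), the star $S_{1,1}$ (via quasi-transitive orientations), and the $TT_3$-free case (Theorem~\ref{thm:TT3-star}). Your one-sided argument is correct and coincides with the paper's.

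In the mixed case your outline contains an unjustified step, separate from the obstacle you acknowledge. From $|L^-|\le q-1$ you conclude that $N^-(v)$ ``has no independent set of size $q$'' and then invoke Ramsey to bound $|N^-(v)|$. This does not follow: $L$ is one particular independent set of neighbours of $v$, and nothing prevents $N^-(v)$ from containing some other independent set $B$ of size $q$. The real difficulty is that even with a large independent set $A\subseteq N^+(v)$ and an independent set $B\subseteq N^-(v)$ of size $q$, arcs between $A$ and $B$ may obstruct an induced $S_{p,q}$ centred at $v$; each such arc only creates a triangle through $v$, not a large clique, so the hypothesis $\omega(D)<k$ gives no leverage once $k\ge 4$. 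This is exactly why the triangle-free case is easy (Proposition~\ref{prop:star-triangle}) and why your proposed fallback of additionally forbidding $TT_3$ is the natural next step --- it is precisely what the paper carries out in Theorem~\ref{thm:TT3-star}, and the full conjecture remains open.
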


For every choice of positive integers $k, \ell$, we denote by $S_{k,\ell}$ the
oriented star on $k+\ell +1$ vertices where the center has in-degree
$k$ and out-degree $\ell$.  Of course by directional duality the
result for $S_{k,\ell}$ implies the result for $S_{\ell,k}$.  Also, since $\Forb(S_{k,\ell}) \subseteq \Forb(S_{k,k})$ if $k\geq \ell$, it suffices 
to prove the conjecture for $S_{k,k}$ for all values of $k$.

The cases $k=0$ and $k=\ell=1$ are not difficult and were previously
known (as mentioned in \cite{KiRo96}) but no proof was published.  As
those proofs are short and interesting, we provide them in
Subsection~\ref{subsec:connu}.

By definition of $\chi$-boundedness, Conjecture~\ref{conj:star} can be
restated as follows: for every positive integer $p$, $\Forb (\Or(K_p),
S)$ has bounded chromatic number, where $\Or(K_p)$ is the set of
orientations of $K_p$ (that is, $Or(K_p)$ is the set of all tournaments on $p$ vertices).  There are exactly two orientations of $K_3$ :
the directed cycle on three vertices $\vec{C_3}$, and the transitive
tournament on three vertices $TT_3$.  It is not difficult to show
that, for any oriented star $S$, $\Forb (\vec{C_3}, TT_3, S)$ has
bounded chromatic number.  We can even determine the exact value of
$\chi(\Forb (\vec{C_3}, TT_3, S))$
(Proposition~\ref{prop:star-triangle}).  This can be seen as the first
step ($p=3$) of Conjecture~\ref{conj:star}.  Kierstead and
R\"odl~\cite{KiRo96} proved that $\Forb (\vec{C_3}, S)$ is
$\chi$-bounded.  In Theorem~\ref{thm:TT3-star}, we prove the following
counterpart : {\it $\Forb(TT_3, S)$ has bounded chromatic number, for
every oriented star $S$.} This can be seen as the next step towards
Conjecture~\ref{conj:star}; indeed, by Theorem~\ref{thm:erdos-moser},
every orientation of $K_4$ contains $TT_3$ as an induced subdigraph,
so $\Forb(TT_3, S)\subset \Forb(\Or(K_4), S)$.  The next step would be
to prove that $\Forb (\Or(K_4), S)$ has bounded chromatic number for
every oriented star $S$.

%

%
%

\subsection{Oriented paths on four vertices}

Let us denote by $P_k$ the path on $k$ vertices.  Since $P_2$ and
$P_3$ are stars, the next case for paths concerns orientations of
$P_4$.  The graphs with no induced $P_4$ are known as {\it cographs},
and it is well-known that cographs are perfect.  In particular, the
class of cographs is $\chi$-bounded (or equivalently $\Forb(\Or(P_4))$
is $\chi$-bounded).  There are four non-isomorphic orientations of
$P_4$.  They are depicted in Figure~\ref{fig:P4}.
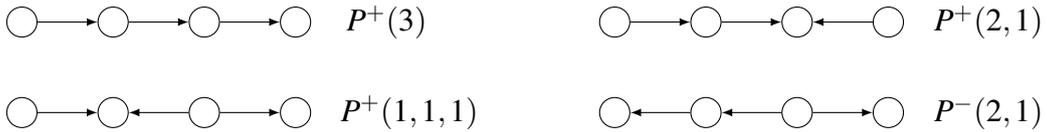
\begin{figure}[!hbtp]
\centering
 \begin{tikzpicture}[scale=0.6]
\node[draw,circle] (1) at (0,0) {};
\node[draw,circle] (2) at (2,0) {};
\node[draw,circle] (3) at (4,0) {};
\node[draw,circle] (4) at (6,0) {};

\draw[->,>=latex] (1) -- (2);
\draw[->,>=latex] (2) -- (3);
\draw[->,>=latex] (3) -- (4);

\draw (8,0) node {$P^+(3)$};

\begin{scope}[xshift=13cm]
\node[draw,circle] (1) at (0,0) {};
\node[draw,circle] (2) at (2,0) {};
\node[draw,circle] (3) at (4,0) {};
\node[draw,circle] (4) at (6,0) {};

\draw[->,>=latex] (1) -- (2);
\draw[->,>=latex] (2) -- (3);
\draw[->,>=latex] (4) -- (3);

\draw (8.2,0) node {$P^+(2,1)$};
\end{scope}

\begin{scope}[yshift=-2cm, xshift=13cm]
\node[draw,circle] (1) at (0,0) {};
\node[draw,circle] (2) at (2,0) {};
\node[draw,circle] (3) at (4,0) {};
\node[draw,circle] (4) at (6,0) {};

\draw[->,>=latex] (2) -- (1);
\draw[->,>=latex] (3) -- (2);
\draw[->,>=latex] (3) -- (4);

\draw (8.2,0) node {$P^-(2,1)$};
\end{scope}

\begin{scope}[yshift=-2cm]
\node[draw,circle] (1) at (0,0) {};
\node[draw,circle] (2) at (2,0) {};
\node[draw,circle] (3) at (4,0) {};
\node[draw,circle] (4) at (6,0) {};

\draw[->,>=latex] (1) -- (2);
\draw[->,>=latex] (3) -- (2);
\draw[->,>=latex] (3) -- (4);

\draw (8.5,0) node {$P^+(1,1,1)$};
\end{scope}

\end{tikzpicture}
  \caption{The four orientations of $P_4$}
    \label{fig:P4}
\end{figure}

In Section~\ref{secPaths}, we study $\Forb({\cal P})$ when ${\cal P}$
is a set of orientations of $P_4$.  Kierstead and
Trotter~\cite{KiTr92} proved that $\Forb(P^+(3))$ is not
$\chi$-bounded by constructing $(TT_3, P^+(3))$-free oriented graphs
with arbitrary large chromatic number.  Gy\'arf\'as pointed out that
the natural orientations of the so-called \emph{shift graphs}
(\cite{ErHa76}) are in $\Forb(\vec{C}_3, TT_3,P^+(1,1,1))$ but may
have arbitrarily large chromatic number.  Consequently,
$\Forb(P^+(1,1,1))$ is not $\chi$-bounded.  See
Subsection~\ref{subsec-nochi}.

\medskip

We believe that $\{P^+(3)\}$ and $\{P^+(1,1,1)\}$ are the only
non-empty subsets ${\cal P}$ of $\Or(P_4)$ such that $\Forb({\cal P})$
is not $\chi$-bounded.

\begin{conjecture}\label{conj:P4}
Let ${\cal P}$ be a non-empty subset of $\Or(P_4)$.\\
If ${\cal P} \neq \{P^+(3)\}$ and ${\cal P}\neq \{P^+(1,1,1)\}$, then
$\Forb({\cal P})$ is $\chi$-bounded.
\end{conjecture}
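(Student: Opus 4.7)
The plan is to exploit the monotonicity of $\Forb$ together with the directional duality (reversing all arcs exchanges $P^+(2,1)$ and $P^-(2,1)$ and fixes $P^+(3)$ and $P^+(1,1,1)$) to reduce the conjecture to two core sub-statements. Since $\Forb({\cal P}'')\subseteq \Forb({\cal P})$ whenever ${\cal P}\subseteq {\cal P}''$, a $\chi$-bounding function for $\Forb({\cal P})$ automatically bounds any $\Forb({\cal P}'')$ with ${\cal P}''\supseteq {\cal P}$. Among the non-empty subsets ${\cal P}\subseteq \Or(P_4)$ covered by the conjecture, the only inclusion-minimal ones (up to arc-reversal) are $\{P^+(2,1)\}$ and $\{P^+(3),P^+(1,1,1)\}$. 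Thus it remains to prove $\chi$-boundedness of $\Forb(P^+(2,1))$ and of $\Forb(\{P^+(3),P^+(1,1,1)\})$.

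For $\Forb(P^+(2,1))$, I would emulate Gy\'arf\'as's layered argument for $\Forb(P_4)$. The crucial structural observation is that whenever $u\to w\to v$ is an induced directed $P_3$ (so $u$ and $v$ are non-adjacent), every in-neighbour $d$ of $v$ distinct from $w$ must be adjacent to $u$ or to $w$; otherwise $u\to w\to v\leftarrow d$ would be an induced $P^+(2,1)$. This forces the in-neighbourhood of each vertex $v$ to attach in a controlled way to the vertices reaching $v$ by induced directed $P_3$'s, and combined with an induction on $\omega(D)$ applied to the subdigraphs induced on $N^+(v)$, $N^-(v)$ and $V(D)\setminus(N(v)\cup\{v\})$ (each of clique number at most $\omega(D)-1$ when $v$ lies in a maximum clique), this should yield a $\chi$-bounding function. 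A more uniform alternative would be to adapt the Kierstead--R\"odl approach from~\cite{KiRo96}: first prove that $\Forb(\{P^+(2,1),K_{n,n}\})$ is $\chi$-bounded for every $n$, then eliminate large balanced complete bipartite subgraphs separately.

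For $\Forb(\{P^+(3),P^+(1,1,1)\})$, the key observation is that every induced $P_4$ in the underlying graph of any $D$ in this class must be oriented as $P^+(2,1)$ or $P^-(2,1)$, so it has a distinguished internal sink or source. I would try to partition $V(D)$ according to the local sink/source behaviour of vertices in their induced $P_4$'s, argue that each part induces a subdigraph whose underlying graph is $P_4$-free, and conclude using the perfection of cographs.

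The principal obstacle lies in the first case. Forbidding a single orientation of $P_4$ is substantially weaker than forbidding $P_4$ itself, and both the Kierstead--Trotter and shift-graph constructions show that any proof must genuinely use the direction of the ``tail arc'' $d\to c$ of $P^+(2,1)$ at every step; simply mimicking Gy\'arf\'as's undirected proof on the underlying graph is bound to fail. Finding a stratification or induction scheme that respects this asymmetry, while producing a $\chi$-bounding function of reasonable magnitude, is where the real work lies.
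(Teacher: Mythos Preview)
The statement you are attempting to prove is stated in the paper as an \emph{open conjecture}; the paper does not claim a proof. Your reduction to the two inclusion-minimal cases $\{P^+(2,1)\}$ and $\{P^+(3),P^+(1,1,1)\}$ is correct and matches the paper's own analysis, but only the second of these is settled in the paper, and your proposal does not close the first either.

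For $\Forb(\{P^+(3),P^+(1,1,1)\})$, the paper's argument is different from and much simpler than your partition sketch: any odd hole of length at least five must contain two arcs at distance one oriented the same way, and together with the middle arc this is an induced $P^+(3)$ or $P^+(1,1,1)$ (Proposition~\ref{prop:p3-p21}); hence such oriented graphs are odd-hole-free, and Scott--Seymour's theorem on odd-hole-free graphs gives $\chi$-boundedness directly. Your idea of partitioning into parts whose underlying graphs are $P_4$-free is not obviously realizable with boundedly many parts, and you give no mechanism for producing such a partition.

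For $\Forb(P^+(2,1))$, you explicitly acknowledge that ``the real work lies'' in finding a stratification respecting the arc directions, and you do not supply one. This is precisely the open core of the conjecture. The paper itself only establishes the first two rungs of the ladder, namely $\chi(\Forb(\vec{C}_3,TT_3,P^+(2,1)))=3$ (Theorem~\ref{thm:p21}) and $\chi(\Forb(TT_3,P^+(2,1)))=4$ (Theorem~\ref{TT3P+(2,1)}), and records that Kierstead--R\"odl have the $\vec{C}_3$-free case; the general $\chi$-boundedness of $\Forb(P^+(2,1))$ remains open. Your Gy\'arf\'as-style layering idea runs into exactly the obstacle you name: the asymmetry of $P^+(2,1)$ means the undirected BFS argument does not transfer, and neither your neighbourhood observation nor the Kierstead--R\"odl $K_{n,n}$ reduction is known to suffice. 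In short, your proposal is a correct reduction followed by an honest admission that the decisive case is unproved; that matches the state of the art as presented in the paper.
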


\noindent We prove this conjecture in the case when $P^+(3)\in {\cal
P}$: in Corollary~\ref{cor:2P4}, we show that the classes
$\Forb(P^+(3), P^+(2,1))$, $\Forb(P^+(3), P^-(2,1))$, and
$\Forb(P^+(3), P^+(1,1,1))$ are $\chi$-bounded.  Hence, it remains
to prove Conjecture~\ref{conj:P4} for ${\cal P}\subseteq
\Forb(P^+(2,1), P^-(2,1), P^+(1,1,1))$.  Several results in this
direction have been established.  Kierstead (see~\cite{Sc85}) proved
that every $(\vec{C_3}, P^+(2,1), P^-(2,1))$-free oriented graph $D$
can be coloured with $2^{\omega(D)}-1$ colours, so in particular
$\Forb(\vec{C_3},P^+(2,1), P^-(2,1))$ is $\chi$-bounded.
Chv\'atal~\cite{Ch84} proved that acyclic $P^+(2,1)$-free oriented
graphs are perfect, so $\Forb(\vec{\cal C}, P^+(2,1))$ is
$\chi$-bounded.  Kierstead and R\"odl~\cite{KiRo96} generalized those
two results by proving (but with a larger bounding function) that
$\Forb(\vec{C_3}, P^+(2,1))$ is $\chi$-bounded.  In
Subsection~\ref{subsec:P21}, we make the first two steps towards the
$\chi$-boundedness of $\Forb(P^+(2,1))$.  We prove
$\chi(\Forb(\vec{C_3}, TT_3, P^+(2,1))=3$ and $\chi(\Forb(TT_3,
P^+(2,1))=4$.



\section{Definitions, notations and useful facts} \label{secDef}

Let $D$ be a digraph.  If $uv$ is an arc we say that $u$ {\it
dominates} $v$ and write $u\ra v$.  Let $P=(x_1,x_2,\cdots, x_n)$ be
an oriented path.  We say that $P$ is an {\it $(x_1,x_n)$-path}.  The
vertex $x_1$ is the {\it initial vertex} of $P$ and $x_n$ its {\it
terminal vertex}.  Then we say that $P$ is a {\it directed path} or
simply a {\it dipath}, if $x_i\ra x_{i+1}$ for all $1\leq i\leq n-1$.
An oriented cycle $C=(x_1,x_2 \dots x_n,x_1)$ is a {\it directed
cycle}, if $x_i\ra x_{i+1}$ for all $1\leq i\leq n$, where
$x_{n+1}=x_1$.  The directed cycle of length $k$ is denoted by
$\vec{C}_k$.

The digraph $D$ is {\it connected} 
 if its underlying graph is connected. 
It is {\it strongly connected}, or {\it strong}, if for any two
vertices $u,v$, there is a $(u,v)$-dipath in $D$.  A strong component
$U$ is {\it initial} if all the arcs with head in $U$ have their tail
in $U$.  We denote by $\vec{\cal C}$ the class of directed cycles and
by $\cal S$ the class of strong oriented graphs.

The {\it chromatic number} (resp.  {\it clique number}) of a digraph,
denoted by $\chi(D)$ (resp.  $\omega(D)$), is the chromatic number
(resp.  clique number) of its underlying graph.  The {\it chromatic
number} of a class ${\cal D}$ of digraphs, denoted $\chi({\cal D})$,
is the smallest $k$ such that $\chi(D)\leq k$ for all $D\in {\cal D}$,
or $+\infty$ if no such $k$ exists.  If $\chi({\cal D}) \neq +\infty$,
we say that ${\cal D}$ {\it has bounded chromatic number}.  Similarly
to undirected graphs, a class of oriented graphs ${\cal D}$ is said to
be \emph{$\chi$-bounded} if there is a function $f$ such that
$\chi(D)\leq f(\omega(D))$ for every $D\in {\cal D}$ (such a function
$f$ is called a \emph{$\chi$-bounding function} for the class).

Let $F$ be a digraph and let ${\cal F}$ be a class of digraphs.  A
digraph is {\it $F$-free} (resp.  {\it ${\cal F}$-free}) if it does
not contain $F$ (resp.  any element of ${\cal F}$) as an induced
subgraph.  In this paper, we study for which classes ${\cal F}$ of
digraphs, the class of ${\cal F}$-free digraphs is $\chi$-bounded.
Observe that such an ${\cal F}$ must contain a complete (symmetric)
digraph, that is a digraph in which any two distinct vertices are
joined by two arcs in opposite direction.  Indeed, $\vec{K}_k$, the
complete digraph on $k$ vertices, has chromatic number $k$, and every
induced subdigraphs of a complete digraph is a complete digraph.

In this paper, {\bf we consider oriented graphs}, which are
$\vec{K}_2$-free digraphs.  Alternately, an {\it oriented graph} may
be defined as the orientation of a graph.  Note that an ${\cal
F}$-free oriented graph is an $({\cal F}, \vec{K}_2)$-free digraph.
We denote by $\Forb({\cal F})$ the class of ${\cal F}$-free oriented
graphs.  We are interested in determining for which class ${\cal F}$
of oriented graphs, the class $\Forb({\cal F})$ is $\chi$-bounded.  To
keep notation simple, we abbreviate $\Forb(\{F_1, \dots , F_p\})$ as
$\Forb(F_1, \dots , F_p)$, $\Forb({\cal F}_1\cup \cdots \cup {\cal
F}_p)$ in $\Forb({\cal F}_1, \dots , {\cal F}_p)$, $\Forb(\{F\}\cup
{\cal F}\})$ in $\Forb(F, {\cal F})$, and so on \dots

Let us denote by $\Or(G)$ the set of all possible orientations of a
graph $G$, and by $\Or({\cal G})$ the set of all possible orientations
of a graph in the class ${\cal G}$.  By definition a class of oriented
graphs ${\cal D}$ is $\chi$-bounded if and only if for every positive
integer $n$, ${\cal D}\cap \Forb(\Or(K_n))$ has bounded chromatic
number.  The Gy\'arf\'as-Sumner conjecture
(Conjecture~\ref{conj:chi-bounded}) can be restated as follows~:
$\Forb(\Or(H))$ is $\chi$-bounded if and only if $H$ is a forest.
However $\Forb({\cal F})$ could be $\chi$-bounded for some strict
subset ${\cal F}$ of $\Or(H)$.  More generally, for any result proving
that a class $\Forb({\cal G})$ is $\chi$-bounded, a natural question
is to ask for which subsets ${\cal F}$ of $\Or({\cal G})$, the class
$\Forb({\cal F})$ is also $\chi$-bounded.  For example, the result
mentioned in the introduction stating that $\Forb(\{T,K_{n,n}\})$ is
$\chi$-bounded for every tree $T$ and every integer $n$ has been
generalized to orientations of $K_{n,n}$ and oriented trees :
Kierstead and R\"odl~\cite{KiRo96} proved that for any positive
integer $n$ and oriented tree $T$, the class $\Forb(DK_{n,n}, T)$ is
$\chi$-bounded where $DK_{n,n}$ is the orientation of the complete
bipartite graph $K_{n,n}$ where all edges are oriented from a part to
the other.



A {\it tournament} is an orientation of a complete graph.  The unique
orientation of $K_n$ (the complete graph on $n$ vertices) with no
directed cycles is called the {\it transitive tournament} of order $n$
and is denoted $TT_n$.  Let $\trans(D)$ be the order of a largest
transitive tournament in $D$.  Observe that a class of oriented graphs
is $\chi$-bounded if and only if there is a function $g$ such that
$\chi(D)\leq g(\trans(D))$ for every $D\in {\cal D}$ thanks to the
following result due to Erd\H{o}s and Moser~\cite{ErMo64}.

\begin{theorem}[Erd\H{o}s and Moser~\cite{ErMo64}]\label{thm:erdos-moser}
For every tournament $T$, $\trans(T)\geq 1+ \lfloor \log |V(T)|
\rfloor$.
\end{theorem}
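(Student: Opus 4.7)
The plan is to prove this by induction on $n = |V(T)|$, using the standard ``pick a vertex and recurse into the majority neighborhood'' argument.

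The base case $n=1$ is immediate since a single vertex is itself a transitive tournament of order $1 = 1+\lfloor \log 1\rfloor$. For the inductive step, I would pick an arbitrary vertex $v$ of $T$. Since $d^+(v) + d^-(v) = n-1$, one of the two neighborhoods, say $N^+(v)$ (the other case being symmetric), satisfies $|N^+(v)| \geq \lceil (n-1)/2\rceil \geq \lfloor n/2 \rfloor$. Let $T'$ be the subtournament induced by $N^+(v)$. By the induction hypothesis applied to $T'$, there exists a transitive subtournament $TT$ of $T'$ of order at least $1+\lfloor \log \lfloor n/2\rfloor \rfloor$.

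The key observation is that $v$ dominates every vertex of $N^+(v)$, hence every vertex of $TT$; so prepending $v$ to the linear order of $TT$ yields a transitive subtournament of $T$ of order $2 + \lfloor \log \lfloor n/2 \rfloor \rfloor$. It remains to verify the arithmetic inequality
\[
2 + \left\lfloor \log \lfloor n/2 \rfloor \right\rfloor \;\geq\; 1 + \lfloor \log n \rfloor.
\]
Writing $n = 2^k + r$ with $0 \leq r < 2^k$ (so that $\lfloor \log n \rfloor = k$), one checks that $\lfloor n/2 \rfloor \geq 2^{k-1}$ for $n \geq 2$, hence $\lfloor \log \lfloor n/2 \rfloor \rfloor \geq k-1$, which gives exactly the required inequality.

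I do not expect any real obstacle here: the argument is a textbook induction, and the only delicate point is confirming that the floor functions behave correctly, which is a direct case check. The crucial structural idea, that $v$ extends any transitive subtournament found inside a one-sided neighborhood, is what makes the recursion halve $n$ at each step and produces the logarithmic bound.
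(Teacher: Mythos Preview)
Your argument is correct and is the standard textbook proof of this classical result. Note that the paper does not actually give its own proof of this theorem: it is stated with a citation to Erd\H{o}s and Moser~\cite{ErMo64} and used as a known tool, so there is no ``paper's proof'' to compare against.
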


By the above observation, ${\cal D}$ is $\chi$-bounded if and only if
for every positive integer $n$, ${\cal D}\cap \Forb(TT_n)$ has bounded
chromatic number.  Also let us remark that any orientation of $K_4$
contains a $TT_3$, so ${\cal D}\cap \Forb(\Or(K_3)) \subset {\cal
D}\cap \Forb(TT_3) \subset {\cal D}\cap \Forb(\Or(K_4))$.  When we are
able to prove the result for $\Or(K_3)$-free oriented graphs and want
to extend the result to $\Or(K_4)$-free ones, an intermediate step is
therefore to prove the $TT_3$-free case.  We will do this in some
cases in Section $3$ and $4$.

\medskip

For a set or subgraph $S$ of $D$, we denote by $\Reach_D^+(S)$ (resp.
$\Reach_D^-(S)$, the set of vertices $x$ such that is a directed
out-path (reap.  directed in-path) with initial vertex in $S$ and
terminal vertex $x$.

Let $D$ be a digraph on $n$ vertices $v_1, \dots , v_n$.  A digraph
$D'$ is an {\it extension} of $D$ if $V(D')$ can be partitioned into
$(V_1, \dots , V_n)$ such that $A(D')=\{xy \mid x\in V_i, y\in V_j
\mbox{ and } v_iv_j\in A(D)\}$.  Observe that some $V_i$ may be empty.
In particular, induced subdigraphs of $D$ are extensions of $D$.

To finish this section let us state easy results that we will often
use in the proofs.  Recall that a $k$-critical graph is a graph of
chromatic number $k$ of which any strict subgraph has chromatic number
at most $k-1$.  For a digraph $D$, $\delta(D)$ denotes the minimum
degree of the underlying unoriented graph, and $\Delta^+(D)$ (resp.
$\Delta^-(D)$) denote the maximum out-degree (resp.~in-degree).

\begin{proposition}\label{prop:dkcrit}
If $D$ is a $k$-critical digraph, then $D$ is connected, $\delta(D)
\geq k-1$ and $\Delta^+(D),\Delta^-(D)\ge (k-1)/2$.
\end{proposition}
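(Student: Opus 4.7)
The proposition packages three facts about $k$-critical digraphs, each of which follows from a standard argument adapted from the undirected setting, together with the one observation that $D$ being an oriented graph forces $d^+(v)+d^-(v)=d(v)$ for every vertex $v$. I would handle the three parts in order.

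For connectivity I would argue by contradiction. If $D$ were the disjoint union of digraphs $D_1,\dots,D_s$ with $s\ge 2$, then $\chi(D)=\max_i\chi(D_i)$ on underlying graphs, so some $D_i$ satisfies $\chi(D_i)=k$. Since $D_i$ is a strict subdigraph of $D$, this contradicts $k$-criticality.

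For the minimum-degree bound I would use the classical Dirac-type argument. Suppose for contradiction that some vertex $v$ has $d_D(v)\le k-2$ in the underlying graph. By criticality $D-v$ admits a proper $(k-1)$-colouring. Since $v$ has at most $k-2$ neighbours in the underlying graph of $D$, at least one of the $k-1$ colours is not used on the neighbourhood of $v$, so the colouring extends to a proper $(k-1)$-colouring of $D$, contradicting $\chi(D)=k$. Hence $\delta(D)\ge k-1$.

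For the out- and in-degree bounds I would exploit that $D$, being an oriented graph, has no $2$-cycles, so $d^+(v)+d^-(v)=d(v)\ge k-1$ at every vertex, by the previous step. Summing over all $n:=|V(D)|$ vertices,
\[
\sum_{v\in V(D)} d^+(v)=\sum_{v\in V(D)} d^-(v)=|A(D)|,\qquad 2|A(D)|\ge n(k-1),
\]
so the average out-degree and the average in-degree are each at least $(k-1)/2$. In particular some vertex realises an out-degree at least $(k-1)/2$ and some vertex realises an in-degree at least $(k-1)/2$, giving $\Delta^+(D),\Delta^-(D)\ge (k-1)/2$. None of the three arguments contains a genuine obstacle; the only point worth flagging is the one-line use of the oriented-graph hypothesis to split $d(v)$ into $d^+(v)+d^-(v)$ with no double-counting.
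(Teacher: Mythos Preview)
Your proof is correct. The paper actually states Proposition~\ref{prop:dkcrit} without proof, presenting it as an ``easy result''; your arguments for connectivity, the minimum-degree bound, and the averaging argument for $\Delta^+(D),\Delta^-(D)\ge (k-1)/2$ are exactly the standard ones and go through without issue.
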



\begin{theorem}[Brooks~\cite{Bro41}]\label{thm:brooks}
Let $G$ be a connected graph.  If $G$ is not a complete graph or an
odd cycle, then $\chi(G)\leq \Delta(G)$.
\end{theorem}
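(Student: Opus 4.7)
The plan is to argue by contradiction, taking $G$ to be a counterexample with the fewest vertices. By minimality, every proper induced subgraph of $G$ is $\Delta(G)$-colourable, so $G$ is $(\Delta(G)+1)$-critical. First I would invoke the undirected counterpart of Proposition~\ref{prop:dkcrit} to conclude that $\delta(G)\geq \Delta(G)$; combined with the trivial upper bound $\delta(G)\leq \Delta(G)$, this forces $G$ to be $\Delta$-regular, where $\Delta=\Delta(G)$.

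Next I would dispose of the small cases. If $\Delta\leq 1$ there is nothing to prove, and if $\Delta=2$ then $G$ is a connected $2$-regular graph, hence a cycle; since $G$ is assumed to be neither complete nor an odd cycle, it is an even cycle, which is $2$-colourable. So from now on I may assume $\Delta\geq 3$.

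The core step is the following greedy-colouring recipe of Lovász. Suppose one can produce three vertices $u,v,w$ with $u,w\in N(v)$, $uw\notin E(G)$, and $G-\{u,w\}$ connected. Colour both $u$ and $w$ with colour $1$, run a BFS from $v$ in the connected graph $G-\{u,w\}$, and colour the remaining vertices in reverse BFS order, finishing with $v$. Each vertex $x\neq v$ has at least one neighbour strictly farther from $v$ in the BFS tree (or an uncoloured neighbour closer to $v$), so when $x$ is processed at most $\Delta-1$ of its neighbours are already coloured, leaving a free colour in $\{1,\ldots,\Delta\}$. When $v$ itself is processed, its neighbours $u$ and $w$ share the colour $1$, so the palette $\{1,\ldots,\Delta\}$ suffices. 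This yields a proper $\Delta$-colouring of $G$, contradicting $\chi(G)=\Delta+1$.

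The hard part is producing the triple $(u,v,w)$. I would first argue that $G$ is $2$-connected, since a cut vertex in a $(\Delta{+}1)$-critical graph with $\Delta\geq 3$ leads, via a standard block argument, to a smaller critical graph with the same chromatic number. Then I would split into cases according to whether $G$ is $3$-connected. If it is and is not complete, pick any two vertices $u,w$ at distance $2$; their common neighbour plays the role of $v$, and removing the two non-adjacent vertices $u,w$ from a $3$-connected graph preserves connectivity. If instead $G$ has a $2$-vertex cut $\{a,b\}$, pick $u$ and $w$ in two different components of $G-\{a,b\}$, each adjacent to $a$ say, chosen so that the subgraphs remain connected after their removal; then $v=a$ works, using $\Delta$-regularity and $\Delta\geq 3$ to ensure $uw\notin E(G)$. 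This connectivity analysis, and in particular verifying that the chosen $u,w$ indeed leave $G-\{u,w\}$ connected, is where all of the real combinatorial work lies.
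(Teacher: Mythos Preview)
The paper does not prove Brooks' Theorem at all: it is stated with a citation to~\cite{Bro41} and invoked later as a black box (in the proof of Proposition~\ref{prop:utile}(ii)). There is therefore no ``paper's own proof'' to compare against.

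That said, your sketch is the standard Lov\'asz argument and is essentially correct. A couple of small remarks. First, for $2$-connectivity you don't need to speak of ``a smaller critical graph'': if $x$ is a cut vertex, write $G=G_1\cup G_2$ with $V(G_1)\cap V(G_2)=\{x\}$; by $(\Delta+1)$-criticality each proper subgraph $G_i$ is $\Delta$-colourable, and after permuting colours so that $x$ receives the same colour in both, the colourings glue, contradicting $\chi(G)=\Delta+1$. Second, in the $2$-cut case your description is a little loose: picking $u,w$ adjacent to $a$ in different components of $G-\{a,b\}$ gives $uw\notin E(G)$ for free, but ``chosen so that the subgraphs remain connected'' hides the actual content. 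One clean way is to note that each component $C$ of $G-\{a,b\}$ together with $\{a,b\}$ is $2$-connected (since $G$ is), so removing a single vertex of $C$ keeps $C\cup\{a,b\}$ connected; doing this in two different components and using that $a$ (with $\deg a=\Delta\ge 3$) still has a neighbour left gives connectivity of $G-\{u,w\}$. You already flag this as the place where the real work is, which is fair.
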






\section{Forbidding Oriented Stars}\label{secStars}

In this section, we study the $\chi$-boundedness of $\Forb(S)$, for
$S$ an oriented star.

\subsection{Forbidding $S_{0,\ell}$ or $S_{1,1}$}\label{subsec:connu}
In \cite{KiRo96}, the authors state that the results in this section
were already known, but since they give no reference, and the proofs
are short, we include them here.  As written in the introduction, the
fact that $\Forb(K_{1,t})$ is $\chi$-bounded follows directly from the
following celebrated theorem due to Ramsey.

\begin{theorem}[Ramsey~\cite{Ram30}]\label{thm:ramsey}
Given any positive integers $s$ and $t$, there exists a smallest
integer $\mathbf{r}(s,t)$ such that every graph on at least
$\mathbf{r}(s,t)$ vertices contains either a clique of $s$ vertices or
a stable set of $t$ vertices.
\end{theorem}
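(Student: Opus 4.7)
The plan is to prove existence of $\mathbf{r}(s,t)$ by double induction on $s+t$, establishing the classical recursive bound
\[
\mathbf{r}(s,t) \leq \mathbf{r}(s-1,t) + \mathbf{r}(s,t-1).
\]
The base cases are immediate: $\mathbf{r}(1,t) = \mathbf{r}(s,1) = 1$, since a single vertex is at once a clique of size one and a stable set of size one; moreover $\mathbf{r}(2,t) \leq t$ and $\mathbf{r}(s,2) \leq s$, because a graph on $t$ vertices either has no edge (giving a stable set of size $t$) or has at least one edge (giving a clique of size $2$), and symmetrically.

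For the inductive step, suppose $\mathbf{r}(s-1,t)$ and $\mathbf{r}(s,t-1)$ exist, and let $G$ be an arbitrary graph on $N := \mathbf{r}(s-1,t) + \mathbf{r}(s,t-1)$ vertices. Fix any vertex $v \in V(G)$, and partition $V(G) \setminus \{v\}$ into the set $A$ of neighbors of $v$ and the set $B$ of non-neighbors of $v$. Since $|A| + |B| = N-1$, pigeonhole yields $|A| \geq \mathbf{r}(s-1,t)$ or $|B| \geq \mathbf{r}(s,t-1)$. In the first case, applying the induction hypothesis to $G[A]$ returns either a stable set of size $t$ (we are done) or a clique of size $s-1$; in the latter subcase, adjoining $v$ — which is adjacent to every vertex of $A$ — produces a clique of size $s$. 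The second case is dual: $G[B]$ contains either a clique of size $s$ or a stable set of size $t-1$ which, together with $v$, forms a stable set of size $t$.

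I do not foresee any real obstacle: the whole argument hinges on the pigeonhole split around a single vertex, and the induction closes cleanly. The recursion above unfolds to the explicit bound $\mathbf{r}(s,t) \leq \binom{s+t-2}{s-1}$, which is far from tight (the determination of $\mathbf{r}(s,t)$ for small $s,t$ is a notoriously hard problem), but the statement only asserts the existence of the smallest such integer, so this suffices.
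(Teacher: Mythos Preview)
Your argument is correct and is the standard Erd\H{o}s--Szekeres inductive proof of Ramsey's theorem. Note, however, that the paper does not supply its own proof of this statement: Theorem~\ref{thm:ramsey} is quoted as a classical result with a citation to Ramsey~\cite{Ram30} and is used as a black box, so there is no paper proof to compare against.
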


Similarly, it can be used to show that $\Forb(S_{0,\ell})$ is
$\chi$-bounded.

\begin{theorem}
Let $\ell$ be a positive integer.  If $D \in \Forb (S_{0,\ell})$, then
$\chi(D) < 2\mathbf{r}(\omega(D), \ell)$.
\end{theorem}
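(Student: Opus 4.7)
The plan is to exploit the minimum out-degree of a critical subdigraph together with Ramsey's theorem on the out-neighborhood of a high out-degree vertex.

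First, I would reduce to a $k$-critical subdigraph: let $k=\chi(D)$ and let $D'\subseteq D$ be a $k$-critical induced subdigraph. Clearly $\omega(D')\leq \omega(D)$, and $D'$ is still $S_{0,\ell}$-free since it is an induced subdigraph. By Proposition~\ref{prop:dkcrit}, $\Delta^+(D')\geq (k-1)/2$, so there exists a vertex $v\in V(D')$ with $|N^+_{D'}(v)|\geq \lceil (k-1)/2\rceil$.

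Next, I would argue by contradiction: suppose $k\geq 2\mathbf{r}(\omega(D),\ell)$. Then $|N^+_{D'}(v)|\geq \mathbf{r}(\omega(D),\ell)$, so Theorem~\ref{thm:ramsey} applied to the underlying graph of the subdigraph induced by $N^+_{D'}(v)$ yields either (i) a clique $K$ of size $\omega(D)$ inside $N^+_{D'}(v)$, or (ii) a stable set $S$ of size $\ell$ inside $N^+_{D'}(v)$.

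In case (i), every vertex of $K$ is dominated by $v$, so $K\cup\{v\}$ is a clique of size $\omega(D)+1$ in $D$, contradicting the definition of $\omega(D)$. In case (ii), the key point is that since $D$ is an oriented graph (no digons) and $v$ dominates every vertex of $S$, there is no arc from $S$ to $v$; combined with $S$ being independent in the underlying graph, this means that $\{v\}\cup S$ induces in $D$ exactly the oriented star $S_{0,\ell}$, contradicting $D\in \Forb(S_{0,\ell})$. Both cases being impossible, we conclude $k<2\mathbf{r}(\omega(D),\ell)$.

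There is no real obstacle: the whole argument is a one-step Ramsey application on the out-neighborhood of a critically high-out-degree vertex. The only point requiring a touch of care is verifying that the stable set together with $v$ induces $S_{0,\ell}$ and not merely contains it as a (non-induced) subdigraph, which is immediate from the oriented-graph assumption.
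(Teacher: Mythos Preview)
Your proof is correct and follows essentially the same argument as the paper: pass to a critical subdigraph, find a vertex of large out-degree via Proposition~\ref{prop:dkcrit}, and apply Ramsey's theorem to its out-neighbourhood. The only difference is cosmetic: you make the passage to a $k$-critical subdigraph and the ceiling computation explicit, whereas the paper leaves both implicit.
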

\begin{proof}
Let $D$ be an $S_{0,\ell}$-free oriented graph and let $s=\omega(D)$.  If $\chi(D)\geq
2\mathbf{r}(s, \ell)$, then by Proposition~\ref{prop:dkcrit}, $D$ has
a vertex $v$ with out-degree at least $\mathbf{r}(\omega,\ell)$.  Now
the out-neighbourhood, $N^+(v)$, of $v$ contains no stable set of size $\ell$, for its union with $v$
would induce an $S_{0,\ell}$.  Therefore, by Theorem~\ref{thm:ramsey},
$N^+(v)$ contains a clique of $s$ vertices, which forms a clique of
size $s+1$ with $v$, contradicting that $\omega(D)=s$.
\end{proof}

Note that using Ramsey's Theorem is the only known way to prove that
$\Forb(K_{1,t})$ and $\Forb(S_{0,\ell})$ are $\chi$-bounded.  The
resulting bounding functions are very high and certainly very far from
being tight.

\begin{proposition}\label{prop:utile}
For the case of out-stars we have,

(i) $\chi(\Forb (TT_3,S_{0,2}))=3$.

(ii) For $\ell\geq 3$, $\chi(\Forb (TT_3,S_{0,\ell}))\leq 2\ell -2$.
\end{proposition}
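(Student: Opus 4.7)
The plan is to extract from the two forbidden subdigraphs the structural bound $\Delta^+(D)\leq \ell-1$, and then convert it into the claimed chromatic bound via a critical-subdigraph edge count combined with Brooks' theorem.

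First I would record a single useful observation: in any $TT_3$-free oriented graph $D$, the out-neighbourhood $N^+(v)$ of every vertex $v$ is a stable set of the underlying graph, because an arc (in either direction) between two out-neighbours of $v$ would close a transitive triangle with $v$. If moreover $D$ is $S_{0,\ell}$-free, then $|N^+(v)|\leq \ell-1$ for every $v$: otherwise $v$ together with any $\ell$ of its pairwise non-adjacent out-neighbours would induce $S_{0,\ell}$ (no reverse arc to $v$ exists since $D$ is oriented). Hence $\Delta^+(D)\leq \ell-1$, a bound inherited by every induced subdigraph. For (i) this already suffices: with $\ell=2$, every induced subdigraph satisfies $|E|\leq|V|$, so the underlying graph is $2$-degenerate and $\chi(D)\leq 3$. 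The value $3$ is attained by $\vec{C}_3$, which is trivially $TT_3$-free and $S_{0,2}$-free, yielding $\chi(\Forb(TT_3,S_{0,2}))=3$.

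For (ii), suppose for contradiction that some $D\in\Forb(TT_3,S_{0,\ell})$ satisfies $\chi(D)\geq 2\ell-1$, and let $D'$ be an induced subdigraph of $D$ of minimum order with $\chi(D')=\chi(D)$; then $D'$ is vertex-critical, hence connected, and satisfies $\delta(D')\geq \chi(D')-1\geq 2\ell-2$ (Proposition~\ref{prop:dkcrit}). Combining with $\Delta^+(D')\leq \ell-1$ gives
\[
(2\ell-2)\,|V(D')| \;\leq\; \sum_{v}\deg(v) \;=\; 2|E(D')| \;=\; 2\sum_{v} d^+(v) \;\leq\; 2(\ell-1)\,|V(D')|,
\]
forcing equality throughout, so $D'$ is $(2\ell-2)$-regular. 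Now apply Brooks' theorem (Theorem~\ref{thm:brooks}): since $\ell\geq 3$, $D'$ cannot be an odd cycle (whose degree would have to be $2=2\ell-2$, i.e.\ $\ell=2$), and if $D'$ were complete then it would be a tournament on $2\ell-1\geq 5$ vertices and would contain $TT_3$ by Theorem~\ref{thm:erdos-moser}, contradicting $D\in\Forb(TT_3)$. Thus Brooks yields $\chi(D')\leq \Delta(D')=2\ell-2$, contradicting $\chi(D')\geq 2\ell-1$.

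The single delicate step is this last one: a plain degeneracy or edge-count argument yields only the weaker bound $\chi\leq 2\ell-1$, and peeling off the final unit down to $2\ell-2$ requires precisely the Brooks-plus-Erd\H{o}s--Moser maneuver ruling out the extremal $(2\ell-2)$-regular case produced by the equality conditions.
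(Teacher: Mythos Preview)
Your proof is correct and follows essentially the same approach as the paper's: both arguments use the observation that $TT_3$-freeness makes $N^+(v)$ stable (hence $\Delta^+\leq\ell-1$), pass to a critical subdigraph to force $(2\ell-2)$-regularity via the edge count, and finish with Brooks' theorem after ruling out the complete case because a tournament on $2\ell-1\geq 4$ vertices contains $TT_3$. The only cosmetic differences are that the paper phrases part (i) as ``converse of a functional digraph'' rather than $2$-degeneracy, and uses directed odd cycles rather than $\vec{C}_3$ for the tightness example.
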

\begin{proof}
(i) A digraph in $\Forb (TT_3,S_{0,2})$ has no vertex of out-degree at
least $2$.  Hence it is the converse of a functional digraph, and its
easy to see that the chromatic number is at most $3$, so $\chi(\Forb (TT_3,S_{0,2}))\leq
3$.

The directed odd cycles are in $\Forb (TT_3,S_{0,2})$ and have
chromatic number $3$.  This implies that $\chi(\Forb
(TT_3,S_{0,2}))=3$.

(ii) It suffices to prove that every critical digraph $D$ in $\Forb
(TT_3,S_{0,\ell})$ has chromatic number at most $2\ell-2$.  Observe
that for every vertex $v$, $N^+(v)$ induces a stable set because $D$
is $TT_3$-free.  Thus $d^+(v)\leq \ell-1$ since $D$ is
$S_{0,\ell}$-free.  Hence $|A(D)|\leq (\ell -1)|V(D)|$.

If $D$ contains a vertex of degree less than $2\ell-2$, then by
Proposition~\ref{prop:dkcrit}, $\chi(D)\leq 2\ell-2$.  If not, then
every vertex has degree exactly $2\ell-2$.  Moreover, $D$ is not a
tournament of order $2\ell-1$, because every such tournament contains
a $TT_3$.  Hence by Brooks' Theorem (Theorem~\ref{thm:brooks}),
$\chi(D)\leq 2\ell -2$.
\end{proof}

The case of $S_{1,1}$-free oriented graphs is also well known, as
these are perfect graphs, and therefore $\chi$-bounded.
$S_{1,1}$-free orientations of graphs are known as {\it quasi-transitive
oriented graphs}, and it is a result of Ghouila-Houri (\cite{GH62})
that a graph has a quasi-transitive orientation if and only if it has
a transitive orientation, that is an orientation both acyclic and
quasi-transitive (such graphs are commonly called {\it comparability
graphs}).  Note that if a graph has a transitive orientation, then
cliques correspond to directed paths; according to a classical
theorem, due independently to Gallai~\cite{Gal68}, Hasse~\cite{Has64},
Roy~\cite{Roy67}, and Vitaver~\cite{Vit62}, the chromatic number of a
digraph is at most the number of vertices of a directed path of
maximum length : this implies that comparability graphs are perfect.

%
%
%
%

Oriented graphs in $\Forb (\vec{C_3},TT_3,S_{1,1})$ and in $\Forb
(TT_3,S_{1,1})$ actually have a very simple structure as we show now.

\begin{theorem}
Every connected $(TT_3,S_{1,1})$-free oriented graph $D$ satisfies 
the following:
\begin{enumerate}
\item[(i)] 
If $D$ is $\vec{C_3}$-free, then $D$ is an extension of $TT_2$.
\item[(ii)] 
If $D$ contains a $\vec{C_3}$, then $D$ is an extension of $\vec{C_3}$.
\end{enumerate}
\end{theorem}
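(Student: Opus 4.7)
The plan is to exploit a single uniform observation: in a $(TT_3,S_{1,1})$-free oriented graph, whenever we have a directed path $u\to v\to w$, the endpoints $u$ and $w$ must be adjacent (by $S_{1,1}$-freeness), yet the transitive orientation $u\to w$ is forbidden, since it would create a $TT_3$ on $\{u,v,w\}$. So every directed $P_3$ closes into a $\vec{C_3}$ on $\{u,v,w\}$.

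For (i), the additional $\vec{C_3}$-freeness then completely forbids directed $P_3$'s, so every vertex satisfies $d^-(v)=0$ or $d^+(v)=0$. Letting $A:=\{v\,:\,d^-(v)=0\}$ and $B:=\{v\,:\,d^+(v)=0\}$, the intersection $A\cap B$ consists of isolated vertices, which are excluded by connectedness unless $D$ has a single vertex (trivially an extension of $TT_2$ with one empty part). Hence $V(D)=A\sqcup B$, both $A$ and $B$ are independent, every arc goes from $A$ to $B$, and the partition $(A,B)$ realises $D$ as an extension of $TT_2$.

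For (ii), fix a $\vec{C_3}$ on $v_1,v_2,v_3$ (arcs $v_1\to v_2\to v_3\to v_1$) and, for each other vertex $x$, analyse its adjacency pattern to $T:=\{v_1,v_2,v_3\}$. A short case check shows: adjacency to exactly one $v_i$ is ruled out by $S_{1,1}$-freeness (whichever direction the single incident arc points, it combines with a triangle arc into a directed $P_3$ forcing a second adjacency); adjacency to all three is ruled out by $TT_3$-freeness, since any orientation pattern produces a transitive triple $\{x,v_i,v_{i+1}\}$; and adjacency to exactly two consecutive $v_i$'s admits a single orientation in which $x$ plays the role of the missing third vertex. Writing $V_j$ for the set of $v_j$ together with its copies gives a candidate partition of $V(D)$; connectedness, combined with a bootstrap that replaces $T$ by a new $\vec{C_3}$ through a vertex on a shortest underlying path from a putative ``isolated-from-$T$'' vertex, then rules out any vertex adjacent to no element of $T$.

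To finish (ii), I upgrade $(V_1,V_2,V_3)$ into a complete $\vec{C_3}$-blow-up. For $a\in V_1$ and $b\in V_2$, the directed path $b\to v_3\to a$ (valid since $b$ mimics $v_2$ and $a$ mimics $v_1$) makes $a,b$ adjacent by $S_{1,1}$-freeness, and $TT_3$-freeness on $\{b,v_3,a\}$ forces the arc to be $a\to b$; symmetric arguments handle $V_2\to V_3$ and $V_3\to V_1$. Independence within each $V_j$ follows from the same $TT_3$-check applied to a triple $\{a,a',v_{j+1}\}$ with $a,a'\in V_j$, so $D$ is an extension of $\vec{C_3}$. The main delicate step is the ``$x$ adjacent to no vertex of $T$'' case in (ii), where one really has to build a new triangle and reapply the whole adjacency analysis to it rather than use $T$ directly.
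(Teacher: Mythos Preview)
Your proof is correct. Part~(i) coincides with the paper's argument: once directed $P_3$'s are forbidden, every vertex is a source or a sink, and the source/sink bipartition realises the $TT_2$-extension.

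For part~(ii) your route differs from the paper's. The paper takes a \emph{maximal} $\vec{C_3}$-extension $(A,B,C)$ inside $D$ and shows, in a single sweep, that any vertex $x\notin A\cup B\cup C$ adjacent to some $a\in A$ (say $a\to x$) is forced to satisfy $x\to c$ for every $c\in C$, then $a'\to x$ for every $a'\in A$, and finally $x$ non-adjacent to all of $B$; hence $x$ can be absorbed into the appropriate class, contradicting maximality. This single absorption step handles everything at once: maximality plus connectedness makes the ``vertex with no neighbour in the current triangle'' case disappear for free.

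Your approach instead fixes one $\vec{C_3}$ and classifies every other vertex by its number of adjacencies to it, ruling out $0,1,3$ and pinning down the unique orientation for~$2$. The price is the extra bootstrap step for the $0$-adjacency case, where you walk along a shortest underlying path, swap in the penultimate vertex to form a new $\vec{C_3}$, and derive a $1$-adjacency contradiction for the next vertex down. This works, and it has the mild advantage of being entirely constructive (you exhibit the three classes explicitly relative to a fixed triangle), but it is longer and reproves by hand what the paper's maximality device gives in one line.
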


\begin{proof}
All vertices of an oriented graph $D \in \Forb
(\vec{C_3},TT_3,S_{1,1})$ are clearly either a source or a sink, which
implies $(i)$.

Now let $D\in \Forb (TT_3,S_{1,1})$.  If $D$ does contain no
$\vec{C_3}$, then it is an extension of $TT_2$ (and thus of
$\vec{C_3}$) and we are done.  So we may assume that $D$ contains a
$\vec{C_3}$.  Let $(A,B,C)$ be the partition of a maximal extension of
$\vec{C_3}$ in $D$ such that none of the sets $A$, $B$, $C$ is empty,
where all the arcs are from $A$ to $B$, from $B$ to $C$ and from $C$
to $A$.  If $A\cup B\cup C=V(D)$ we are done, so we may assume without
loss of generality that there exists adjacent vertices $a \in A$ and
$x \in D-(A \cup B \cup C)$.  By directional duality we may assume
that $a\ra x$.  For all $c \in C$, $c$ is adjacent to $x$ for
otherwise $\{c,a,x\}$ induces $S_{1,1}$ and $x\ra c$ for otherwise
$\{a,x,c\}$ induces a $TT_3$.

Let $c \in C$.  For all $a' \in A$, $a'$ and $x$ are adjacent for
otherwise $\{x,c,a'\}$ induces a $S_{1,1}$, and $a'\ra x$ for
otherwise $\{a',x,c\}$ induces a $TT_3$.  Moreover $x$ is not adjacent
to any vertex $b \in B$, otherwise $\{a,x,b\}$ induces a $TT_3$.
Hence $D\langle A\cup B \cup C \cup \{x\}\rangle$ is an extension of
$\vec{C_3}$ with partition $(A,B,C \cup \{x\})$ , contradicting
the maximality of $(A,B,C)$.
\end{proof}

\begin{corollary}
$\chi(\Forb(\vec{C}_3, TT_3, S_{1,1})) =2$ and $\chi(\Forb(TT_3,
S_{1,1})) =3$.
\end{corollary}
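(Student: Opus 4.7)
The plan is to derive both equalities directly from the structural theorem that precedes the corollary. Observe first that the chromatic number of any graph equals the maximum chromatic number over its connected components, so for the upper bounds it suffices to bound $\chi$ on connected oriented graphs in each class. The theorem already describes all connected graphs in both classes, so the corollary is essentially bookkeeping: translate ``extension of $TT_2$'' and ``extension of $\vec C_3$'' into a statement about the underlying graph, read off the chromatic number, and exhibit matching lower-bound examples.

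For $\chi(\Forb(\vec{C}_3,TT_3,S_{1,1}))$, I would take a connected oriented graph $D$ in this class and apply part (i) of the theorem: $D$ is an extension of $TT_2$, with vertex partition $(V_1,V_2)$ such that all arcs go from $V_1$ to $V_2$. Since no arcs lie inside $V_1$ or $V_2$, the underlying graph is a subgraph of the complete bipartite graph on $(V_1,V_2)$, hence $2$-colourable (and trivially $1$-colourable if one of the parts is empty). This gives $\chi\le 2$. For the matching lower bound, the single arc $TT_2$ belongs to $\Forb(\vec{C}_3,TT_3,S_{1,1})$ and has chromatic number $2$.

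For $\chi(\Forb(TT_3,S_{1,1}))$, I would again pass to a connected $D$ and apply part (ii) of the theorem. Either $D$ has no $\vec{C_3}$, in which case the previous paragraph applies and $\chi(D)\le 2$, or $D$ is an extension of $\vec{C_3}$, with vertex partition $(V_1,V_2,V_3)$ and arcs only across parts according to the pattern of $\vec{C_3}$. Each $V_i$ is independent in the underlying graph, so the underlying graph is a subgraph of the complete tripartite graph on $(V_1,V_2,V_3)$, and is therefore $3$-colourable by assigning colour $i$ to each vertex of $V_i$. Hence $\chi(D)\le 3$, giving $\chi(\Forb(TT_3,S_{1,1}))\le 3$. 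The directed triangle $\vec{C_3}$ is $(TT_3,S_{1,1})$-free and has chromatic number $3$, which establishes the matching lower bound.

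There is no real obstacle here: the theorem does all the structural work, and the only things to verify are the two one-line translations between ``extension of $H$'' and the underlying graph (complete bipartite when $H=TT_2$, complete tripartite when $H=\vec{C_3}$), plus the choice of tight examples.
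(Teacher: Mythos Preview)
Your proposal is correct and follows exactly the approach the paper intends: the corollary is stated immediately after the structural theorem with no proof, because it is the obvious consequence you spell out. The only cosmetic slip is that in the second paragraph you write ``apply part (ii) of the theorem'' but then use both cases (i) and (ii); you mean ``apply the theorem'' and split according to whether $D$ contains a $\vec{C_3}$.
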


\subsection{Forbidding $TT_3$ and an oriented star}

The triangle-free case for stars is easy.
\begin{proposition}\label{prop:star-triangle}
Let $k$ and $\ell$ be two positive integers.  $\chi(\Forb (\vec{C_3},
TT_3,S_{k,\ell})) \leq 2k+2\ell-2$.
\end{proposition}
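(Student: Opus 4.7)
The plan is to exploit the fact that a $(\vec{C_3}, TT_3)$-free oriented graph is triangle-free (in the underlying sense), so that for every vertex $v$, the neighbourhood $N^-(v) \cup N^+(v)$ is a stable set in the underlying graph; hence $D\langle \{v\} \cup N^-(v) \cup N^+(v)\rangle$ is exactly an induced copy of $S_{d^-(v),d^+(v)}$. Combined with $S_{k,\ell}$-freeness, this forces every vertex $v$ to satisfy $d^-(v) \leq k-1$ or $d^+(v) \leq \ell - 1$.

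First, I would partition $V(D) = A \cup B$ where
\[
A = \{v \in V(D) \tq d^-(v) \leq k-1\} \quad \text{and} \quad B = V(D) \setminus A.
\]
By the observation above, every vertex of $B$ satisfies $d^+(v) \leq \ell - 1$.

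Next, I would colour $D\langle A\rangle$ and $D\langle B\rangle$ separately using disjoint sets of colours. In $D\langle A\rangle$, each vertex has in-degree at most $k-1$ (inherited from $D$), so $|A(D\langle A\rangle)| \leq (k-1)|A|$, and the same bound holds for any induced sub-oriented-graph of $D\langle A\rangle$. Therefore the underlying graph of $D\langle A\rangle$ has a vertex of degree at most $2k-2$, and the same holds for every subgraph, i.e.\ it is $(2k-2)$-degenerate. A greedy argument then yields $\chi(D\langle A\rangle) \leq 2k-1$. Symmetrically, $\chi(D\langle B\rangle) \leq 2\ell - 1$.

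Putting the two colourings together on disjoint palettes gives
\[
\chi(D) \leq \chi(D\langle A\rangle) + \chi(D\langle B\rangle) \leq (2k-1) + (2\ell - 1) = 2k + 2\ell - 2,
\]
as required. No real obstacle is expected: the only points to verify carefully are that $N^-(v)$ and $N^+(v)$ are disjoint (which holds because $D$ is an oriented graph) and that no edge within $\{v\} \cup N^-(v) \cup N^+(v)$ can exist other than the ones incident to $v$ (which is exactly the triangle-free condition guaranteed by forbidding both $\vec{C_3}$ and $TT_3$).
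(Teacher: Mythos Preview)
Your proof is correct and follows essentially the same approach as the paper's: both partition the vertex set according to whether a vertex has small in-degree or small out-degree (a dichotomy forced by triangle-freeness plus $S_{k,\ell}$-freeness), and then bound the chromatic number of each part by $2k-1$ and $2\ell-1$ respectively. The only cosmetic difference is that the paper argues by contradiction via a $(2k+2\ell-1)$-critical subgraph and invokes Proposition~\ref{prop:dkcrit}, whereas you give a direct degeneracy argument; the underlying idea is identical.
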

\begin{proof}
Let $D$ be a $(2k+2\ell-1)$-critical $(\vec{C_3}, TT_3)$-free oriented
graph.  Let $V^-$ be the set of vertices of in-degree less than $k$
and let $V^+$ be the set of vertices of out-degree less than $\ell$.
By Proposition~\ref{prop:dkcrit}, $\chi(D\langle V^-\rangle)\leq 2k-1$
and $\chi(D\langle V^+\rangle)\leq 2\ell-1$.  Consequently, $V^-\cup
V^+\neq V(D)$ for otherwise $D$ would be $(2k+2\ell-2)$-colourable.
Hence, there is a vertex $v$ with in-degree at least $k$ and
out-degree at least $\ell$.  Thus $v$ is the center of an $S_{k,\ell}$,
which is necessarily induced because $D$ is $(\vec{C_3}, TT_3)$-free.
\end{proof}

Kierstead and R\"odl~\cite{KiRo96} proved that the class $\Forb
(\vec{C_3}, S_{k,\ell})$ is $\chi$-bounded (without providing any
explicit bound).  The goal of this section is to prove the following
counterpart to that theorem.

\begin{theorem}\label{thm:TT3-star}
For every positive integers $k$ and $\ell$, the class $\Forb(TT_3,
S_{k,\ell})$ has bounded chromatic number.
\end{theorem}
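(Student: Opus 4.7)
The plan is to induct on $k$: by the directional-duality remark just before the statement, it suffices to bound $\chi(\Forb(TT_3,S_{k,k}))$ for every $k\geq 1$, and the base case $k=1$ is given by the corollary $\chi(\Forb(TT_3,S_{1,1}))=3$ proved above. For the inductive step, fix $k\geq 2$, assume a bound $C_{k-1}$ for $\Forb(TT_3,S_{k-1,k-1})$, let $D\in\Forb(TT_3,S_{k,k})$, and suppose for contradiction that $\chi(D)>g(k)$ for a large enough $g(k)=g(k,C_{k-1})$. After passing to a $\chi(D)$-critical subdigraph and invoking Proposition~\ref{prop:dkcrit}, every vertex of $D$ has in- and out-degree at least $(g(k)-1)/2$.

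Fix any vertex $v$ and write $X=N^-(v)$, $Y=N^+(v)$. Since $D$ is $TT_3$-free, both $X$ and $Y$ are stable in the underlying graph, and every arc between them goes from $Y$ to $X$ (any arc $x\to y$ with $x\in X$, $y\in Y$ would form a $TT_3$ with $v$). Hence $S_{k,k}$-freeness at $v$ is equivalent to the purely bipartite condition that no $k$-subsets $A\subseteq Y$ and $B\subseteq X$ satisfy ``no arc from $A$ to $B$''. Because $|X|,|Y|\geq (g(k)-1)/2$ are very large, a K\H{o}v\'ari--S\'os--Tur\'an-type double counting applied to the bipartite non-arc graph between $Y$ and $X$ yields a subset $Y'\subseteq Y$ of size growing with $g(k)$, each of whose vertices dominates all but at most $k-1$ elements of $X$; a further pigeonhole gives a subset $X'\subseteq X$ of size again growing with $g(k)$ that is fully dominated by a large subset of $Y'$. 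Together with $v$ this produces an extension of $\vec{C}_3$ with all three parts arbitrarily large.

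The final step is to use this large extension of $\vec{C}_3$ to extract from $D$ a subdigraph that is $(TT_3,S_{k-1,k-1})$-free but still has chromatic number greater than $C_{k-1}$, contradicting the induction hypothesis. The main obstacle is exactly this last step: intuitively, the large biclique from $Y'$ to $X'$ should allow one to ``absorb'' one in-leaf and one out-leaf of any potential $S_{k,k}$ into the extension of $\vec{C}_3$, so that an induced $S_{k-1,k-1}$ in a suitable restriction of $D$ would lift to an induced $S_{k,k}$ of $D$; turning this intuition into a clean reduction, and choosing $g(k)$ large enough so that the various pigeonhole steps go through, is where most of the technical work is expected to concentrate.
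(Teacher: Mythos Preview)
Your proposal is not a proof: it is an outline whose decisive step is left undone, and you say so yourself. The gap is the reduction from $S_{k,k}$ to $S_{k-1,k-1}$. Having a large extension of $\vec{C}_3$ sitting inside $D$ does not, by itself, produce any subdigraph of $D$ that is $(TT_3,S_{k-1,k-1})$-free and still has chromatic number above $C_{k-1}$. The ``absorb one leaf on each side'' heuristic breaks down for two reasons. First, the centre of a hypothetical induced $S_{k-1,k-1}$ in whatever restriction you take could be anywhere in $D$, not close to your $\vec{C}_3$-extension, so there is no a~priori relationship between its leaves and the vertices $X',Y'$ you built. Second, even when the centre is, say, in $X'$, extending to an induced $S_{k,k}$ requires an extra in-leaf and an extra out-leaf that are \emph{non-adjacent to all $2k-2$ existing leaves}; nothing in the construction controls those adjacencies. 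Without a concrete definition of the restricted subdigraph and a verification of both $S_{k-1,k-1}$-freeness and a lower bound on its chromatic number, the induction does not close.

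For comparison, the paper does not induct on $k$ at all. It first reduces $\chi$ to the \emph{triangle-free chromatic number} $\tri$: in any triangle-free colour class the digraph is $(\vec{C}_3,TT_3,S_{k,k})$-free and hence has bounded $\chi$ by the easy Proposition for the triangle-free case, so $\chi(D)\leq (4k-2)\,\tri(D)$. The real work is then to bound $\tri$ on $\Forb(TT_3,S_{k,k})$. That argument takes a minimal counterexample (so every vertex has large in- and out-degree, as in your first move), but instead of a single vertex it looks at a pair of large stable sets $A,B$ with a strong density condition $A\rs_t B$ and $|A|+|B|$ maximal. Repeated K\H{o}v\'ari--S\'os--Tur\'an-style extractions (your second move, iterated many times through auxiliary lemmas) combined with $S_{k,k}$-freeness eventually contradict the maximality of $|A|+|B|$. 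So your K\H{o}v\'ari--S\'os--Tur\'an instinct is the right tool, but it is used inside a maximality/extremal argument rather than to set up an induction on $k$.
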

 
As mentioned in the introduction this can be seen as the next step
towards Conjecture~\ref{conj:star} because
$\Forb(\Or(K_3),S_{k,\ell})) \subset \Forb(TT_3, S_{k,\ell}))\subset
\Forb(\Or(K_4), S_{k,\ell}))$.

As already mentionned, in order to prove Theorem~\ref{thm:TT3-star} it
suffices to prove the following one.
\begin{theorem}\label{thm:Skk-TT3}
For every positive integer $k$, $\Forb(TT_3,S_{k,k})$ has bounded
chromatic number.
\end{theorem}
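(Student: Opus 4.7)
The plan is to proceed by induction on $k$, with base case $k=1$ already given by the previous corollary $\chi(\Forb(TT_3,S_{1,1}))=3$. For the inductive step, I assume the bound $\chi(\Forb(TT_3,S_{k-1,k-1}))\le f(k-1)$ and let $D\in\Forb(TT_3,S_{k,k})$ be a critical digraph whose chromatic number I want to bound. I would first record three elementary structural consequences of $TT_3$-freeness, for every vertex $v$: (i) $N^+(v)$ and $N^-(v)$ are stable sets (an arc inside either would close a $TT_3$ with $v$); (ii) every arc between $N^+(v)$ and $N^-(v)$ is oriented from $N^+(v)$ to $N^-(v)$, closing a $\vec{C_3}$ through $v$; and (iii) for every $u\in N^+(v)$, the stable set $N^-(u)$ contains $v$ and satisfies $N^-(u)\setminus\{v\}\subseteq N^0(v)$, where $N^0(v)$ denotes the set of non-neighbours of $v$, with a dual statement for $u\in N^-(v)$. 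In particular, $D[N(v)]$ is bipartite, hence $\chi(D[N[v]])\le 3$.

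Assume for contradiction that $\chi(D)$ is much larger than a suitable function of $k$ and $f(k-1)$. By the inductive hypothesis $D$ contains an induced $S_{k-1,k-1}$ with some centre $v$, in-witnesses $U=\{u_1,\dots,u_{k-1}\}\subseteq N^-(v)$ and out-witnesses $W=\{w_1,\dots,w_{k-1}\}\subseteq N^+(v)$. Since $D$ is $S_{k,k}$-free, this configuration cannot be extended: for every $u\in N^-(v)\setminus U$ and every $w\in N^+(v)\setminus W$, the set $\{v\}\cup U\cup W\cup\{u,w\}$ already carries a "forbidden" edge, which by (ii) must be incident to $u$ or $w$. Moreover, by Proposition~\ref{prop:dkcrit} we have $\delta(D)\ge\chi(D)-1$, and removing $U\cup W\cup\{v\}$ (a set of size $2k-1$) still leaves chromatic number $\ge\chi(D)-(2k-1)$, so the inductive hypothesis continues to apply and produces many further induced copies of $S_{k-1,k-1}$ in the remaining subdigraph.

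The plan from here is to exploit the large number of induced $S_{k-1,k-1}$'s together with the criticality degree bounds to carry out a Ramsey/pigeonhole extraction: fix a vertex $v$ with both $d^+(v)$ and $d^-(v)$ very large (obtained by iterating Proposition~\ref{prop:dkcrit}), use property (iii) to push the in-neighbourhoods of vertices of $N^+(v)$ (and dually the out-neighbourhoods of vertices of $N^-(v)$) into the non-neighbourhood $N^0(v)$, and then find a vertex $u^\star$ admitting $k$ pairwise non-adjacent in-neighbours and $k$ pairwise non-adjacent out-neighbours with no arcs between the two sides, witnessing an induced $S_{k,k}$ centred at $u^\star$ and contradicting the hypothesis.

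The main obstacle is precisely this last extraction. The trouble is that, even though $N^+(u^\star)$ and $N^-(u^\star)$ are individually stable, arcs can still occur between them (closing $\vec{C_3}$'s through $u^\star$), and these arcs are exactly the ones that must be dodged when selecting the two $k$-subsets. Controlling them requires applying the inductive bound $f(k-1)$ not only at $u^\star$ but simultaneously at many auxiliary centres in order to rule out large "bad" bipartite sub-configurations in the non-adjacency graph between $N^+(u^\star)$ and $N^-(u^\star)$; it is this coordinated Ramsey-style pruning, combined with $TT_3$-freeness, that I expect to be the technically delicate heart of the proof.
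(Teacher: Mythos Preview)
Your proposal is a plan, not a proof: you explicitly identify the ``coordinated Ramsey-style pruning'' as the heart of the argument and then stop without carrying it out. This is a genuine gap, and it is not clear your outline can be completed. The difficulty is structural, not merely technical: for \emph{every} vertex $u^\star$ in a $(TT_3,S_{k,k})$-free digraph, the bipartite non-adjacency graph between $N^+(u^\star)$ and $N^-(u^\star)$ is $\overline{K}_{k,k}$-free (this is exactly what $S_{k,k}$-freeness says, once you have observed that $N^\pm(u^\star)$ are stable and all arcs go from $N^+$ to $N^-$). So the very object you are searching for --- an independent $k+k$ configuration around some centre --- is forbidden at every centre, and no amount of ``many $S_{k-1,k-1}$'s'' at single centres gives you leverage. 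Your inductive hypothesis produces local witnesses, but the obstacle is global density between the two neighbourhoods, and nothing in your sketch addresses that.

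The paper's proof takes an entirely different route and, crucially, turns this same density into the engine of the argument rather than the obstacle. First it reduces to bounding the \emph{triangle-free chromatic number} $\tri(D)$ (since every colour class in such a colouring is triangle-free and hence handled by the easy Proposition~\ref{prop:star-triangle}). Then, instead of induction on $k$, it works with a single fixed $k$ and exploits the fact that $D[N^+(x),N^-(x)]$ is $\overline{K}_{k,k}$-free to deduce, via a K\H{o}v\'ari--S\'os--Tur\'an-type counting lemma, that between any two large stable sets $A,B$ with $A\rightsquigarrow B$ one can pass to slightly smaller subsets where almost every vertex on each side is adjacent to a $t$-fraction of the other (the relation $A\rightsquigarrow_t B$). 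This density relation is then iterated: from $A\rightsquigarrow_t B\rightsquigarrow_t C$ one recovers $C\rightsquigarrow A$, and a maximal pair $(A,B)$ with $A\rightsquigarrow_t B$ is shown to be so large that a vertex with high in- and out-degree outside $A\cup B$ forces a strictly larger pair, contradicting maximality. There is no induction on $k$ and no attempt to extend an $S_{k-1,k-1}$; the constants are explicit (if large) functions of $k$ alone.

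If you want to salvage an inductive approach, you would need a lemma that converts the $\overline{K}_{k,k}$-freeness of $D[N^+(u^\star),N^-(u^\star)]$ into a usable colouring or partition statement --- but that is essentially what the paper's density machinery does, and it is hard to see how to shortcut it.
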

The proof of Theorem~\ref{thm:Skk-TT3} is given in the next
subsections.





\subsubsection{Reducing to triangle-free colouring}

Let $D$ be a digraph.  A \emph{triangle-free colouring} is a colouring
of the vertices such that no triangle is monochromatic.  The
\emph{triangle-free chromatic number, denoted by $\tri(D)$}, of $D$ is
the minimum number of colours in a triangle-free colouring of $D$.

\begin{lemma}\label{tchi}
If $D \in \Forb(TT_3,S_{k,k})$, then $\chi(D) \le (4k-2)\cdot \tri(D)$
\end{lemma}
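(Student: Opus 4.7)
The plan is to exploit the fact that forbidding $TT_3$ means every triangle in $D$ (as an underlying graph) must be a directed 3-cycle $\vec{C_3}$, and then use Proposition~\ref{prop:star-triangle} on each colour class of a triangle-free colouring.

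First I would fix an optimal triangle-free colouring of $D$ using $t = \tri(D)$ colours, giving a partition $V(D) = V_1 \cup \cdots \cup V_t$ such that no $V_i$ contains a triangle of the underlying graph. For each $i$, let $D_i = D\langle V_i\rangle$. Since $D$ has no $TT_3$, the only possible triangle in the underlying graph of $D$ would be the underlying graph of a $\vec{C_3}$; but $V_i$ contains no triangle at all, so $D_i$ is $\vec{C_3}$-free, and of course also $TT_3$-free and $S_{k,k}$-free by heredity. Thus $D_i \in \Forb(\vec{C_3}, TT_3, S_{k,k})$.

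Next I would apply Proposition~\ref{prop:star-triangle} (with $\ell = k$) to each $D_i$, which gives $\chi(D_i) \le 2k + 2k - 2 = 4k-2$. Finally, to combine, I would colour each $D_i$ with its own disjoint palette of at most $4k-2$ colours. Since every edge of $D$ either lies inside some $D_i$ (handled by its proper colouring) or goes between $V_i$ and $V_j$ with $i \neq j$ (handled automatically because the palettes are disjoint), the union is a proper colouring of $D$ using at most $(4k-2) \cdot t = (4k-2) \cdot \tri(D)$ colours, giving the bound.

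I do not anticipate a real obstacle here: the whole content of the lemma is the observation that $TT_3$-freeness promotes an ordinary triangle-free decomposition into a $\{\vec{C_3},TT_3\}$-free decomposition, at which point the already established triangle-free bound from Proposition~\ref{prop:star-triangle} applies verbatim. The only point to verify carefully is that ``triangle-free'' in the definition of $\tri$ refers to triangles of the underlying graph, so that the combination with $TT_3$-freeness indeed rules out $\vec{C_3}$ inside each class; this is exactly the hypothesis used above.
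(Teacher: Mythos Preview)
Your proof is correct and follows exactly the same approach as the paper: take an optimal triangle-free colouring, observe that each colour class is $(\vec{C_3},TT_3,S_{k,k})$-free, apply Proposition~\ref{prop:star-triangle} to bound each class by $4k-2$, and combine the colourings. The paper's version is terser but logically identical.
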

\begin{proof}
Let $V_1, \dots, V_{\tri(D)}$ be a triangle-free colouring of $D$.
For every $i \leq \tri(D)$, the graph $D[V_i]$ is $(\vec{C_3},TT_3,
S_{k,k})$-free and thus $\chi(D[V_i]) \le 4k-2$ by
Proposition~\ref{prop:star-triangle}.  Hence $\chi(D) \le (4k-2)\cdot
\tri(D)$.
\end{proof}

Lemma~\ref{tchi} implies that in order to Theorem~\ref{thm:Skk-TT3} 
it is sufficient to prove the following theorem.
\begin{theorem}\label{Skk-vague}
For any positive integer $k$,  $\tri( \Forb(TT_3,S_{k,k})) < +\infty$. 
\end{theorem}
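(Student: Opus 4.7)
The plan is to adopt a greedy triangle-free colouring strategy: show that every $D \in \Forb(TT_3, S_{k,k})$ contains a vertex of bounded \emph{triangle-degree} (the number of $\vec{C_3}$'s through it), with bound $T(k)$ depending only on $k$. Since $\Forb(TT_3,S_{k,k})$ is hereditary, iterating this gives an ordering of $V(D)$ along which a greedy triangle-free colouring uses at most $T(k)+1$ colours. The starting point is that $D$ being $TT_3$-free forces $N^+(v)$ and $N^-(v)$ to be stable sets, and any edge between them to be oriented from $N^+(v)$ to $N^-(v)$ (otherwise some $TT_3$ appears through $v$). Writing $H_v$ for this bipartite ``back-arc'' digraph from $N^+(v)$ to $N^-(v)$, each $\vec{C_3}$ through $v$ uses exactly one arc of $H_v$, so the triangle-degree of $v$ equals $|A(H_v)|$.

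The first key step is to use $S_{k,k}$-freeness: a bi-independent $K_{k,k}$ inside $H_v$ together with $v$ would yield an induced $S_{k,k}$, so the bipartite complement of $H_v$ is $K_{k,k}$-free. By the bipartite Ramsey theorem there is a constant $M(k)$ such that whenever both $|N^-(v)|$ and $|N^+(v)|$ are at least $M(k)$, the bipartite digraph $H_v$ itself must contain a $K_{k,k}$. I would then split $V(D)=V_1\cup V_2$ with $V_1:=\{v:\min(|N^-(v)|,|N^+(v)|)<M(k)\}$. Partitioning $V_1$ further into $V_1^-:=\{v\in V_1:|N^-(v)|<M(k)\}$ and $V_1^+:=V_1\setminus V_1^-$, each of $D[V_1^-]$ and $D[V_1^+]$ has either maximum in-degree or maximum out-degree less than $M(k)$; hence its underlying graph has average degree at most $2M(k)$ and, inheriting this to every subgraph, degeneracy at most $2M(k)$. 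Thus $\chi(D[V_1])\le 4M(k)+2$, and \emph{a fortiori} $\tri(D[V_1])$ is bounded by a function of $k$.

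It then remains to bound $\tri(D[V_2])$, where every vertex carries a $K_{k,k}$ pattern of back-arcs in its neighbourhood. The plan is to leverage this forced rigidity and iteratively build larger forbidden configurations, aiming to show that either some $v\in V_2$ combined with its $K_{k,k}$ back-arc pattern and a carefully chosen additional vertex yields an induced $S_{k,k}$ inside $D$ (contradicting $S_{k,k}$-freeness), or $D[V_2]$ admits a triangle-free decomposition of size bounded by a function of $k$. A natural candidate is a strong-component or DFS-based ordering that colours vertices by depth modulo a small constant, verifying that the $K_{k,k}$ back-arc constraint rules out monochromatic $\vec{C_3}$'s.

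The main obstacle will be this last step. Because the bipartite complement of $H_v$ is $K_{k,k}$-free, the K{\H o}v\'ari--S\'os--Tur\'an theorem forces $H_v$ to be nearly complete, so triangle-degrees on $V_2$ grow with $|N^-(v)|\cdot|N^+(v)|$ and a purely greedy bound cannot succeed on $V_2$. Overcoming this will likely require either a non-greedy colouring specifically tailored to the $K_{k,k}$ back-arc pattern, or a delicate induction on $k$ in which one peels off a substructure already witnessing an induced $S_{k-1,k-1}$ that can be upgraded to an $S_{k,k}$ using a vertex of $V_2$; this is where I expect the combinatorial work of the paper to concentrate.
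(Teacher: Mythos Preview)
Your proposal is not a proof: you explicitly leave the case of $V_2$ unfinished, and the greedy triangle-degree idea that works for $V_1$ cannot be salvaged there, for the reason you yourself identify. Since the bipartite complement of $H_v$ is $K_{k,k}$-free, K{\H o}v\'ari--S\'os--Tur\'an makes $|A(H_v)|$ essentially quadratic in $\min(d^-(v),d^+(v))$, so the number of $\vec{C}_3$'s through a vertex of $V_2$ is unbounded and no greedy ordering on triangle-degree can give a bounded palette. The vague suggestions at the end (DFS ordering, peeling an $S_{k-1,k-1}$) are not developed into an argument, and I do not see how to make them work.

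The paper's proof is structurally quite different and does not separate vertices by degree. The key engine is a \emph{chaining lemma}: if $A,B,C$ are stable sets with $A\to_\tau B\to_\tau C$ for some $\tau>1-\tfrac{1}{2k}$, then $C\rs A$ (there are no arcs from $A$ to $C$ and the bipartite complement between $C$ and $A$ is $K_{k,k}$-free). This is proved by a pigeonhole in $B$: any $k+k$ vertices on the outside share a common vertex in $B$, which then forces an induced $S_{k,k}$. Combined with a cleaning lemma (from $A\rs B$ one can delete a bounded number of vertices to get $A'\rs_t B'$ with high density $t$), this lets one pass from $A\rs B\rs C$ to $C'\rs A'$. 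The actual proof then takes a $(2d+1)$-triangle-free-critical $D$, uses criticality to guarantee that all in- and out-degrees are at least $2d$, chooses a pair of stable sets $A,B$ with $A\rs_t B$ maximizing $|A|+|B|$, and shows that a vertex of large in- and out-degree outside $A\cup B$ lets one manufacture, via the chaining lemma, a strictly larger such pair $A^*,B^*$, contradicting maximality. Your observation that the bipartite complement of $H_v$ is $K_{k,k}$-free is exactly Lemma~\ref{nbr} in the paper and is the starting point of all this, but the substance of the argument is the extremal-pair/chaining mechanism, which your proposal does not contain.
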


We prove Theorem~\ref{Skk-vague} in Section~\ref{subsubsec:proof}, and
this will establish Theorem~\ref{thm:Skk-TT3} as well.  The proof
requires several preliminaries.  To make the proof clear and avoid
tedious calculations, we do not make any attempt to get an explicit
constant $C_k$ such that $\tri(\Forb(TT_3,S_{k,k})) < C_k$, because
our method yields a huge constant which is certainly a lot larger than
$\tri( \Forb(TT_3,S_{k,k}))$.

\subsubsection{Preliminaries}

\paragraph{A combinatorial lemma.} 
We start with a combinatorial lemma that only serves to prove
Lemma~\ref{kp}.
\begin{lemma}\label{comb}
Let $k \in \mathbb N$ and $p \in\, ]0,1[$.  Then there is an integer
$N(k,p)$ that satisfies the following: \\
If $H=(V,E)$ is a hypergraph where all hyperedges have size at
least $p|V|$, and the intersection of any $k$ hyperedges has size
at most $k-1$, and $|V|\ge N(k,p)$, then $|E|<k/p^k$.
%
\end{lemma}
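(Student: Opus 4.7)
My plan is an iterative peeling argument. Starting from $H_0 := H$, I build a sequence of hypergraphs $H_0, H_1, \dots, H_{k-1}$ where $H_{j+1}$ is obtained from $H_j = (V_j, E_j)$ by picking a vertex $v_j \in V_j$ of maximum degree in $H_j$, setting $V_{j+1} := V_j \setminus \{v_j\}$, and defining $E_{j+1} := \{\, e \setminus \{v_j\} : v_j \in e \in E_j \,\}$. Throughout the iteration I maintain four invariants: (i) $|V_j| = n - j$; (ii) every $e \in E_j$ satisfies $|e| \geq pn - j$; (iii) any $k$ distinct edges of $E_j$ intersect in at most $k-1-j$ elements; (iv) $|E_j| \geq m \prod_{i=0}^{j-1} \frac{pn-i}{n-i}$.

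Invariants (i) and (ii) are immediate. Invariant (iii) holds because the edges of $E_j$ correspond to edges of $E_0$ all containing the peeled set $\{v_0, \dots, v_{j-1}\}$, so a $k$-wise intersection which was originally of size at most $k-1$ (by hypothesis) has lost exactly $j$ elements. Invariant (iv), the main technical point, follows from averaging: the sum of edge sizes in $H_j$ is at least $|E_j|(pn-j)$, so the average (and hence maximum) degree in $H_j$ is at least $|E_j|(pn-j)/(n-j)$, which equals the number of edges of $H_j$ through the chosen $v_j$, i.e.\ $|E_{j+1}|$.

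After $k-1$ peels, invariant (iii) forces any $k$ edges of $E_{k-1}$ to have empty intersection; equivalently, every vertex of $V_{k-1}$ lies in at most $k-1$ edges of $E_{k-1}$. Double counting the incidences in $H_{k-1}$ then gives
\[
|E_{k-1}|\,(pn - k + 1) \;\le\; \sum_{v \in V_{k-1}} d_{H_{k-1}}(v) \;\le\; (k-1)(n - k + 1).
\]
Combining this with invariant (iv) at $j = k-1$ and rearranging yields
\[
m \;\le\; (k-1) \prod_{i=0}^{k-1} \frac{n-i}{pn-i}.
\]
As $n \to \infty$ (with $k$ and $p$ fixed), each factor $(n-i)/(pn-i)$ tends to $1/p$, so the right-hand side converges to $(k-1)/p^k$, which is strictly less than $k/p^k$. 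Picking $N(k,p)$ large enough that the displayed bound is strictly less than $k/p^k$ for all $n \geq N(k,p)$ then yields $|E| = m < k/p^k$, as required.

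The main obstacle is tracking invariant (iv) accurately through the iteration so that the asymptotic produces the sharp constant $(k-1)/p^k$ rather than a weaker estimate: this is precisely why we select $v_j$ to be a maximum-degree vertex at each step, so that averaging gives the multiplicative factor $(pn-j)/(n-j)$ rather than something worse. Once this bookkeeping is in place, both the intersection invariant and the final double-count are routine.
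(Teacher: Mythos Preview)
Your proof is correct. It reaches exactly the same bound as the paper,
\[
|E|\;\le\;(k-1)\prod_{i=0}^{k-1}\frac{n-i}{pn-i}\;\xrightarrow[n\to\infty]{}\;\frac{k-1}{p^k}\;<\;\frac{k}{p^k},
\]
but the route is genuinely different. The paper does a single global double count: it sums, over all $k$-subsets $T\subseteq V$, the number of hyperedges containing $T$; the hypothesis caps each summand by $k-1$, while each hyperedge contributes at least $\binom{pn}{k}$ to the total, yielding $|E|\le (k-1)\binom{n}{k}/\binom{pn}{k}$. Your argument instead builds the $k$-subset one vertex at a time by a greedy peel, using the averaging bound $|E_{j+1}|\ge |E_j|\,(pn-j)/(n-j)$ to replace the binomial-coefficient ratio. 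The two computations are algebraically the same product, so neither approach is sharper; the paper's version is shorter and avoids the invariant bookkeeping, while your peeling argument is more elementary in that it never has to name or count $k$-subsets directly, and it makes transparent why the intersection hypothesis degrades by exactly one at each step.
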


\begin{proof}

Set $|V|=n$.  We need to prove that if $n$ is sufficiently large, then
$|E|<k/p^k$.  Let $\varphi:V^k \to \mathbb N$ be the function defined
as follows: for any $k$-subset $T$ of $V$, let $\varphi(T)=|\{A\in E
\mid T \subseteq A\}|$.  Set $\Phi=\Sigma_{T\in V^k}\, \varphi(T)$.
By the hypothesis we have $\varphi(T)\le k-1$ for all $T \in V^k$, and
thus $\Phi \le {n \choose k} \cdot (k-1)$.  Since each hyperedge
contributes to at least ${pn \choose k}$ to $\Phi$, we have $\Phi \ge
|E| \cdot {pn \choose k}$.  So $|E| \cdot
{pn \choose k} \le (k-1) \cdot {n \choose k}$, and thus %
$$|E| \le (k-1) \cdot \frac{ {n \choose k}} { {pn \choose k}} \sim_{n
\to \infty}\frac{k-1}{p^k},$$
which implies the result. 
 \end{proof}

\paragraph{The constants.}

All along the proofs we will use several constants; we introduce all of them here.
\begin{itemize}
\item  
$k\ge 2$ is a fixed integer (that corresponds to the forbidden
$S_{k,k}$).
\item  
$s=1-\frac{1}{2k}$.
\item  
We choose $\e \in]0,\frac{1}{2k}[$.
\item 
We choose $t \in ]s,1-\e[$ (we need $t>s$ in Lemma~\ref{ppk1} and
\ref{ppk}  and we need $t<1-\e$ in Lemma~\ref{kkk}).
\item 
$g=k/(1-t-\e)^k$ (this corresponds to the constant $k/p^k$ in
Lemma~\ref{comb} for $p=1-t-\e$).
\item 
$N_1=\max\left(N(k,1-t-\e),\frac{(1-t-\e)\cdot g}{\e}+g\right)$ where
$N$ is the function defined in Lemma~\ref{comb}.
\item 
$N_2= \max(N_1, \frac{g}{t-s}+g+1)$. 
\item 
$d = \max \left(\frac{N_2}{t} + 8g, \frac{2tg}{t-s} + g\right )$.
\end{itemize}

\paragraph{Definitions.}
Let $D$ be an oriented graph and $A$ and $B$ be two disjoint stable
sets.  The graph $D[A,B]$ is the bipartite graph with parts $A$ and
$B$.  If $D[A,B]$ is $\overline K_{k,k}$-free and all its arcs are
from $A$ to $B$, we write $A \rs B$.  Note that $A \rs B$ implies $A
\rs C$ for every $C \subseteq B$.  Let $0<\tau<1$.  By $A \to_\tau B$,
we mean:

\begin{itemize}
\item there is no arc from $B$ to $A$,
\item for every $a \in A$, we have $d^+_B(a) \ge \tau|B|$ and 
\item for every $b \in B$, we have $d^-_A(b) \ge \tau |A|$. 
\end{itemize}

If $A \rs B$  and $A\to_\tau B$, we write  $A \rs_\tau B$.

\paragraph{The tools.}

We now prove several lemmas that will be used in the proof.

\begin{lemma}\label{nbr}
Let $D \in \Forb(TT_3, S_{k,k})$.  Let $x \in V(D)$.  Then $N^+(x) \rs
N^-(x)$.
\end{lemma}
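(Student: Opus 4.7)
The plan is to unfold the definition of $A \rs B$ and verify each clause using the two forbidden configurations. Recall that $A \rs B$ requires $A,B$ to be disjoint stable sets, that $D[A,B]$ contains no $\overline{K_{k,k}}$, and that every arc between $A$ and $B$ goes from $A$ to $B$. So I need to check three things for $A = N^+(x)$ and $B = N^-(x)$: stability, arc orientation, and the absence of $\overline{K_{k,k}}$.

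First I would observe that $N^+(x)$ is a stable set: if $u,v \in N^+(x)$ and $u \to v$, then $\{x,u,v\}$ with arcs $x \to u$, $x \to v$, $u \to v$ induces $TT_3$. The same argument with reversed direction shows $N^-(x)$ is stable. Moreover, $N^+(x) \cap N^-(x) = \emptyset$ since $D$ is an oriented graph. Next, for an arc between $u \in N^+(x)$ and $v \in N^-(x)$, note that the orientation $v \to u$ is forbidden: combined with $v \to x$ and $x \to u$ it would yield a $TT_3$ on $\{v,x,u\}$. Hence every arc of $D[N^+(x),N^-(x)]$ is directed from $N^+(x)$ to $N^-(x)$.

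It remains to rule out an induced copy of $\overline{K_{k,k}}$ in $D[N^+(x), N^-(x)]$. I would argue by contradiction: suppose there are $k$-subsets $A' \subseteq N^+(x)$ and $B' \subseteq N^-(x)$ with no arc between them in $D$. Then the vertex $x$ together with $A' \cup B'$ realises $S_{k,k}$: $x$ has arcs to every vertex of $A'$ and in-arcs from every vertex of $B'$. By the previous two paragraphs, $A'$ and $B'$ are both stable and there is no arc between them, so this copy of $S_{k,k}$ is induced, contradicting $D \in \Forb(S_{k,k})$. This closes the case analysis and gives $N^+(x) \rs N^-(x)$.

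The whole argument is essentially a direct translation of the two forbidden patterns, so I do not expect any genuine obstacle; the only point requiring a little care is remembering that $\overline{K_{k,k}}$-freeness, in the bipartite sense used here, means the absence of a $k \times k$ submatrix of non-edges, which is exactly what produces an induced $S_{k,k}$ at $x$.
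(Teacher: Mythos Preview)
Your proof is correct and follows exactly the same approach as the paper: use $TT_3$-freeness to get stability of $N^+(x)$ and $N^-(x)$ and the orientation of arcs between them, then use $S_{k,k}$-freeness to exclude an induced $\overline{K_{k,k}}$. You simply spell out the details more explicitly than the paper does.
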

\begin{proof}
Since $D$ is $TT_3$-free, $N^+(x)$ and $N^-(x)$ are stable sets and
any arc between $N^+(x)$ and $N^-(x)$ has its tail in $N^+(x)$ and its
head in $N^-(x)$.  Since $D$ is $S_{k,k}$-free, $D[N^-(x),N^+(x)]$ is
$\overline K_{k,k}$-free.
\end{proof}

The next lemma roughly states that if $A$ and $B$ are two large enough
disjoint stable sets such that $A \rs B$, then up to deleting a few
vertices from $A$ and $B$ we have $A \rs_t B$.

\begin{lemma}\label{kp}
Let $A,B$ be two disjoint stable sets such that $A\rs B$.  If $|A|,
|B|\geq N_1$, then there exist $A_1 \subseteq A$ and $B_1 \subseteq B$
such that:
\begin{itemize}
\item $|A_1| \ge |A|-g$, $|B_1| \ge |B|-g$ and
\item $A_1 \rs_t B_1$. 
\end{itemize}
\end{lemma}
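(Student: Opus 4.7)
The plan is to identify the ``bad'' vertices in $A$ and $B$ (those violating the out/in-degree conditions required for $\rs_t$), bound their numbers by $g$ using the $\overline{K_{k,k}}$-free property together with the counting argument behind Lemma~\ref{comb}, and then take $A_1, B_1$ to be the complementary sets. The small slack $\e$ between $t$ and $t+\e$ is what lets the degree condition survive after we delete the bad vertices.

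Concretely, I would set
\[
A' := \{a\in A : d^+_B(a)<(t+\e)|B|\},\qquad B' := \{b\in B : d^-_A(b)<(t+\e)|A|\},
\]
and put $A_1:=A\setminus A'$, $B_1:=B\setminus B'$. To bound $|A'|<g$, mimic the proof of Lemma~\ref{comb}: for $a\in A'$ put $S_a:=B\setminus N^+(a)$, so $|S_a|>(1-t-\e)|B|$; for a $k$-subset $T$ of $B$, let $\varphi(T)=|\{a\in A' : T\subseteq S_a\}|$. If $\varphi(T)\geq k$, then $k$ vertices of $A'$ together with $T$ would induce a $\overline{K_{k,k}}$ in $D[A,B]$, contradicting $A\rs B$; so $\varphi(T)\leq k-1$. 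Summing over all $k$-subsets of $B$ gives
\[
|A'|\binom{(1-t-\e)|B|}{k}\;\leq\;\sum_{a\in A'}\binom{|S_a|}{k}\;=\;\sum_T\varphi(T)\;\leq\;(k-1)\binom{|B|}{k},
\]
whose right-hand side tends to $(k-1)/(1-t-\e)^k<g$ as $|B|\to\infty$. The assumption $|B|\geq N_1\geq N(k,1-t-\e)$ then yields $|A'|<g$, and symmetrically $|B'|<g$.

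It remains to verify that $A_1\rs_t B_1$. The facts that $D[A_1,B_1]$ is $\overline{K_{k,k}}$-free and that all of its arcs go from $A_1$ to $B_1$ are immediately inherited from $A\rs B$. For $a\in A_1$, one has $d^+_{B_1}(a)\geq d^+_B(a)-|B'|\geq(t+\e)|B|-g$; comparing with $t|B_1|=t(|B|-|B'|)$, the required inequality $d^+_{B_1}(a)\geq t|B_1|$ simplifies to $\e|B|\geq(1-t)|B'|$, which follows from $|B'|<g$ together with $|B|\geq N_1\geq(1-t-\e)g/\e+g=(1-t)g/\e$. The in-degree condition for $b\in B_1$ is handled identically with the roles of $A$ and $B$ swapped.

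The main (minor) subtlety is quantitative rather than conceptual: the counting argument only gives the asymptotic bound $(k-1)/(1-t-\e)^k$, and one must argue that the threshold $N(k,1-t-\e)$ from Lemma~\ref{comb} is large enough so that the finite-size ratio $\binom{|B|}{k}/\binom{(1-t-\e)|B|}{k}$ is below $g/(k-1)=k/((k-1)(1-t-\e)^k)$; this is precisely the reason the definition of $g$ uses $k$ rather than $k-1$ in the numerator, and it is also what forces the two lower bounds combined into the definition of $N_1$.
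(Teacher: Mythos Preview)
Your proof is correct and follows essentially the same approach as the paper: define the bad vertices via the $(t+\e)$-threshold, bound their number by $g$ using the $\overline{K_{k,k}}$-free hypothesis together with Lemma~\ref{comb}, and then verify the degree conditions using the $\e$-slack and the definition of $N_1$. The only cosmetic differences are that you inline the counting argument of Lemma~\ref{comb} instead of citing it, and your final inequality $\e|B|\geq(1-t)|B'|$ is an equivalent rearrangement of the paper's $\e|B_1|\geq(1-t-\e)|B_2|$.
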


\begin{proof}
Assume that $|A|,|B| \ge N_1$.
Let 
\[
\begin{split}
A_2=&\{a \in A: d^+_B(a)<(t+\epsilon)|B|\} \text{ and } 
A_1=A\setminus A_2, \text{ and }\\ 
B_2=&\{b \in B:d^-_{A}(b)<(t+\epsilon)|A|\} \text{ and }  B_1=B\setminus B_2.
\end{split}
\]

Let us first prove that both $|A_2|$ and $|B_2|$ are at most $g$.
Consider the hypergraph $H_B=(B,E_B)$ where $E_B=\{B\setminus N^+(a)
\mid a\in A_2\}$.  We have $|E_B|=|A_2|$ and the size of each
hyperedge of $H_B$ is at least $(1-t-\epsilon)|B|$.  Since $D[A,B]$ is
$\overline K_{k,k}$-free, $k$ vertices of $A_2$ cannot have $k$ common
non-neighbours, i.e., the intersection of any $k$ hyperedges of $H_B$
is at most $(k-1)$.  Since $|B| \ge N_1 \ge N(k,1-t-\e)$,
Lemma~\ref{comb} ensures that $|A_2|=|E_B| \le
\frac{k}{(1-t-\epsilon)^k}=g$.  Thus $|A_1|\ge |A|-g$.  Similarly
$|B_2| \le g$ and so $|B_1| \ge |B|-g$.

Since $A \rs B$, we have $A_1 \rs B_1$. Thus it remains to prove that $A_1 \to_t B_1$. 
Since $d_{B}^+(a)\ge (t+\epsilon)|B|$ for every $a \in A_1$, we have:
\[
\begin{split}
d^+_{B_1}(a) &\ge (t+\epsilon)|B|-|B_2|\\
&\ge t \cdot |B_1|+\epsilon|B_1|- (1-t-\epsilon) |B_2|) \hspace{20pt}\text{ (because }|B|=|B_1|+|B_2| \text{)}\\
\end{split}
\]
Now $|B_2|\le g$ and by definition of $N_1$, we have: 
$|B_1| \ge |B|-g \ge  N_1 - g\ge  \frac{(1-t-\epsilon)\cdot g}{\epsilon}. $ So $\epsilon|B_1| \geq (1-t-\epsilon)|B_2|$.  
Consequently, $d^+_{B_1}(a) \geq t \cdot |B_1|$. 

Similarly, we obtain $d^-_{A_1}(b) \geq t\cdot |A_1|$ for all $b\in
B_1$, which completes the proof.
\end{proof}


\begin{lemma}\label{ppk1}
Let $\tau \in ]s,1[$ and $D \in \Forb(TT_3,S_{k,k})$.  Let $A,B,C$ be
three disjoint stable sets of $D$.  If for every $a \in A$, $d^+_B(a)
\ge \tau |B|$ and for every $c \in C$, $d^-_B(c) \ge \tau |B|$, then
$C \rs A$.
\end{lemma}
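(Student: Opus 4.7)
The plan is to prove both conditions required for $C \rs A$: namely, that every arc between $A$ and $C$ is oriented from $C$ to $A$, and that $D[C,A]$ is $\overline{K}_{k,k}$-free. Both parts hinge on finding a suitable vertex $b \in B$ that witnesses a forbidden pattern unless the desired conclusion holds.

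For the orientation condition, I would fix $a \in A$ and $c \in C$ and use that $\tau > s = 1 - \frac{1}{2k} \ge \frac{1}{2}$. Inclusion-exclusion then gives
\[
\bigl|N^+_B(a) \cap N^-_B(c)\bigr| \ge d^+_B(a) + d^-_B(c) - |B| \ge (2\tau - 1)|B| > 0,
\]
so one may pick $b \in B$ with $a \to b \to c$. If in addition $a \to c$, then $\{a,b,c\}$ would induce a $TT_3$, contradicting $D \in \Forb(TT_3,S_{k,k})$. Hence no arc of $D$ goes from $A$ to $C$.

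For the $\overline{K}_{k,k}$-freeness of $D[C,A]$, I would argue by contradiction. Suppose there exist $a_1, \ldots, a_k \in A$ and $c_1, \ldots, c_k \in C$ such that no underlying edge joins any $a_i$ to any $c_j$. Double counting gives
\[
\sum_{b \in B} \bigl|\{i : a_i \to b\}\bigr| \;=\; \sum_{i=1}^k d^+_B(a_i) \;\ge\; k\tau |B|,
\]
and symmetrically $\sum_{b \in B} |\{j : b \to c_j\}| \ge k\tau |B|$. Adding these and averaging over $B$, some $b^\star \in B$ satisfies $|\{i : a_i \to b^\star\}| + |\{j : b^\star \to c_j\}| \ge 2k\tau$. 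Since $\tau > 1 - \frac{1}{2k}$ we have $2k\tau > 2k-1$, and as the sum is an integer it must be at least $2k$. Both summands being bounded by $k$, each equals $k$, so $b^\star$ receives arcs from all the $a_i$'s and sends arcs to all the $c_j$'s. Since $A$ and $C$ are stable and there are no edges between the $a_i$'s and the $c_j$'s, the set $\{b^\star, a_1, \ldots, a_k, c_1, \ldots, c_k\}$ induces an $S_{k,k}$, the required contradiction.

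The step carrying the real content is the double-counting and integer-rounding argument: the threshold $\tau > s = 1 - \frac{1}{2k}$ is calibrated precisely so that $2k\tau > 2k-1$, which is what forces a single $b^\star$ to be dominated by \emph{all} the $a_i$'s and to dominate \emph{all} the $c_j$'s simultaneously. No other obstacle is expected; the rest is a direct use of $TT_3$-freeness through a middle vertex in $B$ and of $S_{k,k}$-freeness through the center vertex $b^\star$.
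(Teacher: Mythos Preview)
Your proof is correct and follows essentially the same approach as the paper. The only difference is in the second part: the paper bounds $\bigl|\bigcap_i N^+_B(a_i)\cap\bigcap_j N^-_B(c_j)\bigr|\ge (1-2k(1-\tau))|B|>0$ via a union bound on complements, while you obtain the common vertex $b^\star$ by averaging and integer rounding; both arguments exploit the threshold $\tau>1-\tfrac{1}{2k}$ in the same way.
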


\begin{proof}
Let us first prove that there is no arc from $A$ to $C$.  Let $a \in
A$ and $c \in C$.  Since $s > \frac 12$, we have $d^+_B(a) \ge \tau
|B|>\frac{1}{2}|B|$ and $d^-_B(c) \ge \tau |B| > \frac{1}{2} |B|$.  So
there exists $b \in B$ such that $b \in N^+(a) \cap N^-(c)$, hence
$ac$ is not an arc otherwise $\{a,b,c\}$ would induce a $TT_3$, a
contradiction.

It remains to prove that $D[C,A]$ is $\overline K_{k,k}$-free.  Assume
for contradiction that there exist $A_k=\{a_1, \dots, a_k\} \subseteq
A$ and $C_k=\{c_1, \dots, c_k\} \subseteq C$ such that there is no arc
between $A_k$ and $C_k$.  For each $a_i \in A_k$, at most
$(1-\tau)|B|$ vertices in $B$ are not in $N^+(a_i)$.  Similarly for
each $c_i \in C_k$, at most $(1-\tau)|B|$ vertices in $B$ are not in
$N^-(c_i)$.  Thus the size of $X:= \bigcap_{1\leq i \leq k} N^+(a_i) \
\cap \ \bigcap_{1\leq i \leq k} N^-(c_i) \cap B$ is at least $(1-2k (1-\tau))
|B| > (1 - 2k (1-s)) |B| = 0$.  Since $X$ is non-empty, it contains a
vertex $x$.  The set $\{x\}\cup A_k \cup C_k$ induces $S_{k,k}$, a
contradiction.
\end{proof}

All along this section, we apply this lemma with stronger assumptions.

\begin{corollary}\label{ppk}
Let $D \in \Forb(TT_3,S_{k,k})$ and  $\tau \in ]s,1[$. 
Let $A,B,C$ be three disjoint stable sets of $D$. 
If  $A \to_\tau B \to_\tau C$, then $C \rs A$.
\end{corollary}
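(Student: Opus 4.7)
The plan is essentially to observe that Corollary~\ref{ppk} is a direct consequence of Lemma~\ref{ppk1} once the hypothesis $A \to_\tau B \to_\tau C$ is unfolded: it packages exactly the two quantitative conditions that Lemma~\ref{ppk1} requires, so nothing new needs to be proved beyond pointing to the correct pieces of the definition.

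More precisely, I would first extract from $A \to_\tau B$ (by definition) the fact that every $a \in A$ satisfies $d^+_B(a) \ge \tau|B|$, and from $B \to_\tau C$ the fact that every $c \in C$ satisfies $d^-_B(c) \ge \tau|B|$. Since the definition of $\to_\tau$ also requires $A$, $B$ and $C$ to be disjoint stable subsets of $D$ (which is given), and since $D \in \Forb(TT_3, S_{k,k})$ and $\tau \in\, ]s,1[$, the triple $(A,B,C)$ fulfils all the hypotheses of Lemma~\ref{ppk1}. Applying that lemma therefore yields $C \rs A$, which is the conclusion we want.

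There is no genuine obstacle here: the two extra ingredients that $\to_\tau$ carries over $\to$ in the hypothesis of Lemma~\ref{ppk1}, namely the absence of backward arcs (e.g.\ from $B$ to $A$) and the in-degree lower bounds $d^-_A(b) \ge \tau |A|$ and $d^+_C(b) \ge \tau |C|$ for $b\in B$, are simply not needed to derive $C \rs A$; the corollary is stated with stronger hypotheses only because subsequent applications in the paper already have them at hand. Accordingly, the proof amounts to a one-line reduction to Lemma~\ref{ppk1}, and the only thing to double-check is that the directions of the arcs in the hypotheses of the lemma (``$d^+_B(a)$'' on the $A$-side and ``$d^-_B(c)$'' on the $C$-side) match the orientations implied by $A \to_\tau B$ and $B \to_\tau C$, which they do verbatim.
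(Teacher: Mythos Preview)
Your proposal is correct and matches the paper's approach exactly: the paper presents Corollary~\ref{ppk} immediately after Lemma~\ref{ppk1} with the remark that it is simply that lemma applied under stronger assumptions, and your argument spells out precisely why the hypotheses of Lemma~\ref{ppk1} are met.
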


The next lemma roughly ensures that if $A,B,C$ are three large enough
stable sets such that $A \rs B \rs C$, then, up to deleting a few
vertices from $A$ and $C$, we have $C \rs A$.

\begin{lemma}\label{kkk}
Let $D \in \Forb(TT_3,S_{k,k})$ and let $A,B,C$ be three disjoint
stable sets of $D$.  If $A \rs B \rs C$ and $|A|,|B|,|C| \ge N_2$,
then there exist $A_1 \subseteq A$, and $C_1 \subseteq C$ such that:
\begin{itemize}
\item $|A_1| \ge |A|-g$,  $|C_1| \ge |C| -g$ and 
\item $C_1 \rs A_1$
\end{itemize}
\end{lemma}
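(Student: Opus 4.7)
The plan is to apply Lemma~\ref{kp} to the two adjacent pairs $(A,B)$ and $(B,C)$ separately, intersect the two resulting refinements of $B$, and then invoke Lemma~\ref{ppk1} on the resulting triple to obtain the desired bipartite structure between a refined $A$ and a refined $C$.

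First, since $|A|, |B|, |C| \ge N_2 \ge N_1$ and $A \rs B$, Lemma~\ref{kp} yields subsets $A_1 \subseteq A$ and $B' \subseteq B$ with $|A_1| \ge |A|-g$, $|B'| \ge |B|-g$, and $A_1 \rs_t B'$. Applied to $B \rs C$, it similarly yields $B'' \subseteq B$ and $C_1 \subseteq C$ with $|B''| \ge |B|-g$, $|C_1| \ge |C|-g$, and $B'' \rs_t C_1$. Set $B^* := B' \cap B''$; then $|B'|-|B^*|, |B''|-|B^*| \le g$ and $|B^*| \ge |B|-2g$, which is positive by the choice of $N_2$.

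Next, I verify the hypotheses of Lemma~\ref{ppk1} applied to the triple $(A_1, B^*, C_1)$. For any $a \in A_1$, from $A_1 \to_t B'$ we get $d^+_{B'}(a) \ge t|B'|$, and at most $g$ of those out-neighbours lie outside $B^*$, so $d^+_{B^*}(a) \ge t|B'|-g$. Since $|B^*| \le |B'|$, this gives
\[
\frac{d^+_{B^*}(a)}{|B^*|} \;\ge\; t - \frac{g}{|B'|}.
\]
The choice $N_2 \ge g/(t-s)+g+1$ forces $|B'| \ge N_2 - g > g/(t-s)$, hence $d^+_{B^*}(a)/|B^*| > s$. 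Symmetrically, $d^-_{B^*}(c)/|B^*| > s$ for every $c \in C_1$. Taking $\tau$ to be the minimum of these ratios produces some $\tau \in\, ]s,1[$ satisfying the hypotheses of Lemma~\ref{ppk1}, whose conclusion is precisely $C_1 \rs A_1$.

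The main technical point is the degree bookkeeping in the second paragraph: one has to check that, after restricting from $B'$ (resp.\ $B''$) down to the common intersection $B^*$, the surviving out-degrees (resp.\ in-degrees) still exceed a fraction strictly larger than $s$ of $|B^*|$. Everything else is a direct application of Lemma~\ref{kp} twice followed by Lemma~\ref{ppk1}, and the margin needed for the density estimate is exactly what the additive slack $g/(t-s)+g+1$ built into $N_2$ provides.
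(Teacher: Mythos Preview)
Your proof is correct and follows essentially the same route as the paper: apply Lemma~\ref{kp} to each of $A\rs B$ and $B\rs C$, intersect the two refined middle sets, check that the relative degrees into this intersection stay above $s$ using the slack $N_2-g>g/(t-s)$, and conclude with Lemma~\ref{ppk1}. The only cosmetic difference is that the paper works directly with the lower bound $t-g/|B_i|$ (which is automatically in $]s,1[$ since $t<1$) rather than the minimum of the actual degree ratios; your phrasing ``taking $\tau$ to be the minimum of these ratios'' could in principle yield $\tau=1$, but this is harmless since any $\tau\in\,]s,1[$ below that minimum works.
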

\begin{proof}
The proof consists in combining Lemmas~\ref{kp} and \ref{ppk1}.  Since
$A \rs B$ and $|A|,|B| \ge N_1$, Lemma~\ref{kp} ensures that there
exist $A_1 \subseteq A$ and $B_1 \subseteq B$ such that $|A_1| \ge
|A|-g$, $|B_1| \ge |B|-g$ and $A_1 \rs_t B_1$.  Similarly, since $B\rs
C$ and $|B|,|C|>N_1$, there exist $B_2 \subseteq B$ and $C_1 \subseteq
C$ such that $|B_2| \ge |B|-g$, $|C_1| \ge |C|-g$ and $B_2 \rs_t C_1$.
\\
Set $B_3=B_1 \cap B_2$ and observe that $|B_3| \ge |B|-2g$.  Note
moreover that both $B_1 \setminus B_3$ and $B_2 \setminus B_3$ have
size at most $g$.  For all $a \in A_1$, since $A_1 \to_t B_1$ and $B_2
\to_t C_1$, we have:
\begin{equation*}
  d^+_{B_3}(a) \ge t|B_1|-g \ge \left(t-\frac{g}{|B_1|}\right)|B_3|  \hspace{15pt} \text{ and } \hspace{15pt}   d^-_{B_3}(c) \ge t|B_2|-g \ge \left(t-\frac{g}{|B_2|}\right)|B_3|
\end{equation*}
By Lemma~\ref{ppk1}, it is sufficient to prove that $t-g/|B_i|> s$ for $i=1,2$, which is satisfied because $|B_i| \ge |B|-g \ge N_2-g > \frac{g}{t-s}$.
\end{proof}

A digraph $D$ is \emph{$c$-triangle-free-critical} if $\tri(D)=c$ and
for all $x \in V(D)$, $\tri(D-\{x\}))<c$.

\begin{lemma}\label{tcritical}
Let $D \in \Forb(TT_3)$ be a $c$-triangle-free-critical digraph.  Then
for all $x \in V(D)$, $d^+(x)\ge c-1$ and $d^-(x) \ge c -1$.
\end{lemma}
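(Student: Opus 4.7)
The plan is a standard critical-graph argument by contradiction, using the fact that in a $TT_3$-free oriented graph, the only triangles are directed $\vec{C_3}$'s (which use one in-neighbour and one out-neighbour of every vertex they contain).

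Suppose for contradiction that some vertex $x \in V(D)$ satisfies $d^+(x) \le c-2$. By $c$-triangle-free-criticality, the digraph $D - x$ admits a triangle-free colouring $\varphi : V(D) \setminus \{x\} \to \{1,\dots,c-1\}$. Since $|\varphi(N^+(x))| \le d^+(x) \le c-2$, there exists a colour $i \in \{1,\dots,c-1\}$ that does not appear on any out-neighbour of $x$. Extend $\varphi$ by setting $\varphi(x) = i$.

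I claim this extension is a triangle-free $(c-1)$-colouring of $D$, which will contradict $\tri(D) = c$. Indeed, any monochromatic triangle in the extended colouring must contain $x$, since $\varphi$ restricted to $D-x$ was triangle-free. As $D$ is $TT_3$-free, any triangle through $x$ is a directed cycle $\vec{C_3}$ of the form $(x,y,z,x)$ with $y \in N^+(x)$ and $z \in N^-(x)$. Monochromaticity would force $\varphi(y) = i$, contradicting the choice of $i$.

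Hence $d^+(x) \ge c-1$ for every $x$. The bound $d^-(x) \ge c-1$ follows by exactly the same argument applied to the reverse digraph (which is also $TT_3$-free and has the same triangle-free chromatic number), or equivalently by choosing the colour $i$ to avoid $\varphi(N^-(x))$. The only real content of the proof is the observation that $TT_3$-freeness forces every triangle through $x$ to meet both $N^+(x)$ and $N^-(x)$; this is what makes one-sided degree control suffice to extend the colouring.
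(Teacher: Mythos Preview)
Your proof is correct and follows essentially the same approach as the paper: both rely on the observation that, since $D$ is $TT_3$-free, every triangle through $x$ is a directed $\vec{C_3}$ and hence uses one vertex of $N^+(x)$ and one of $N^-(x)$. The paper phrases it directly (for each colour $i$ that fails to extend, produce an out-neighbour $u_i$ and an in-neighbour $v_i$ of colour $i$), whereas you phrase the contrapositive, but the content is identical.
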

\begin{proof}
Let $x \in V(D)$.  Let $\pi$ be a triangle-free colouring of $D-x$
using $c-1$ colours.  Since $\tri(D)=c$, we cannot extend $\pi$ to $D$
using a colour in $\{1,\ldots,c-1\}$.  Let $i \leq c-1$.  Since $x$
cannot be coloured with $i$, the vertex $x$ is adjacent to two
vertices $u_i$ and $v_i$ coloured $i$ and such that $(u_i,v_i)$ is an
arc.  Since $D$ is $TT_3$-free, necessarily, $v_i\ra x$ and $x\ra
u_i$.  Now all the $u_i$ (resp.  $v_i$) are distinct because they are
coloured with distinct colours, so $d^+(x)\geq c-1$ and $d^-(x)\geq c-1$.
\end{proof}

\subsubsection{Proof of
Theorem~\ref{Skk-vague}}\label{subsubsec:proof}

We are now able to prove Theorem~\ref{Skk-vague}.  In fact, we prove
the following theorem.

\begin{theorem}\label{Skk}
$\tri( \Forb(TT_3,S_{k,k}))\le  2 d$. 
\end{theorem}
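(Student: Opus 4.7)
The plan is to argue by contradiction. Suppose some $D \in \Forb(TT_3, S_{k,k})$ has $\tri(D) > 2d$; after restricting to a suitable induced subdigraph, I may assume $D$ itself is $c$-triangle-free-critical for some $c > 2d$, so by Lemma~\ref{tcritical} every vertex has in- and out-degree at least $c-1 \ge 2d$. The overall strategy is to build, rooted at an arbitrary vertex, a long chain of disjoint stable sets related by $\rs_t$, and then extract from the resulting web of $\rs$-relations an induced $S_{k,k}$, contradicting the hypothesis.

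The base of the chain is the easy part: fix any $x \in V(D)$, apply Lemma~\ref{nbr} to get $N^+(x) \rs N^-(x)$, and since $|N^+(x)|, |N^-(x)| \ge 2d \ge N_1$, apply Lemma~\ref{kp} to obtain stable sets $V_0 \subseteq N^+(x)$ and $V_1 \subseteq N^-(x)$, each of cardinality at least $2d - g$, with $V_0 \rs_t V_1$.

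To extend this into a chain $V_0 \rs_t V_1 \rs_t \cdots \rs_t V_p$, I would use the large degrees at each vertex together with Lemmas~\ref{nbr} and~\ref{kp} applied to the neighbourhoods of well-chosen vertices of $V_i$. Given $V_i$, a vertex $y \in V_i$ has its out-neighbourhood disjoint from $V_{i-1}$ (because $V_{i-1} \to_t V_i$ forbids arcs back to $V_{i-1}$) and from $V_i$ itself (stability); averaging over $y$ to locate a common dense sub-neighbourhood, and then applying Lemma~\ref{kp} to the bipartition $N^+(y) \rs N^-(y)$, one extracts a new layer $V_{i+1}$ with $V_i \rs_t V_{i+1}$, losing only a controlled number of vertices per step. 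The constant $d = \max(N_2/t + 8g,\, 2tg/(t-s) + g)$ is calibrated precisely so that each $V_i$ stays above the thresholds $N_1$ and $N_2$ needed to invoke Lemmas~\ref{kp} and~\ref{kkk}, as the chain grows to the required depth.

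Finally, with the chain in hand, Corollary~\ref{ppk} supplies back-arc relations $V_{i+2} \rs V_i$, and iterated applications of Lemma~\ref{kkk} propagate further $\rs$-relations among non-consecutive layers. Combining the forward $\rs_t$-arcs with the resulting back $\rs$-arcs, a combinatorial/averaging argument isolates a vertex $v$ together with $k$ out-neighbours and $k$ in-neighbours that are pairwise non-adjacent --- an induced $S_{k,k}$, contradicting the $S_{k,k}$-freeness of $D$. I expect this last extraction to be the main obstacle: each single relation $V_i \rs V_j$ is itself $\overline{K}_{k,k}$-free, so no pair of consecutive layers carries an $S_{k,k}$-witness on its own. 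The $k$ out-neighbours must be chosen using one family of relations and the $k$ in-neighbours using another, with enough room in the cardinalities to avoid the forbidden cross-arcs --- which is exactly why the constant $d$ must be as large as stated.
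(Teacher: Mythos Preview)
Your outline diverges from the paper's argument in a fundamental way, and the divergence is exactly at the step you yourself flag as ``the main obstacle.''  The paper never builds a long chain $V_0\rs_t V_1\rs_t\cdots$, and it never tries to exhibit an induced $S_{k,k}$ directly.  Instead it runs an \emph{extremal} argument: among all pairs of disjoint stable sets $A,B$ with $|A|,|B|\ge N_2/t$ and $A\rs_t B$, choose one maximizing $|A|+|B|$ (the existence of such a pair comes, as you say, from a single vertex via Lemmas~\ref{nbr} and~\ref{kp}).  A colouring argument (Claim~\ref{claim:big}) then uses the $(2d+1)$-triangle-free-criticality to locate some $x\in A\cup B$ with at least $d$ in-neighbours and $d$ out-neighbours \emph{outside} $A\cup B$.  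From the neighbourhoods of this single $x$, a sequence of applications of Lemmas~\ref{kp}, \ref{ppk1}, \ref{kkk} manufactures a new pair $A^*\subseteq A$, $B^*\subseteq B'\cup B_x$ with $A^*\rs_t B^*$ and $|A^*|+|B^*|>|A|+|B|$, contradicting maximality.  The constant $d$ is calibrated for this bookkeeping (keeping all intermediate sets above $N_2$ after losing $O(g)$ vertices at each step), not for sustaining a growing chain.

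Your sketch has two concrete gaps.  First, the chain-extension step is underspecified: to obtain $V_i\rs_t V_{i+1}$ you need \emph{every} vertex of $V_i$ to have at least $t|V_{i+1}|$ out-neighbours in $V_{i+1}$, not just one chosen $y$; the ``averaging'' remark does not produce this, and nothing prevents the layers from overlapping or collapsing.  Second, and more seriously, the final extraction cannot work as stated: the $\overline{K}_{k,k}$-freeness of each relation $V_i\rs V_j$ is precisely the content of Lemma~\ref{nbr}, so hoping that a combination of several such relations around one centre will suddenly yield an induced $S_{k,k}$ is circular --- every bipartite piece you look at is already $\overline{K}_{k,k}$-free.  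The paper avoids this trap entirely by contradicting the \emph{maximality of $|A|+|B|$} rather than the $S_{k,k}$-freeness of $D$.
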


\begin{proof}
We consider a minimal counter-example, that is, a digraph $D \in
\Forb(TT_3,S_{k,k})$ which is $(2d+1)$-triangle-free-critical.  By
Lemma~\ref{tcritical}, every vertex of $D$ has in- and out-degree at
least $2d$.

Let $A,B$ be two disjoint stable sets, each of size at least
$\frac{N_2}{t}$, such that $A \rs{_t} B$ and maximizing $|A|+|B|$.  We
have:

\begin{equation}\label{A+B}
|A|+|B| \ge 4d -2g.
\end{equation}

Such sets exist. Indeed let $x$ be a vertex.  By Lemma~\ref{nbr}, we have $N^+(x) \rs N^-(x)$.
Since $2d\geq N_1$, Lemma~\ref{kp} ensures that there exists a $U
\subseteq N^-(x)$ and $V \subseteq N^+(x)$ both of size at least $2d-g$
such that $U \rs_t V$. Moreover both $U$ and $V$ are stable sets since $D$ is $TT_3$-free.
So $U$ and $V$ satisfies the conditions.

\begin{claim}\label{claim:big}
There exists $x\in A\cup B$ such that $d^+_{D-(A \cup B)}(x)\ge d$ and
$d^-_{D-(A \cup B)}(x) \ge d$.
\end{claim}

\begin{subproof} 
Assume that no vertex $x$ in $A \cup B$ satisfies $d^+_{D-(A \cup
B)}(x)\ge d$ and $d^-_{D-(A \cup B)}(x) \ge d$.  Let $\pi$ be a
triangle-free $2d$-colouring of $D-(A \cup B)$, which exists by
minimality of $D$.  For every $a \in A$, a colour $c_a$ in $\{1,
\dots, d\}$ does not appear in $\pi(N^+(a))$ or in $\pi(N^-(a))$.
Similarly, for every $b \in B$, a colour $c_b$ in $\{d+1, \dots, 2d\}$
does not appear in $\pi(N^+(b))$ or in $\pi(N^-(b))$.  Let $\pi'$ be
the colouring of $D$ where $\pi'$ agrees with $\pi$ on $D - (A \cup
B)$, and $\pi'(a)=c_a$ for every $a \in A$ and $\pi'(b)=c_b$ for every
$b \in B$.  Since $D$ is $(2d+1)$-triangle-free-critical, there is a
monochromatic oriented triangle $xyz$.  As $\pi$ is a triangle-free
colouring, at least one vertex of the triangle, say $x$, is in $A \cup
B$.  By directional duality, we may assume that $x\in A$.  Since the
colours used to colour vertices of $A$ and the colours used to colour
vertices of $B$ are disjoint, $y$ and $z$ are not in $B$.  Moreover,
since $A$ is a stable set, $y$ and $z$ are in $D-(A \cup B)$.  Thus
there is an in-neighbour and an out-neighbour of $x$ coloured with
$c_x$, a contradiction with the definition of $c_x$.
\end{subproof}

\medskip

We distinguish three cases.

\medskip

\noindent\textbf{Case 1:} $|A|\ge \frac{N_2}{t}+ 4g$ and there exists
$a \in A$ such that $d^+_{D-(A \cup B)}(a)\ge d$ and $d^-_{D-(A \cup
B)}(a)\ge d$.  \smallskip

\begin{figure}
\centering
\definecolor{ffffff}{rgb}{1.,1.,1.}
\begin{tikzpicture}[line cap=round,line join=round,>=triangle 45,x=1.0cm,y=1.0cm]
\draw (2.,0.56)-- (2.,3.56);
\draw (2.,3.56)-- (3.,3.56);
\draw (3.,3.56)-- (3.,0.56);
\draw (3.,0.56)-- (2.,0.56);
\draw (5.,0.)-- (5.,1.36);
\draw (5.,1.36)-- (6.,1.36);
\draw (6.,1.36)-- (6.,0.);
\draw (6.,0.)-- (5.,0.);
\draw (3.44,4.72)-- (3.44,6.36);
\draw (3.44,6.36)-- (4.44,6.36);
\draw (4.44,6.36)-- (4.44,4.72);
\draw (4.44,4.72)-- (3.44,4.72);
\draw (5.,1.62)-- (5.,2.98);
\draw (5.,2.98)-- (6.,2.98);
\draw (6.,2.98)-- (6.,1.62);
\draw (6.,1.62)-- (5.,1.62);
\draw (4.98,4.32)-- (4.98,3.32);
\draw (4.98,3.32)-- (5.98,3.32);
\draw (5.98,3.32)-- (5.98,4.32);
\draw (5.98,4.32)-- (4.98,4.32);
\draw (2.5,2.88)-- (4.64,3.84);
\draw (4.64,3.84)-- (4.26,3.88);
\draw (4.64,3.84)-- (4.48,3.52);
\draw (2.5,2.88)-- (4.62,2.5);
\draw (4.62,2.5)-- (4.4,2.74);
\draw (4.62,2.5)-- (4.34,2.36);
\draw (3.18,1.36)-- (4.58,1.36);
\draw (4.58,1.36)-- (4.3,1.62);
\draw (4.58,1.36)-- (4.3,1.06);
\draw (2.5,2.88)-- (2.36,3.38);
\draw (2.5,2.88)-- (2.78,3.32);
\draw (6.16,4.24) node[anchor=north west] {$B_a^{ext}$};
\draw (6.16,2.8) node[anchor=north west] {$B_a^{int}$};
\draw (6.16,1.1) node[anchor=north west] {$B'$};
\draw (7,3.52) node[anchor=north west] {$B_a=B_a^{int} \cup B_a^{ext}$};
\draw (7,2.04) node[anchor=north west] {$B=B' \cup B_a$};
\draw (2.5,2.88)-- (3.18,5.22);
\draw (4.5,6.04) node[anchor=north west] {$C$};
\draw (1.44,2.5) node[anchor=north west] {$A$};
\draw (2.5 ,2.9) node {$\bullet$};
\draw[color=black] (2.27,2.71) node {$a$};

\end{tikzpicture}
  \caption{The situation in Case 1.}
  \label{figMoche}

\end{figure}
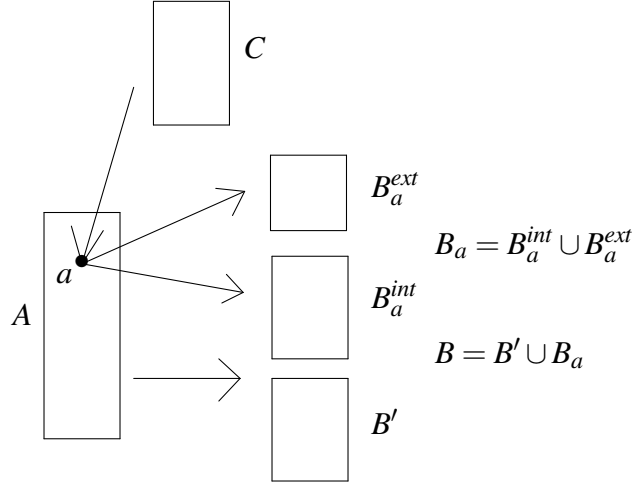

Set $C=N^-(a)$, $B_a=N^+(a)$, $B^{int}_a=N^+(a)\cap B$,
$B^{ext}_a=B_a\setminus B^{int}_a$ and $B'=B\setminus B^{int}_a$ (see
Figure~\ref{figMoche} for a rough picture of the situation).  Note
that by assumption the sizes of $B^{ext}_a$ and $C$ are both at least
$d \geq N_2+4g$.  Because $A \rs_t B$ and $B$ has size at least
$\frac{N_2}{t}$, we also have
$$ |B^{int}_a| \ge N_2.$$

Since $A \rs B$, we have $A \rs B^{int}_a$.  Moreover, $B^{int}_a
\subseteq N^+(a)$ and $C \subseteq N^-(a)$, so $B^{int}_a \rs C$ by
Lemma~\ref{nbr}.  Hence, $A \rs B^{int}_a \rs C$.  All of $A$,
$B^{int}_a$ and $C$ have size at least $N_2$, so, by Lemma~\ref{kkk},
there exist $A_1 \subseteq A$ and $C_1 \subseteq C$ such that $|A_1|
\ge |A|-g$, $|C_1| \ge |C|-g$ and $C_1 \rs A_1$.

\begin{claim}\label{claim:B'}
\[ |B'| \ge |B^{ext}_a|- 5g\geq d-5g \geq N_2+g.\]
\end{claim}
\begin{subproof}
By Lemma~\ref{nbr}, $B_a \rs C_1$.  Hence we have $B_a \rs C_1 \rs
A_1$ and $|B_a|, |C_1|, |A_1| \ge N_2 +g$.  Applying first
Lemma~\ref{kkk} and then Lemma~\ref{kp}, we obtain the existence of
$A_2 \subseteq A_1$ and $B_a^1 \subseteq B_a$ of respective size at
least $|A_1|-2g $ and $| B_a|-2g$ such that $A_2 \rs_t B_a^1$.

Now, we have $A_2 \rs_t B_a^1$, $|A_2|\ge |A| -3g \ge \frac{N_2}{t}$
and $|B_a^1| \ge |B_a|-2g \geq 2d-2g > \frac{N_2}{t}$.
The maximality of $|A|+| B|$ ensures:
\begin{eqnarray*}
|A|+|B| & \geq & |A_2| + |B_a^1|  \\
  |A| + |B_a^{int}| + |B'| & \geq & |A|-3g + |B_a^{int}| + |B_a^{ext}| -2g \\
 |B'| \geq |B_a^{ext}| -5g & \geq & d-5g 
\end{eqnarray*}
\end{subproof}

We shall now prove the existence of $A^*\subseteq A$ and $B^*\subseteq
B' \cup B^{int}_a \cup B^{ext}_a$ each of size at least $N_2/t$ such
that $A^*\rs_t B^*$ and $|A^*|+| B^*| > |A|+| B|$, which
contradicts the maximality of $|A|+| B|$.  The proof is organized as
follows: first using Lemmas~\ref{nbr}, \ref{kp} (several times),
\ref{ppk1} and \ref{kkk}, we show that almost all vertices of $B'$
have many out-neighbours in $C$.  Then we show the same for almost all
vertices in $B_a$.  Using this degree assumption and Lemma~\ref{kp},
we establish the existence of large sets $A_3\subseteq A$ and
$B_3\subseteq B' \cup B_a$ such that $A_3 \rs B_3$.  This main fact,
combined with few other calculations lead to the existence of the
above-mentioned sets $A^*$ and $B^*$.

\medskip

Since $A_1 \rs B$ and $B' \subseteq B$, we have $A_1 \rs B'$.  Thus
$C_1 \rs A_1 \rs B'$ and $|C_1|,|A_1|,|B'| \ge N_2+g$.  So
Lemma~\ref{kkk} ensures that there exist $C_2 \subseteq C_1$ and $B'_1
\subseteq B'$ such that $|C_2| \ge |C_1|-g$, $|B'_1| \ge |B'|-g$ and
$B_1'\rs C_2$.  Now, since $B_1' \rs C_2$ and $|B'_1|,|C_2| \ge N_2$,
by Lemma~\ref{kp}, there exist $B_2' \subseteq B_1'$ and $C_3
\subseteq C_2$ such that $|B_2'| \ge |B'_1|-g$ and $|C_3| \ge |C_2|-g$
such that $B_2' \rs_t C_3$.  So, for all $b \in B'_2$, we have:
\begin{equation}\label{deg1}
 d^+_{C_1}(b) \ge t \cdot |C_3| \ge t \cdot (|C_1|-2g) =
 \left(t-\frac{2tg}{|C_1|}\right)\cdot |C_1|.
\end{equation}

Lemma~\ref{nbr} ensures that $B_a \rs C_1$.  Moreover $|B_a|,|C_1| \ge
N_2$, so by Lemma~\ref{kp}, there exist $B_a^2\subseteq B_a$ and $C_4
\subseteq C_1$ such that $|B_a^2| \ge |B_a|-g$, $|C_4| \ge |C_1|-g$
and $B_a^2\textcolor{red}{\rs}_t C_4$.  So, for all $b \in B_a^2$, we
have:
 \begin{equation}\label{deg2}
 d^+_{C_1}(b) \ge t \cdot |C_4| \ge t \cdot (|C_1|-g) = \left(t-\frac{tg}{|C_1|}\right) \cdot |C_1|.
 \end{equation}

Since $C_1 \rs A_1$ and $|C_1|,|A_1| \ge N_2$, Lemma~\ref{kp} ensures
the existence of $A_3 \subseteq A_1$ and $C_5 \subseteq C_1$ such that
$|A_3| \ge |A_1|-g$, $|C_5| \ge |C_1|-g$ and $C_5
\rs\textcolor{red}{_t} A_3$.  So, for all $a \in A_3$, we have:
\begin{equation}\label{deg3}
d^-_{C_1}(a) \ge t \cdot |C_5|\ge t \cdot (|C_1-g) \ge \left(t-\frac{tg}{|C_1|}\right) \cdot |C_1|.
\end{equation}

Set $p=(t-\frac{tg}{|C_1|})$.  Then (\ref{deg1}), (\ref{deg2}) and
(\ref{deg3}) ensures that for all $b \in B'_2 \cup B_a^2$,
$d^+_{C_1}(b) \ge p|C_1|$ and for all $a \in A_3$, $d^-_{C_1}(a) \ge
p|C_1|$.  Moreover, since $|C_1| >d-g \ge \frac{2tg}{(t-s)}$ and since $t>s$, we have
$p>s$.  Thus Lemma~\ref{ppk1} yields %
$$A_3 \rs B'_2 \cup B_a^2.$$

Let us apply Lemma~\ref{kp} one last time.  Indeed, both $A_3$ and
$B'_2 \cup B_a^2$ have size at least $N_1$.  Thus, there exist $A^*
\subseteq A_3$ and $B^* \subseteq B'_2 \cup B_a^2$ of size
respectively at least $|A_3|-g$ and $| B'_2 \cup B_a^2|-g$ such that
$A^* \rs_t B^*$.

Observe that $|A^*|\geq |A_3|-g=|A_1|-2g=|A|-3g\geq \frac{N_2}{t}$.
and $|B_2'|=|B'|-2g$.  Moreover $|B^*|\geq |B'_2| -g\geq |B'_1|-2g\geq
|B'|-3g \geq d-8g$ by Claim~\ref{claim:B'}.  Since $d\geq
\frac{N_2}{t}+8g$, we have $|B^*| \geq \frac{N_2}{t}$.

Furthermore the following inequalities are satisfied:
\begin{align*}
 |A^*| + |B^*| &\geq |A_3|+|B'_2| +|B_a^2| -2g \\
	      & \ge |A| + |B_a^{int}| + |B_a^{ext}| + |B'| - 7 g \\
	      & \ge |A|+|B|+|B_a^{ext}|-7g\\
	      &>|A|+|B|.
\end{align*}
The first inequality is due to the last extraction.  The second comes
from $|A_3|\geq |A_1|-g\geq |A|-2g$ and $|B_a^2|\geq
|B_a|-g=|B_a^{int}| + |B_a^{ext}| -g $ and $|B_2'|\geq |B'|-2g$.
Finally the last inequality comes from the fact that $B^{ext}$ has
size at least $d$ which is greater than $7g$ by definition.

Thus $A^* \rs_t B^*$, $|A^*| + |B^*| > |A|+|B|$ and both $A^*$ and
$B^*$ have size at least $\frac{N_2}{t}$, a contradiction to the
maximality of $A \rs_t B$.  \medskip
 
\noindent\textbf{Case 2:} $|B| \ge \frac{N_2}{t} +4g$, and there
exists $b \in B$ such that $d^+_{D-(A \cup B)}(b)\ge d$ and $d^-_{D-(A
\cup B)}(b)\ge d$.  \smallskip
 
This case is analogous to Case 1 by directional duality.

 \medskip
 
\noindent\textbf{Case 3:} The remaining case.\smallskip
 
Claim~\ref{claim:big} ensures that there is a vertex $x$ in $A \cup B$
with in- and out-degree at least $d$.  Assume that
$x \in A$.  Since $|A| < \frac{N_2}{t} + 4g$ by Case 1 and $|A|+|B| \ge
4d -2g$ by Equation~(\ref{A+B}), we have $|B| \ge \frac{N_2}{t} +4g$.
So Case 2 ensures that no vertex $b$ of $B$ has in and out-degree at
least $d$ in the complement of $A \cup B$.

Let $b \in B$.  Thus $b$ has in-degree at most $d+ \frac{N_2}{t}
+4g-1$ ($b$ can be incident to the vertices of $A$ plus less than $d$
vertices in $V \setminus (A \cup B)$) or $b$ has out-degree at most
$d$ (there is no arc from $B$ to $A$).  But $d+ \frac{N_2}{t} +4g-1\le
2d-1$, which contradicts Lemma~\ref{tcritical}.

The case where $x \in B$ is obtained similarly by switching Cases 1 and 2 in the proof.
\end{proof}

\section{Forbidding Oriented Paths}\label{secPaths}

\subsection{Forbidding $P^+(3)$ or  $P^+(1,1,1)$}\label{subsec-nochi}

Kierstead and Trotter~\cite{KiTr92} proved that $\Forb(P^+(3))$ is not
$\chi$-bounded.  In fact, they show that an analogue of Zykov's
construction of triangle-free graphs with arbitrarily large chromatic
number yields acyclic $(TT_3, P^+(3))$-free oriented graphs with
arbitrary large chromatic number.  Interestingly, a result of
Galeana-S\'anchez et al.~\cite{GGU10} implies that
$\chi(\Forb(\vec{C}_3, TT_3, P^+(3))\cap {\cal S})=2$.
Galeana-S\'anchez et al.~\cite{GGU10} studied {\it
$3$-quasi-transitive digraphs}, which are digraphs in which for every
directed walk $(u,v,w,z)$ either $u$ and $z$ are adjacent or $u=z$.
In particular, every $(\vec{C}_3, TT_3, P^+(3))$-free oriented graph
is $3$-quasi-transitive.  They characterized the strong $3$-quasi-transitive
digraphs.  They showed that every such graph is either semicomplete,
or semicomplete bipartite, or in the set ${\cal F}$ of oriented graphs
$D$ that have three vertices $\{v_1, v_2, v_3\}$ such that
$A(D)=\{v_1v_2, v_2v_3, v_3v_1\} \cup \bigcup_{u\in V(D)\setminus
\{v_1,v_2,v_3\}} \{v_1u, uv_2\}$.  Recall that a digraph $D$ is {\it
semicomplete} if for any two vertices $u, v\in V(D)$ at least one of
the two arcs $uv$ and $vu$ is in $A(D)$, and that it is {\it
semicomplete bipartite} if there is a bipartition $(A,B)$ of $V(D)$
such that if for any $a\in A$ and $b\in B$, at least one of the two
arcs $ab$ and $ba$ is in $A(D)$.  Since semicomplete digraphs and
members of ${\cal F}$ are not $(\vec{C}_3, TT_3)$-free, strong
$(\vec{C}_3, TT_3, P^+(3))$-free oriented graphs are bipartite
tournaments and consequently have chromatic number at most $2$.  On
the other hand, $\Forb( P^+(3)) \cap {\cal S}$ is not $\chi$-bounded.
Indeed, adding to every acyclic $(TT_3, P^+(3))$-free oriented graph
$D$ a vertex $x$ which dominates all sources of $D$ and is dominated
by all other vertices, we obtain a strong $(\Or(K_4), P^+(3))$-free
oriented graph $D'$ with chromatic number $\chi(D)+1$; since $\chi
(\Forb(\vec{\cal C}, TT_3, P^+(3))) = +\infty$, we get $\chi
(\Forb(\Or(K_4), P^+(3)) \cap {\cal S}) = +\infty$.\\

The {\it shift graph} $Sh_k(n)$, introduced by Erd\H{o}s and
Hajnal~\cite{ErHa76}, is the graph whose vertices are the $k$-element
subsets of $\{1, \dots, n\}$ and two vertices $a=\{a_1, \dots, a_k\}$
and $b=\{b_1, \dots, b_k\}$ are adjacent iff $a_1<a_2=b_2<a_3=b_3<
\dots <a_{k-1}=b_{k-1}<b_k$.  Gy\'arf\'as pointed out that the natural
orientations of shift graphs are in $\Forb(\vec{C}_3,
TT_3,P^+(1,1,1))$ but may have arbitrarily large chromatic number.
Consequently, $\Forb(P^+(1,1,1))$ is not $\chi$-bounded.  Another way
of seeing this is to note that every line oriented graph (i.e. an
oriented graph which is a line digraph) is both $TT_3$-free and
$P^+(1,1,1)$-free and that the line oriented graph of an acyclic
oriented graph is also acyclic.  Now, since it is well known that the
chromatic number of the line digraph of $D$ is at least
$\log(\chi(D))$, this implies that the line oriented graphs of $TT_n$
form a family of oriented graphs in $\Forb(\vec{C}_3,
TT_3,P^+(1,1,1))$ with arbitrarily large chromatic number (which is
consistent with Gy\'arf\'as's remark since natural orientations of
shift graphs are in fact line oriented graphs).  It can be deduced
from Corollary 4.5.2 in \cite{BaGu08} that in fact the class of line
oriented graphs is exactly $\Forb (TT_3, P^+(1,1,1), C(3,1), C(2,2))$,
where $C(3,1)$ (resp.  $C(2,2)$) is the oriented cycle $(a_1, a_2,
a_3, a_4, a_1)$ such that $a_1\ra a_2\ra a_3 \ra a_4 \la a_1$ (resp.
$a_1\ra a_2\ra a_3 \la a_4 \la a_1$).  It follows that
$\Forb(\vec{\cal C}, TT_3, P+(1,1,1), C(3,1), C(2,2))$ has unbounded
chromatic number.

\subsection{Forbidding $P^+(2,1)$}\label{subsec:P21}

\begin{theorem}\label{thm:p21}
$\chi(\Forb (\vec{C}_3, TT_3, P^+(2,1)) =3.$
\end{theorem}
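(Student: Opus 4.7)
The lower bound is immediate: the directed cycle $\vec{C}_5$ is $(\vec{C}_3,TT_3)$-free because it is triangle-free, and its only induced subdigraphs on four vertices arise from four consecutive vertices and therefore form directed paths $P^+(3)$; hence $\vec{C}_5\in\Forb(\vec{C}_3,TT_3,P^+(2,1))$, and since $\chi(\vec{C}_5)=3$ we obtain $\chi(\Forb(\vec{C}_3,TT_3,P^+(2,1)))\geq 3$.

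For the upper bound I argue by contradiction: suppose $D$ is a $4$-critical member of $\Forb(\vec{C}_3,TT_3,P^+(2,1))$. By Proposition~\ref{prop:dkcrit}, $\delta(D)\geq 3$, and the underlying graph $G$ is triangle-free (hence $K_4$-free), so Brooks' Theorem~\ref{thm:brooks} gives $\Delta(G)\geq 4$. The analysis rests on two structural facts. \emph{(O1):} For every vertex $v$, $N^-(v)$ and $N^+(v)$ are stable and no arc of $D$ joins them, since an arc from $N^-(v)$ to $N^+(v)$ together with $v$ would induce a $TT_3$ and an arc in the opposite direction a $\vec{C}_3$. \emph{(O2):} For every arc $b\to c$ of $D$, every $a\in N^-(b)$ and every $d\in N^-(c)\setminus\{b\}$ are adjacent, for triangle-freeness forces $a\not\sim c$ (else $\{a,b,c\}$ induces $TT_3$ or $\vec{C}_3$) and $b\not\sim d$ (as $N^-(c)$ is stable), so if $a\not\sim d$ then $\{a,b,c,d\}$ would induce $P^+(2,1)$.

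Using these facts I would pick a vertex $v$ with $d(v)\geq 4$ and split into cases according to the sizes of $N^-(v)$ and $N^+(v)$. Whenever $v$ has at least two in-neighbours $a_1,a_2$, applying (O2) to the arcs $a_i\to v$ forces every in-neighbour of $a_1$ to be adjacent to $a_2$ (and symmetrically), giving a ``complete bipartite'' pattern between $N^-(v)$ and its second in-neighbourhood. Symmetric propagation on the out-side comes from arcs $v\to b$ leaving $v$ via (O2) applied at each out-neighbour $b$. Combining these forced adjacencies with (O1), which forbids arcs between in- and out-neighbourhoods, and with triangle-freeness, the aim is to extract either a vertex of degree at most $2$, contradicting $\delta(D)\geq 3$, or one of the forbidden induced subdigraphs $\vec{C}_3$, $TT_3$, or $P^+(2,1)$.

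The main obstacle will be the case where $v$ has both $d^-(v)\geq 2$ and $d^+(v)\geq 2$, since the forced adjacencies then propagate two steps in both directions and the bipartite-like structures meet across $v$: the delicate point is to argue that some edge must close a short cycle, producing a triangle or completing an induced $P^+(2,1)$. A further complication is that $P^+(2,1)$ is not invariant under arc reversal, so one cannot appeal to directional duality to halve the number of cases, and both the in-side and out-side propagation arguments must be carried out explicitly.
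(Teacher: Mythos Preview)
Your lower bound via $\vec{C}_5$ is fine. The upper bound, however, is not a proof but a plan: after stating (O1) and (O2) you write that ``the aim is to extract'' a low-degree vertex or a forbidden subdigraph, and you explicitly flag ``the main obstacle'' (the case $d^-(v)\ge 2$, $d^+(v)\ge 2$) without resolving it. As written, no contradiction is ever derived.

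More seriously, I do not see how the plan could be completed along the lines you suggest. Your local constraints (O1) and (O2) are satisfied, with room to spare, by any extension (blow-up into stable sets) of a directed odd cycle $\vec{C}_{2k+1}$: these oriented graphs are $(\vec{C}_3,TT_3,P^+(2,1))$-free, are $3$-chromatic, and can have every vertex of arbitrarily large in- and out-degree. So picking a vertex of degree $\ge 4$ and propagating (O1)--(O2) two steps in each direction cannot by itself force a vertex of degree $\le 2$ or a forbidden subdigraph; the local picture around any vertex of such a blow-up is exactly the ``complete bipartite'' pattern you describe, yet nothing goes wrong. To reach a contradiction from $4$-criticality you would need a genuinely global argument, not just neighbourhood bookkeeping, and none is indicated.

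The paper proceeds quite differently. It proves a structural lemma: in any $D\in\Forb(\vec{C}_3,TT_3,P^+(2,1))$, every odd hole is a directed cycle, every odd hole lies in an initial strong component, and in each strong component there is a stable set meeting all odd holes (indeed, the component is an extension of a directed odd cycle). Removing these stable sets from the initial strong components leaves a triangle-free graph with no odd hole, hence a bipartite graph, and one obtains a proper $3$-colouring. The key idea you are missing is precisely this analysis of odd holes and strong components; the Brooks/minimum-degree route does not appear to reach it.
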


This result will be a consequence of the following lemma.

\begin{lemma}\label{lem-p21}
Let $D$ be a $(\vec{C}_3, TT_3, P^+(2,1))$-free oriented graph. Then the following holds
\begin{enumerate}
\item[(1)] 
Every oriented odd hole in $D$ is directed.
\item[(2)] 
If a strong component of $D$ contains an odd hole, then it is an
initial strong component.
\item[(3)] 
If $D$ is strongly connected, then there is a stable set $S$ that
intersects every odd hole of $D$.
\end{enumerate}
\end{lemma}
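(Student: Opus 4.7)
For (1), assume for contradiction that $C=(v_1,\dots,v_{2k+1})$ is an odd hole whose orientation is not a directed cycle; then $C$ contains at least one source or sink. If $v_i$ is a source, then $v_i\to v_{i+1}$; if in addition $v_{i+1}$ were not a sink, then $v_{i+1}\to v_{i+2}$, and forbidding $P^+(2,1)$ on $\{v_i,v_{i+1},v_{i+2},v_{i+3}\}$ forces $v_{i+2}\to v_{i+3}$. Iterating shows that $C$ would become a fully directed cycle, contradicting the existence of the source $v_i$. So the successor of a source is a sink, and symmetrically the successor of a sink is a source. The resulting strict alternation of sources and sinks around $C$ forces $|V(C)|$ to be even, contradicting oddness.

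For (2), let $K$ be the strong component containing $C$ (a directed cycle by (1)) and suppose, for contradiction, that $K$ is not initial, so some $u\notin K$ has an arc into $K$, and strong connectivity provides a shortest dipath $P\colon u=y_0\to y_1\to\cdots\to y_n=v_i\in C$ with $n\ge 1$. If $n=1$, then $TT_3$- and $\vec{C}_3$-freeness force $u$ non-adjacent to $v_{i\pm 1}$, and the 4-set $\{v_{i-2},v_{i-1},v_i,u\}$ would induce $P^+(2,1)$ unless $v_{i-2}$ is adjacent to $u$; an incoming arc $v_{i-2}\to u$ places $u$ on a dicycle of $K$, so $u\to v_{i-2}$, and iteration of this argument yields $u\to v_{i-2j}$ for all $j$, which by $\gcd(2,2k+1)=1$ eventually covers $v_{i-1}$, contradicting non-adjacency. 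If $n\ge 2$, then minimality of $P$ forces $y_j\in K$ for all $j\ge 1$ and $y_{n-2}$ non-adjacent to $v_i$ (a direct arc $y_{n-2}\to v_i$ would shorten $P$; the reverse $v_i\to y_{n-2}$ creates $\vec{C}_3$ with $y_{n-1}$). Inspecting $\{y_{n-2},y_{n-1},v_i,v_{i-1}\}$ either yields an induced $P^+(2,1)$ (when $y_{n-2}$ and $v_{i-1}$ are non-adjacent) or forces $v_{i-1}\to y_{n-2}$. Applying the same 4-set argument iteratively to $\{y_{n-j-1},y_{n-j},v_{i-j+1},v_{i-j}\}$, using $TT_3$-freeness on $\{v_{i-j+1},y_{n-j-1},y_{n-j}\}$ and minimality to rule out shortening arcs, propagates $v_{i-j}\to y_{n-j-1}$ for $j=1,\dots,n-1$; at $j=n-1$ this gives $v_{i-(n-1)}\to u$, which combined with $u\to y_1\in K$ puts $u$ in $K$, the final contradiction.

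For (3), induct on $|V(D)|$; the case where $D$ has no odd hole is trivial. Otherwise, pick a vertex $v$ in some odd hole and let $K_1,\dots,K_m$ be the strong components of $D-v$ that contain an odd hole; each is initial by (2). Every odd hole of $D$ either contains $v$ or lies entirely in some $K_i$, so it suffices to produce, in each $K_i$, a stable set $S_i$ hitting every odd hole of $K_i$ \emph{and} disjoint from $N_D(v)$; then $S=\{v\}\cup\bigcup_i S_i$ is stable in $D$ and hits every odd hole. The key structural fact enabling this is that no directed cycle can be entirely contained in $N_D(v)=N_D^+(v)\cup N_D^-(v)$: both $N_D^+(v)$ and $N_D^-(v)$ are stable by $TT_3$-freeness, and there is no arc from $N_D^+(v)$ to $N_D^-(v)$ by $\vec{C}_3$-freeness, leaving no room for a directed cycle. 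Hence every odd hole of $K_i$ contains at least one vertex outside $N_D(v)$, which makes it plausible to run induction with a strengthened hypothesis keeping track of a ``forbidden'' vertex set that a transversal must avoid. The main obstacle is precisely to formulate and maintain this stronger statement: one must check that at each level of the recursion the forbidden set never swallows an entire odd hole, using the observation above together with the restrictions that the three forbidden subdigraphs place on the interaction between $v$ and the initial components $K_i$.
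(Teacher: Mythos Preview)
Your arguments for (1) and (2) are correct, though more involved than needed; the paper dispatches (1) in one line (an odd hole contains two consecutive forward arcs, and if it is not fully directed the first reversal yields an induced $P^+(2,1)$). In (2), your assertion that ``minimality of $P$ forces $y_j\in K$ for all $j\ge 1$'' is neither justified nor true in general, but it is also not needed: once your iteration produces the arc $v_{i-(n-1)}\to u$, the path $P$ itself already shows that $u$ reaches $C$, so $u$ lies in the strong component of $C$, which is the desired contradiction.

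The genuine gap is in (3). Your inductive step requires, for each $K_i$, a stable transversal of the odd holes of $K_i$ that in addition avoids $N_D(v)$. This is strictly stronger than the statement you are trying to prove, so the induction does not close as written. You acknowledge this and propose strengthening the hypothesis, but you neither state the strengthened form nor verify it, and the natural candidate does not obviously survive: at recursion depth $t$ the forbidden set becomes $\bigcup_{s\le t} N(v_s)$, the union of neighbourhoods of all previously chosen pivots, and your observation that a \emph{single} $N_D(v)$ contains no directed cycle says nothing about such a growing union swallowing an odd hole deeper in the recursion. As it stands, (3) is a proof sketch with its main difficulty left open.

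The paper takes a non-inductive route. It proves a structural claim: if $C=(v_1,\dots,v_q,v_1)$ is a directed odd cycle in a $(\vec C_3,TT_3,P^+(2,1))$-free oriented graph, then every $x\in\Reach^-(C)\setminus V(C)$ is a \emph{$C$-twin} of some $v_i$, meaning $N^-(x)\cap V(C)=\{v_{i-1}\}$ and $N^+(x)\cap V(C)=\{v_{i+1}\}$. When $D$ is strong this covers every vertex, so $V(D)$ is partitioned into twin classes $T_1,\dots,T_q$, and one checks that any arc between $T_i$ and $T_j$ forces $|i-j|\equiv 1\pmod q$ (otherwise a suitable $4$-tuple induces $P^+(2,1)$). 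Hence $D-T_1$ has no odd cycle and $T_1$ is stable (it lies in $N^-(v_2)$), so $S=T_1$ is the desired transversal. This structural description is exactly what your approach is missing; once you have it, incidentally, your avoidance requirement becomes trivial, since one may choose any $T_i$ with $i\notin\{j-1,j+1\}$ where $v\in T_j$.
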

We first observe that this lemma implies Theorem~\ref{thm:p21}.

\begin{proof}[Proof of Theorem~\ref{thm:p21} assuming Lemma~\ref{lem-p21}.]
Let $D_1, \dots , D_p$ be the initial strong components of $D$.  By
(3), for every $1\leq k\leq p$, there exists a stable set
$S_k\subseteq V(D_k)$ such that $D_k-S_k$ has no odd holes.  Now
$S=S_1\cup \dots \cup S_p$ is also a stable set because there is no
arc between two initial strong components, and by (1) and (2), $S$ is
a stable set that intersects every odd hole of $D$.  Since $D$ is
$(\vec{C}_3, TT_3)$-free, this implies that $D\setminus S$ is
bipartite, which concludes the proof.
\end{proof}

It remains to prove Lemma~\ref{lem-p21}.  

\begin{proof}[Proof of Lemma \ref{lem-p21}]\ To prove (1) it suffices to observe that every oriented odd
hole contains a directed path of size at least $2$.  Thus unless it is
directed it contains a $P^+(2,1)$.\\

Let us prove a claim that will imply (2) and (3).  A vertex $x\in
V(D)\setminus V(C)$ is a {\it $C$-twin of $v_i$} if $N^-(x)\cap
V(C)=\{v_{i-1}\}$ and $N^+(x)\cap V(C)=\{v_{i+1}\}$ (indices are taken
modulo $q$).

\begin{claim}\label{claim-odd-cycle}
Let $C=(v_1, \dots ,v_q, v_1)$ be a directed odd cycle in $D$, and let
$x$ be a vertex in $V(D)\setminus V(C)$.  Then:
\begin{itemize}
\item[(i)] 
$x$ is dominated by at most one vertex of $C$.
\item[(ii)] 
If there is $i$, such that $x$ dominates $v_{i+1}$ , then $x$ is a
$C$-twin of $v_i$.  
\item[(iii)] 
If $x\in \Reach^-(C)$, then $x$ is the $C$-twin of some $v_i$.
\end{itemize}
\end{claim}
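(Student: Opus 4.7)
The overall plan is to prove (i) by a minimal-gap extremal argument, deduce (ii) by a local analysis of $x$'s neighbourhood in $C$ combined with a walk around $C$, and obtain (iii) by induction on the distance from $x$ to $C$. A useful observation throughout is that, by part~(1) of Lemma~\ref{lem-p21}, we may take $C$ to be a directed odd hole, and since $D$ is $\vec{C_3}$-free we have $q \geq 5$, so any two non-consecutive vertices of $C$ are non-adjacent.

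For (i), I suppose by contradiction that two vertices $v_i, v_j$ of $C$ dominate $x$, and choose them so as to minimise the directed $C$-distance $d$ from $v_i$ to $v_j$. If $d=1$, then $\{v_i, v_{i+1}, x\}$ is a $TT_3$. If $d \geq 3$, the minimality of $d$ together with $\vec{C_3}$-freeness forces $v_{j-1}$ to be non-adjacent to $x$, and $\{v_i, v_{j-1}, v_j, x\}$ induces the forbidden $P^+(2,1)$ via $v_{j-1} \to v_j \to x \leftarrow v_i$. The case $d=2$ needs one extra step: one first shows $v_{i+1}$ is non-adjacent to $x$ (using triangle-freeness), then that $v_{i-1}$ must be adjacent to $x$ (else $\{v_{i-1}, v_i, x, v_{i+2}\}$ induces $P^+(2,1)$), and finally observes that each possible orientation of the edge $v_{i-1}x$ produces a triangle on $\{v_{i-1}, v_i, x\}$ ($TT_3$ or $\vec{C_3}$), a contradiction.

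For (ii), assume $x \to v_{i+1}$. Triangle-freeness immediately gives $v_i$ and $v_{i+2}$ non-adjacent to $x$. If $v_{i-1}$ were also non-adjacent to $x$, then $\{v_{i-1}, v_i, v_{i+1}, x\}$ would induce $P^+(2,1)$; so $v_{i-1}$ is adjacent to $x$. The key step is to rule out $x \to v_{i-1}$: assuming it, the same local analysis applied to the arc $x \to v_{i-1}$ forces $v_{i-3}$ to be adjacent to $x$, and then one walks around $C$ in steps of $-2$. At each stage, the newly adjacent vertex must satisfy $x \to v_{i-1-2k}$, for otherwise $\{v_{i-1-2k}, x, v_{i+1}, v_i\}$ induces $P^+(2,1)$ via $v_{i-1-2k} \to x \to v_{i+1} \leftarrow v_i$. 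Because $q$ is odd, $\gcd(2,q)=1$, so the walk visits every vertex of $C$ and eventually reaches $v_i$ (or $v_{i+2}$), which is known to be non-adjacent to $x$, a contradiction. Hence $v_{i-1} \to x$. Part~(i) then forbids any other in-neighbour of $x$ in $C$, and the same walking argument rules out any additional out-neighbour; so $x$ is a $C$-twin of $v_i$.

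For (iii), I induct on the length $m$ of a shortest directed $(x, C)$-path. The base case $m=1$ is exactly~(ii). For $m \geq 2$, let $x \to y$ be its first arc; $y$ has a strictly shorter path to $C$, so by induction $y$ is a $C$-twin of some $v_j$. In particular $y \to v_{j+1}$, giving $x \to y \to v_{j+1}$ and hence $m \leq 2$. The set $(V(C) \setminus \{v_j\}) \cup \{y\}$ induces a directed odd hole $C^y$ of length $q$ (because $y$'s only $C$-neighbours are $v_{j-1}$ and $v_{j+1}$, and $y$ is non-adjacent to $v_j$ by the twin property), and applying~(ii) to $C^y$ yields that $x$ is a $C^y$-twin of $v_{j-1}$. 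Combined with~(i) applied to $C$ (to exclude $v_j \to x$), this forces $N^-(x) \cap V(C) = \{v_{j-2}\}$ and $N^+(x) \cap V(C) \subseteq \{v_j\}$. If $x$ and $v_j$ were non-adjacent, then $\{x, y, v_{j+1}, v_j\}$ would induce $P^+(2,1)$ via $x \to y \to v_{j+1} \leftarrow v_j$, a contradiction; therefore $x \to v_j$ and $x$ is a $C$-twin of $v_{j-1}$. The main technical obstacle is the walk-around-$C$ step in the proof of (ii), which is the only place where the odd length of $C$ is truly essential; all other steps reduce to exhibiting an induced $P^+(2,1)$ on four carefully chosen vertices.
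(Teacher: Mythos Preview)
Your proof is correct and follows essentially the same route as the paper's: (i) is an extremal argument on the gap between two in-neighbours on $C$, (ii) is the same walk-around-$C$ in steps of $-2$ exploiting the odd length, and (iii) is the same cycle-modification trick of replacing $v_j$ by its twin $y$ and reapplying (ii). Your case analysis in (i) (separating $d=1$, $d=2$, $d\ge 3$) is actually more careful than the paper's one-line version, and your inductive framing of (iii) is a clean equivalent of the paper's contradiction argument via a shortest dipath.
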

\begin{subproof}
(i) Assume for a contradiction that $x$ is dominated by two vertices
in $C$.  Without loss of generality, we may assume that these two
vertices are $v_1$ and $v_i$ with $i < q/2$.  Then $(v_q,v_1, x,
v_{i})$ is an induced $P^+(2,1)$ in $D$, a contradiction.

\medskip 

(ii) Assume that $x\ra v_{i+1}$.  The path $(v_{i-1}, v_i, v_{i+1},
x)$ is a $P^+(2,1)$.  It is not induced, so $v_{i-1}\in N(x)$.  If
$x\ra v_{i-1}$, then with the same reasoning $v_{i-3} \in N(x)$.  We
can repeat this process as long as $x\ra v_{i+1- 2j}$.  However, this
process has to stop since $x$ is not adjacent to $v_{i+2}=
v_{i+1-2\lfloor q/2\rfloor}$.  Consequently, there exists $j$ such
that $x\la v_{i+1-2j}$ and $x\ra v_{i+1-2j'}$ for all $0\leq j' < j$.
But $j=1$ for otherwise $(v_{i+1-2j}, x, v_{i+1}, v_i)$ is an induced
$P^+(2,1)$.  Hence, $v_{i-1}\ra x$.

Now $x$ does not dominate any vertex $v_j\in V(C)\setminus
\{v_{i+1}\}$ for otherwise by the above reasoning both $v_{j-2}$ and
$v_{i-1}$ would dominate $x$, a contradiction to (i).  Therefore $x$
is a $C$-twin of $v_i$.

\medskip

(iii) Assume for a contradiction that $x\in \Reach^-(C)$ and $x$ is
not the $C$-twin of any $v_i$.  Let $P$ a be a shortest dipath from
$x$ to $C$.  Such a dipath exists because $x\in \Reach^-(C)$, and by
(ii), $P$ has length at least $2$.  Let $v_{i+1}$ be the terminal
vertex of $P$, $u$ its in-neighbour in $P$ and $t$ the in-neighbour of
$u$ in $P$.  The path $(t,u, v_{i+1}, v_i)$ is a $P^+(2,1)$, which is
not induced, so $t$ and $v_i$ are adjacent.  But $t$ does not dominate
$v_i$ since $P$ is a shortest dipath from $x$ to $C$, so $v_i\ra t$.

Since $u$ dominates $v_{i+1}$, we obtain that $u$ is a $C$-twin of
$v_i$ by (ii).  Therefore $C'=(v_1, \dots , v_{i-1},$ $u, v_{i+1},
\dots , v_q, v_1)$ is also a directed odd cycle.  By (ii), $t$ is a
$C'$-twin of $v_{i-1}$.  In particular, $v_{i-2}\ra t$.  This gives a
contradiction to (i) as $t$ is dominated by $v_{i-2}$ and $v_i$.
\end{subproof}

\noindent (2) now clearly follows from
Claim~\ref{claim-odd-cycle}~(iii).\\

\noindent (3) Suppose that $D$ is strongly connected.  If $D$
contains no oriented odd hole, then the result holds with
$S=\emptyset$.  If $D$ contains an odd hole $C=(v_1, \dots, v_q,
v_1)$, then it is directed by (1) and by Claim~\ref{claim-odd-cycle},
every vertex of $D$ is the $C$-twin of some $v_i$.  For $1\leq i\leq
q$, let $T_i$ be the set $C$-twins of $v_i$ plus $v_i$.  Observe that
if $xy\in A(D)$ with $x\in T_i$ and $y\in T_j$, then $|i-j| =1 \mod
q$, for otherwise $(v_{i-1},x,y,v_{j-1})$ would be an induced
$P^+(2,1)$.  It follows that $D-T_1$ has no odd cycles, and $T_1$ is
a stable set because all vertices in $T_1$ are in $N^-(v_2)$.  Thus
$T_1$ is our desired $S$.
\end{proof}

\begin{remark}
Wang and Wang~\cite{WaWa09} study a class of digraphs that contains
$\Forb (\vec{C}_3, TT_3, P^+(2,1))$.  A digraph is {\it arc-locally
in-semicomplete} if for any pair of adjacent vertices $x$, $y$, every
in-neighbour of $x$ and every in-neighbour of $y$ are either adjacent
or the same vertex.  Observe that the oriented graphs of $\Forb
(\vec{C}_3, TT_3, P^+(2,1))$ are {\it arc-locally in-semicomplete}.
In particular, \cite{WaWa09} characterizes strong arc-locally
in-semicomplete digraphs.  This characterization implies that every
strong oriented graph in $\Forb(P^+(2,1))$ is either a bipartite
tournament (i.e. the orientation of a complete bipartite graph) or an
extension of a directed cycle.  This directly implies
Lemma~\ref{lem-p21}~(3).

%
%
%
%
\end{remark}

\subsubsection{Forbidding $TT_3$ and $P^+(2,1)$.}

We shall now prove that $\chi(\Forb(TT_3,P^+(2,1))) = 4$. 
 Here is a short sketch of the proof.  We first
describe precisely the structure of a strong $(TT_3,P^+(2,1))$-free
oriented graph that contains an odd hole (see Lemma~\ref{oddholes}).
This permits us to colour such oriented graphs, more precisely we
distinguish between two cases, if the oriented graph contains an odd
hole of length $7$ or more, then it is $3$-colourable; if it contains
an odd hole of length $5$, then it is $4$-colourable.  We also give a
tight example in the second case (see Lemmas~\ref{7hole}
and~\ref{5hole}).  Finally we show how to $4$-colour any
$(TT_3,P^+(2,1))$-free oriented graph (Theorem~\ref{TT3P+(2,1)}).

\medskip

\begin{lemma}\label{oddholes} 
Let $D$ be a digraph in $\Forb(TT_3, P^+(2,1))$, and let $H=(v_1,
\ldots , v_{2k+1},v_1)$, $k\geq 2$, be an odd hole in $D$.  Then:
  \begin{itemize}
  \item[(i)] $H$ is directed.
  \item[(ii)] If $u\in \Reach^-(H)\setminus V(H)$, then $u$ is adjacent to some vertex of $H$.
  \item[(iii)] If $v$ dominates a vertex in $V(H)$, then either there is an index $i$ such that $vv_{i},v_{i-2}v$ are the only two arcs between $v$ and $V(H)$ or $k=2$ and there are exactly three arcs between $V(H)$ and $v$ and these are either $vv_{i},v_{i-2}v,vv_{i+2}$ or $vv_{i},v_{i-2}v, v_{i+1}v$ for some $i\in \{1, \dots , 5\}$.
    \item[(iv)] If $N^+(v)\cap V(H)=\emptyset$ but  $N^-(v)\cap V(H)\neq \emptyset$, then $|N^-(v)\cap V(H)|=1$.
    \item[(v)] $H$ is contained in an initial strong component of $D$.
  \end{itemize}
\end{lemma}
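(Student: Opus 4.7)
We handle the five items in the order (i), (iv), (iii), (ii), (v), so that (iv) is available in (ii) and (iii) is available in (v).

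For (i), the arcs of $H$ cannot strictly alternate in direction since $|V(H)|=2k+1$ is odd; hence two consecutive arcs point the same way, and if $H$ is not directed, this directed subpath eventually meets an arc in the opposite direction, yielding four consecutive vertices $v_{j-1}\to v_j\to v_{j+1}\leftarrow v_{j+2}$ that form an induced $P^+(2,1)$ since $H$ is chordless. For (iv), suppose $v$ has two distinct in-neighbours $v_i,v_j\in V(H)$ and no out-neighbour in $V(H)$, and let $d$ be the cyclic distance between $i$ and $j$. If $d=1$, $\{v_j,v_i,v\}$ induces $TT_3$. If $d=2$ (say $j=i-2$), first observe that $v_{i-3}$ is non-adjacent to $v$ (either direction yields $TT_3$ via $v_{i-2}$, and $v$ has no out-neighbour in $V(H)$), and then $(v_{i-3},v_{i-2},v,v_i)$ induces $P^+(2,1)$. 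If $d\geq 3$, then $(v_{i-1},v_i,v,v_j)$ induces $P^+(2,1)$ directly, the potential chords corresponding either to hole distances at least $2$ or to the non-adjacency of $v_{i-1}$ and $v$ (again by $TT_3$).

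For (iii), fix $v$ with $v\to v_i$. Two preliminary steps drive the argument: $v$ and $v_{i-1}$ are non-adjacent (both orientations close a $TT_3$ with $v_i$); and $v$ and $v_{i-2}$ are adjacent, for otherwise $(v_{i-2},v_{i-1},v_i,v)$ would induce $P^+(2,1)$. A case analysis then proceeds on the direction of the arc between $v$ and $v_{i-2}$ and on the potential adjacencies of $v$ with each remaining hole vertex $v_\ell$. When $v_\ell$ is a hole-neighbour of a vertex already identified as adjacent to $v$, $TT_3$-freeness forces or forbids the arc. For $v_\ell$ further along $H$, a shifted tuple of the form $(v_{\ell-3},v_{\ell-2},v,v_\ell)$ or $(v_{\ell-3},v_{\ell-2},v_{\ell-1},v)$ exhibits an induced $P^+(2,1)$ unless a further non-adjacency holds, which in turn cascades into still further constraints. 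For $k\geq 3$ the full analysis leaves only the two arcs $v\to v_i$ and $v_{i-2}\to v$; for $k=2$ the $5$-hole wraps around, so that $v_{i-3}=v_{i+2}$ and $v_{i-4}=v_{i+1}$, which invalidates the shifted-tuple arguments and leaves exactly the two additional 3-arc configurations listed in the statement.

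For (ii), we induct on $d(u,H)$. Suppose for contradiction that $u\in\Reach^-(H)\setminus V(H)$ is non-adjacent to $V(H)$, with $d(u,H)$ minimum among such vertices; then $d(u,H)\geq 2$. If $d(u,H)=2$ with shortest dipath $(u,u_1,v_i)$, then $(u,u_1,v_i,v_{i-1})$ induces $P^+(2,1)$: two of the three possible chords vanish because $u$ is non-adjacent to $V(H)$, and the third vanishes because $u_1$ is non-adjacent to $v_{i-1}$ (each orientation closing a $TT_3$ through $v_i$). If $d(u,H)\geq 3$, then $u_1$ must be adjacent to $V(H)$ (otherwise $u_1$ would contradict the minimality of $u$), and by minimality of the dipath $u_1$ has no out-arc into $V(H)$; hence $v_\ell\to u_1$ for some $\ell$, and $\ell$ is unique by part (iv) applied to $u_1$. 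Then $(v_{\ell-1},v_\ell,u_1,u)$ induces $P^+(2,1)$: the would-be chord between $v_{\ell-1}$ and $u_1$ is ruled out by (iv) (in one direction) and by path minimality (in the other), and the two remaining potential chords are ruled out by $u$'s non-adjacency to $V(H)$.

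Finally, (v) is immediate from (iii): if $w\notin V(H)$ satisfies $w\to v_i$, then by (iii) we also have $v_{i-2}\to w$, so $v_i\to v_{i+1}\to\cdots\to v_{i-2}\to w$ is a dipath from $V(H)$ to $w$, placing $w$ in the strong component of $V(H)$; hence no arc enters this component from outside, so it is initial. The main obstacle will be the detailed case analysis in (iii), especially the $k=2$ regime where wrap-around in the $5$-hole invalidates the shifted $P^+(2,1)$ arguments used for $k\geq 3$ and one must verify by hand that the only arc configurations avoiding both $TT_3$ and induced $P^+(2,1)$ are precisely the two triples of arcs described in the statement.
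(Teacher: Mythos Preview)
Your plan follows essentially the same ideas as the paper's proof (the paper proves the items in the order (i)--(v) and does not use (iv) inside (ii), but your reordering and the use of (iv) to streamline (ii) are perfectly fine). Parts (i), (iv), and (ii) are correct as you sketch them, and your outline for (iii) is the right shape, though it is considerably less detailed than the paper's case analysis.

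There is, however, a genuine gap in your argument for (v). You show that any vertex $w\notin V(H)$ with $w\to v_i$ lies in the strong component $K$ containing $H$, and then conclude ``hence no arc enters this component from outside, so it is initial.'' That inference is not valid: you have only ruled out arcs from $V(D)\setminus K$ into $V(H)$, not arcs from $V(D)\setminus K$ into $K\setminus V(H)$. If $x\notin K$, $y\in K\setminus V(H)$, and $x\to y$, your argument says nothing about $x$. To close the gap you need (ii) as well: such an $x$ lies in $\Reach^-(H)$ (since $y$ reaches $H$ inside $K$), so by (ii) $x$ is adjacent to $V(H)$; if $x\to v_j$ then (iii) gives $v_{j-2}\to x$ and $x\in K$, while if $v_j\to x$ then $x\in\Reach^+(H)\cap\Reach^-(H)\subseteq K$, a contradiction either way. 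This is exactly how the paper handles (v), combining (ii) and (iii); claiming that ``(v) is immediate from (iii)'' alone is not enough.
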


\begin{proof} 
In all this proof, indices of the $v_i$ are modulo $2k+1$.

(i) Every oriented odd hole contains a directed path of size at least
2.  Thus, unless it is directed, it contains a $P^+(2,1)$.

\medskip

(ii) Let $u$ be a vertex in $\Reach^-(H)\setminus V(H)$ outside $H$.
Let $P=(x_0, x_1\ldots , x_q)$ be a shortest $(u,V(H))$-dipath.
(Hence $u=x_0$).  If $q=1$ there is nothing to prove, so assume $q\geq
2$.  We may assume, by relabelling $V(H)$ if necessary, that
$x_q=v_{2k+1}$.  As $D$ is $TT_3$-free, the vertices $x_{q-1}$ and
$v_{2k}$ are not adjacent.  Consequently, as $D$ is $P^+(2,1)$-free,
$x_{q-2}$ must be adjacent to either $v_{2k+1}$ or to $v_{2k}$.  By
the minimality of $P$ the arc will enter $x_{q-2}$ in both cases.
Thus, since $D$ is $TT_3$-free, $D$ contains exactly one of those
arcs.  If $q=2$, we are done since $u$ is adjacent to a vertex of $H$
so suppose $q\geq 3$.  If $v_{2k}\ra x_{q-2}$ (resp.  $v_{2k+1}\ra
x_{q-2}$), then since $D$ is $(TT_3, P^+(2,1))$-free, the vertices
$v_{2k-1}$ (resp.  $v_{2k}$) and $x_{q-3}$ are adjacent, so, by
minimality of $P$, $v_{2k-1}\ra x_{q-3}$ (resp.  $v_{2k}\ra x_{q-3}$).
And so on by induction, one proves that there is an arc from $H$ to
$x_{q-4}, x_{q-5},$ etc until we get an arc from $H$ to $u$.  This
proves (ii).

\medskip

(iii) Let $v$ be a vertex in $V(D)\setminus V(H)$ that dominates a
vertex, say $v_i$, in $H$.  Moreover, without loss of generality, we
may assume that $vv_{i-2}$ is not an arc.  Indeed if $v$ dominates
$v_{i-2}$ for all $i$, then $v$ would dominate all vertices of $H$ and
$D$ would contain a $TT_3$.

Since $D$ is $TT_3$-free, then $v$ and $v_{i-1}$ are not adjacent.
Now there can be no arc $v_jv$ with $j\not\in \{i-2,i, i+1\}$ for
otherwise $(v_j,v,v_i, v_{i-1})$ would be an induced $P^+(2,1)$.
Furthermore, since $D$ has no induced $P^+(2,1)$, there is an arc
between $v$ and $v_{i-2}$.  By our assumption, this arc is
$v_{i-2}v_i$.  Now, there can be no arc $vv_j$ with $j\not\in
\{i-3,i\}$ for otherwise $(v_{i-2},v,v_j, v_{j-1})$ would be an
induced $P^+(2,1)$ or $D\langle\{v, v_{j-1}, v_j\}\rangle$ would be a
$TT_3$.  Consequently, in addition to $vv_i$ and $v_{i-2}v_i$, the
only possible arcs between $v$ and $H$ are $v_{i+1}v$ and $vv_{i-3}$.
If $k\geq 3$, then $vv_{i-3} \notin A(D)$ for otherwise $(v_{i-5},
v_{i-4}, v_{i-3}, v)$ is an induced $P^+(2,1)$, and $v_{i+1}v \notin
A(D)$, for otherwise $(v_{i-3}, v_{i-2}, v, v_{i+1})$ is an induced
$P^+(2,1)$.

If $k=2$, then $i-3=i+2$.  Both $v_{i+1}v$ and $vv_{i+2}$, cannot be
arcs for otherwise $\{v, v_{i+1}, v_{i+2}\}$ induces a $TT_3$.  This
completes the proof of (iii).

\medskip

(iv) Assume for a contradiction that $N^+(v)\cap V(H)=\emptyset$ and
$|N^-(v)\cap V(H)|\geq 2$.  There are distinct induces $i$ and $j$
such that $v_iv$ and $v_jv$ are arcs.  Observe that $i\notin \{j-1,
j+1\}$ because $D$ has no $TT_3$, and $v_{j-1}$ and $v$ are not
adjacent because $N^+(v)\cap V(H)=\emptyset$ and $D$ has no $TT_3$.
If $|j-2|\neq 2$ then $(v_{j-1},v_j,v,v_i)$ is an induced $P^+(2,1)$ and if $i=j-2$, then $(v_{i-1},v_i,v,v_j)$  is an induced $P^+(2,1)$, so we obtain the desired contradiction.
\medskip

(v) Suppose for a contradiction that $H$ is contained in a strong
component $C$ that is not initial.  Then there is a vertex $u\in
\Reach^-(H)\setminus V(C)$ such that $u$ belongs to an initial component.  By (ii), $u$ is adjacent to a vertex in
$H$.  If $u$ dominates a vertex in $H$, then by (iii) it is also
dominated by a vertex of $H$.  Hence in any case, $u$ is dominated by
a vertex of $H$.  But this implies that $u\in C$, a contradiction.
 \end{proof}

\begin{lemma}\label{7hole}
Let $D$ be a strong digraph in $\Forb(TT_3, P^+(2,1))$.  If $D$
contains an odd hole $H$ with at least 7 vertices, then $D$ is an
extension of $H$.  In particular $\chi(D)=3$.
\end{lemma}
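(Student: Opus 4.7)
My plan is to show that $D$ is an extension of $H=(v_1,\dots,v_{2k+1})$; the conclusion $\chi(D)=3$ will then be immediate, since $H$ already forces $\chi(D)\ge 3$ and a proper $3$-colouring of $\vec{C}_{2k+1}$ lifts to $D$ once each part is seen to be stable. For each $i$, I set
$$V_i=\{v_i\}\cup\{u\in V(D)\setminus V(H):v_{i-1}\to u\text{ and }u\to v_{i+1}\text{ are the only arcs between }u\text{ and }V(H)\},$$
the \emph{twin class} of $v_i$. I then aim to prove successively that the $V_i$ partition $V(D)$, that each $V_i$ is stable, and that the arcs of $D$ are exactly the pairs $(u,u')\in V_i\times V_{i+1}$ oriented $u\to u'$.

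For the partition, Lemma~\ref{oddholes}(v) together with the strong connectivity of $D$ gives $V(D)=\Reach^-(V(H))$, so by Lemma~\ref{oddholes}(ii) every $v\notin V(H)$ is adjacent to $V(H)$. If $v$ has an out-neighbour in $V(H)$ then Lemma~\ref{oddholes}(iii) (using $k\ge 3$) places $v$ in some twin class. The delicate remaining possibility is that $v$ has only in-neighbours in $V(H)$, which Lemma~\ref{oddholes}(iv) still permits with a unique such in-neighbour. I would pick such a $v$ at minimum distance $d$ to $V(H)$ and observe $d=2$; if $v\to u_1\to v_i$ is a shortest out-path, then $u_1$ is a twin of $v_{i-1}$ by (iii). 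The $4$-tuple $(v,u_1,v_i,v_{i-1})$ would induce a $P^+(2,1)$ unless $v$ is adjacent to $v_i$ or to $v_{i-1}$, which (as $v$ has no out-neighbour in $V(H)$) forces either $v_i\to v$ or $v_{i-1}\to v$. Now $H'=(v_1,\dots,v_{i-2},u_1,v_i,\dots,v_{2k+1})$ is another induced odd hole of length $2k+1$ in $D$, and applying Lemma~\ref{oddholes}(iii) to $v$ and $H'$ leads to a contradiction: the ``exactly two arcs'' clause demands the in-arc at $v$ come from $v_{i-3}$ (the vertex two steps before $u_1$ in $H'$), but in the first subcase the in-arc is from $v_i\ne v_{i-3}$ modulo $2k+1\ge 7$, and in the second subcase $v_{i-1}\notin V(H')$ leaves $v$ with only one $V(H')$-arc.

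Stability of each $V_i$ is immediate: an internal arc $u\to u'$ together with $v_{i-1}\to u$ and $v_{i-1}\to u'$ from the twin definition would induce a $TT_3$. The location and orientation of cross-arcs is governed by the same hole-surgery trick: for an arc $u\to u'$ with $u\in V_i$, $u'\in V_j$ and $j\notin\{i-1,i,i+1\}$, the case $u'\in V(H)$ or $u\in V(H)$ is immediately ruled out by (iii) applied to $H$ (the relevant index would have to satisfy $j=i\pm 1$), and otherwise replacing $v_j$ by $u'$ gives an odd hole $H^{\mathrm{new}}$ in which $u$ has three arcs ($v_{i-1}\to u$, $u\to v_{i+1}$, $u\to u'$), violating (iii). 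A reverse arc $u'\to u$ with $u'\in V_{i+1}$ is killed by Lemma~\ref{oddholes}(iv) applied to $u$ in the hole obtained by replacing $v_{i+1}$ by $u'$ (two in-neighbours, no out-neighbour). Finally, if $u\in V_i$ and $u'\in V_{i+1}$ were non-adjacent (with both being twins), then $(u,v_{i+1},v_{i+2},u')$ would be an induced $P^+(2,1)$. This yields the extension structure, and hence the result.

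I expect the main obstacle to be Step~1: the type B vertices permitted by Lemma~\ref{oddholes}(iv) are not excluded by the local information about $H$ alone, and ruling them out requires the hole-surgery device above combined with the inequality $2k+1\ge 7$ that keeps indices such as $i$ and $i-3$ genuinely distinct modulo $2k+1$. This is precisely where length at least $7$ enters, and explains why the complementary $5$-hole case, to be handled separately, will only yield $\chi\le 4$ rather than $3$.
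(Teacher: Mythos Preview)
Your argument is correct. It differs from the paper's in two respects worth noting.

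For the partition step (every vertex outside $H$ lies in some twin class $V_i$), the paper proceeds more directly: given a ``bad'' vertex $x_1$ with $N^+(x_1)\cap V(H)=\emptyset$, it takes a shortest $(x_1,V(H))$-dipath $(x_1,\dots,x_t,v_i)$, observes that $x_t$ is a twin of $v_{i-1}$ by (iii), and that the path $(v_{i-3},v_{i-2},x_t,x_{t-1})$ forces $v_{i-3}\to x_{t-1}$ as the unique $H$-arc at $x_{t-1}$ (the alternative $v_{i-2}\to x_{t-1}$ is impossible, since together with $v_{i-2}\to x_t$ and $x_{t-1}\to x_t$ it gives a $TT_3$); then $(x_{t-1},x_t,v_i,v_{i-1})$ is an induced $P^+(2,1)$. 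Your hole-surgery approach, replacing $v_{i-1}$ by $u_1$ and reapplying (iii) to $v$ in the new hole $H'$, is equally valid but slightly less direct for this particular step.

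On the other hand, the paper's written proof stops once bad vertices are excluded, asserting without further justification that $D$ is then an extension of $H$. You explicitly supply what is missing: stability of each $V_i$ (via $TT_3$), absence of arcs between non-consecutive classes and of reversed arcs (via hole surgery combined with (iii) and (iv)), and presence of all forward arcs (via the induced $P^+(2,1)$ on $(u,v_{i+1},v_{i+2},u')$). These checks are genuinely required for the stated conclusion, and the hole-surgery device you introduce is what makes the cross-arc analysis clean: replacing one $H$-vertex by a twin produces a new odd hole of the same length $\ge 7$, so the strong two-arc clause of (iii) applies again. So while your partition argument is marginally more involved than the paper's, your overall proof is more complete.
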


\begin{proof} 
Let $H=(v_1, \dots , v_{2k+1},v_1)$, $k\geq 3$ be an odd hole in $D$.
By Lemma \ref{oddholes}~(i)--(ii), $H$ is directed and every vertex of
$V(D)\setminus V(H)$ is adjacent to $V(H)$.  Suppose $D$ is not an
extension of $H$.  Then by Lemma \ref{oddholes}~(iii)--(iv) there is a
vertex $x_1$ such that $N^+(x_1)\cap V(H)=\emptyset$ and
$|N^-(x_1)\cap V(H)|= 1$.  Let $v_j$ be the vertex of $N^-(x_1)\cap
V(H)$.  As $D$ is strong there exists a $(x_1, H)$-dipath.  Let
$P=(x_1,x_2, \dots ,x_t,v_i)$ be a shortest such dipath.  Then by
minimality of $P$, $x_t$ is the only vertex of $P-\{v_i\}$ that has an
arc to $V(H)$.  By Lemma~\ref{oddholes}~(iii), $v_{i-2}x_t,x_tv_i$ are the
only arcs between $x_t$ and $V(H)$.  Now, $x_{t-1}$ must be adjacent to
$v_{i-3}$, otherwise $(v_{i-3},v_{i-2},x_t,x_{t-1})$ is an induced
$P^+(2,1)$. As $x_{t-1}$ has no arc to $V(H)$ we have that $v_{i-3}x_{t-1}$ is an arc and by Lemma~\ref{oddholes}~(iv) this is the only arc between $x_{t-1}$ and $V(H)$, implying that
$(x_{t-1},x_t,v_i,v_{i-1})$ is an induced $P^+(2,1)$, a contradiction.
\end{proof}

\begin{lemma}\label{5hole}
 Let $D$ be a strong $(TT_3, P^+(2,1))$-free oriented graph.  If $D$
 contains a 5-hole, then $\chi(D) \le 4$.
\end{lemma}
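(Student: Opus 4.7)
The plan is to use Lemma~\ref{oddholes} to localise the structure of $D$ around the 5-hole $H=(v_1,v_2,v_3,v_4,v_5,v_1)$. By parts (iii) and (iv), every vertex $x\in V(D)\setminus V(H)$ has a specific pattern of arcs with $V(H)$ and falls into exactly one of twenty classes indexed by $i\in\{1,\dots,5\}$: $A_i$ (the only arc between $x$ and $V(H)$ is $v_i\to x$), $B_i$ (arcs $v_{i-2}\to x$ and $x\to v_i$), $C_i$ (arcs $v_{i-2}\to x$, $x\to v_i$ and $x\to v_{i+2}$), or $D_i$ (arcs $v_{i-2}\to x$, $v_{i+1}\to x$ and $x\to v_i$). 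Each such class is an independent set, because any two of its vertices share a common hole-neighbour and an edge between them would induce a $TT_3$.

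I would then fix the 3-colouring $c(v_1)=c(v_3)=1$, $c(v_2)=c(v_4)=2$, $c(v_5)=3$ of $H$, introduce a fourth colour, and for each non-hole class compute the admissible colours in $\{1,2,3,4\}$. A direct check identifies exactly two classes with a unique admissible colour, namely $C_5$ and $D_2$, both forced to $4$; and both are contained in $N^+(v_3)$, which is independent by $TT_3$-freeness, so all of $C_5\cup D_2$ can safely receive colour $4$ simultaneously. For the remaining eighteen classes I would use a rotational assignment, for instance mapping each vertex of type $T_i$ (with $T\in\{A,B,C,D\}$) to the colour $c(v_{i-1})$ whenever that colour is admissible, and to the unique remaining admissible colour in $\{1,2,3\}$ otherwise.

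Verifying that this colouring is proper is the main obstacle and would occupy the bulk of the proof. The crucial tool is $P^+(2,1)$-freeness applied to the potential induced path $(x,v_i,v_{i+1},y)$ attached to each hole arc $v_i\to v_{i+1}$: whenever $x\in N^-(v_i)\setminus V(H)$ is not adjacent to $v_{i+1}$ and $y\in N^-(v_{i+1})\setminus V(H)$ is not adjacent to $v_i$, one must have $x\sim y$ to avoid an induced $P^+(2,1)$. Combined with the $TT_3$-forced non-adjacencies (for instance $C_i\not\sim C_{i+2}$, since both classes dominate $v_{i+2}$), this pins down the list of forced cross-class edges, including the complete bipartite join of $B_i\cup C_i$ with $B_{i+1}\cup C_{i+1}\cup C_{i-1}\cup D_{i+1}$. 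Comparing the rotational rule against these forced edges is a routine finite check; the hardest part will be handling the remaining non-forced cross-class adjacencies (potential edges between $A_i$ and $A_{i\pm 1}$, or between $A_i$ and $B_i$, which neither forbidden-subdigraph condition rules out): for each such pair one must verify that the rotational rule already separates the colours or otherwise refine the colouring within the slack of the admissible list, if need be at the level of individual vertices.
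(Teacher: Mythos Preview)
Your overall strategy---classify the vertices of $D\setminus V(H)$ by their adjacency pattern to the $5$-hole and then colour class by class---is exactly what the paper does. However, the execution has a concrete gap.

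First, your class $A_i$ (vertices whose only contact with $H$ is a single in-arc $v_i\to x$) is in fact empty. Since $D$ is strong, every vertex outside $H$ has a shortest dipath to $V(H)$; running the argument of Lemma~\ref{oddholes}(iii)--(iv) along that dipath (as in the proof of Lemma~\ref{7hole}, but now using the richer $k=2$ conclusion of (iii)) shows that every such vertex must dominate some $v_j$. So you are carrying twenty classes where fifteen suffice, and the ``hardest part'' you single out---edges incident with $A_i$---never arises.

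More seriously, your rotational rule is not a proper colouring. With $c(v_1)=c(v_3)=1$, $c(v_2)=c(v_4)=2$, $c(v_5)=3$, your rule assigns colour $c(v_{i-1})$ to $B_i$, so $B_2$ and $B_4$ both receive colour $1$. But take the $5$-hole together with two extra vertices $x,y$ and arcs $v_5x,\,xv_2,\,v_2y,\,yv_4,\,yx$. This oriented graph is strong, $TT_3$-free (the only triangle $\{x,v_2,y\}$ is a directed $\vec{C}_3$), and a short case check shows it is $P^+(2,1)$-free; yet $x\in B_2$, $y\in B_4$, and $yx$ is a monochromatic arc under your colouring. The pair $(B_2,B_4)$ is neither forced adjacent nor forced non-adjacent by your $P^+(2,1)$/$TT_3$ analysis, so your ``routine finite check'' cannot rescue the assignment without an unspecified refinement.

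The paper sidesteps this by using a different palette: it $4$-colours $H$ itself (so one arc of $H$ already uses both ``extra'' colours), which lets the five $X_i=A_i\cup B_i\cup C_i$ blocks receive five pairwise-compatible colours almost for free. The one residual conflict is not handled uniformly at the class level at all: vertices of one class (the paper's $C_2$) are coloured individually, receiving $4$ or $2$ depending on whether they have a neighbour in $C_5$, and a targeted $P^+(2,1)$ argument shows this cannot create a conflict with $C_4$. Your plan gestures at such per-vertex refinement but does not supply it, and the place where it is actually needed is not among the cases you flagged.
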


\begin{proof}
Let $H=(v_1,v_2,v_3,v_4,v_5,v_1)$ be a 5-hole in $D$.  For $i=1,\dots,
5$, define (subscripts are taken modulo $5$ all along the proof):
\begin{itemize}
\item 
$A_i = \{ v \in D-H : \text{$v\la v_{i-1}$, and $v \ra \{v_{i+1} ,v_{i+3}\}$} \}$.
\item 
$B_i = \{ v \in D-H : \text{$v\la \{v_{i-1},v_{i+2}\}$, and $v \ra v_{i+1}$} \}$.
\item 
$C_i = \{ v \in D-H : \text{$v \ra v_{i+1}$ and $v\la v_{i-1}$ }\}$.
\item 
$X_i=A_i\cup B_i \cup C_i$.
\end{itemize}
By Lemma~\ref{oddholes}~(ii)-(iv), the sets $X_1, \dots, X_5$ are a 
partition of the set $V(D)\setminus V(H)$.  Moreover, since $D$ is
$TT_3$-free, we have:
\begin{claim}\label{noArcs}
For $i=1, \dots, 5$, $X_i$ is a stable set, and there is no arc
between $X_i$ and $B_{i+2}$ or between $X_i$ and $A_{i+3}$.
\end{claim}


Let $\pi$ be the colouring of $D$ defined as follows (see Figure~\ref{figCol}).
\begin{itemize}
\item $\pi(v_1)=  1$, $\pi(v_2)=\pi(v_5) = 2$, $\pi(v_3)=3$ and $\pi(v_4)=4$;
\item $\pi(x)=1$ for all $x\in X_1\cup A_4\cup B_2$;
\item $\pi(x) = 2$ for all $x\in A_5 \cup C_5$;
\item $\pi(x) = 3$ for all $x\in X_3\cup B_5$;
\item $\pi(x)=4$ for all $x\in A_2 \cup B_4 \cup C_4$.
\end{itemize}
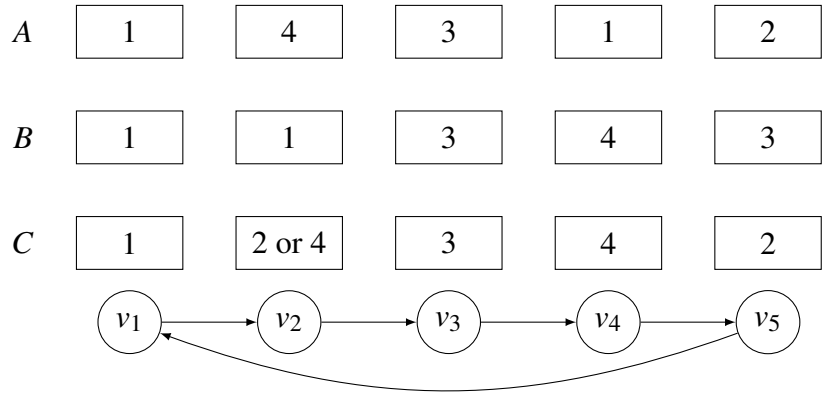
\begin{figure}[hbtp]
\centering
 \begin{tikzpicture}[scale=0.7]
\node[draw,circle] (1) at (0,0) {$v_1$};
\node[draw,circle] (2) at (3,0) {$v_2$};
\node[draw,circle] (3) at (6,0) {$v_3$};
\node[draw,circle] (4) at (9,0) {$v_4$};
\node[draw,circle] (5) at (12,0) {$v_5$};

\draw[->,>=latex] (1) -- (2);
\draw[->,>=latex] (2) -- (3);
\draw[->,>=latex] (3) -- (4);
\draw[->,>=latex] (4) -- (5);
\draw[->,>=latex] (5) to[bend left=20] (1); 

\draw(-1,1) rectangle (1,2);
\draw(-1,3) rectangle (1,4);
\draw(-1,5) rectangle (1,6); 
\draw (0,5.5) node {$1$}; 
\draw (0,3.5) node {$1$}; 
\draw (0,1.5) node {$1$}; 

\begin{scope}[xshift=3cm]
\draw(-1,1) rectangle (1,2);
\draw(-1,3) rectangle (1,4);
\draw(-1,5) rectangle (1,6);
\draw (0,5.5) node {$4$}; 
\draw (0,3.5) node {$1$}; 
\draw (0,1.5) node {$2$ or $4$}; 
\end{scope}

\begin{scope}[xshift=6cm]
\draw(-1,1) rectangle (1,2);
\draw(-1,3) rectangle (1,4);
\draw(-1,5) rectangle (1,6);
\draw (0,5.5) node {$3$}; 
\draw (0,3.5) node {$3$}; 
\draw (0,1.5) node {$3$}; 
\end{scope}

\begin{scope}[xshift=9cm]
\draw(-1,1) rectangle (1,2);
\draw(-1,3) rectangle (1,4);
\draw(-1,5) rectangle (1,6);
\draw (0,5.5) node {$1$}; 
\draw (0,3.5) node {$4$}; 
\draw (0,1.5) node {$4$}; 
\end{scope}

\begin{scope}[xshift=12cm]
\draw(-1,1) rectangle (1,2);
\draw(-1,3) rectangle (1,4);
\draw(-1,5) rectangle (1,6);
\draw (0,5.5) node {$2$}; 
\draw (0,3.5) node {$3$}; 
\draw (0,1.5) node {$2$}; 
\end{scope}

\draw (-2,1.5) node {$C$};
\draw (-2,3.5) node {$B$};
\draw (-2,5.5) node {$A$};
\end{tikzpicture}
  \caption{The colouring $\pi$ of $D-C_2$}
  \label{figCol}
\end{figure}

By Claim~\ref{noArcs}, $\pi$ is a proper colouring of $D-C_2$.

For any $v \in C_2$, set $\pi(v)=4$ if $v$ has a neighbour in $C_5$,
and $\pi(v)=2$ otherwise.  We shall prove that the function $\pi$ is a
proper colouring of $D$.  By Claim~\ref{noArcs}, if $v\in C_2$ has no
neighbour in $C_5$, then none of its neighbours is coloured $2$.  So
the only problem that might occur is if a vertex of $v \in C_2$
coloured with $4$ (and thus adjacent to a vertex $u \in C_5$) has a
neighbour with colour $4$, say $w$.
By Claim~\ref{noArcs}, $w \in C_4$ and since $D$ is $TT_3$-free, $vu$
and $wv$ are arcs of $D$.
  
If $u$ and $w$ were non-adjacent, then  $(w,v,u,v_4)$ would be  an
induced $P^+(2,1)$.  So they are  adjacent and $u$ dominates $w$, since $(u,v,w)$ cannot induce a $TT_3$. But then $(v_2,v_3,w,u)$ is
an induced $P^+(2,1)$, a contradiction.  This proves that $\pi$ is a
proper colouring of $D$ and then $\chi(D) \leq 4$.
\end{proof}

We now describe a $(TT_3,P^+(2,1))$-free oriented graph with chromatic
number $4$.  Take two 5-holes $C_1= (v_1,v_2,v_3,v_4,v_5,v_1)$ and
$C_2=(u_1,u_2,u_3,u_4,u_5,u_1)$ and for each vertex $u_i$ we add the
arcs $v_{i-1}u_i, u_iv_{i+1}$ and $u_iv_{i+3}$ (see figure
\ref{figCol4}).  It is a routine exercise to check that this oriented
graph is indeed $(TT_3, P^+(2,1))$-free.  In any $3$-colouring of
$C_1$, there exists $i \in \{1, \dots , 5\}$ such that the vertices
$v_{i-1},v_{i+1},v_{i+3}$ have distinct colours, and thus no colour is
available for $u_i$.  So this graph is not $3$-colourable.
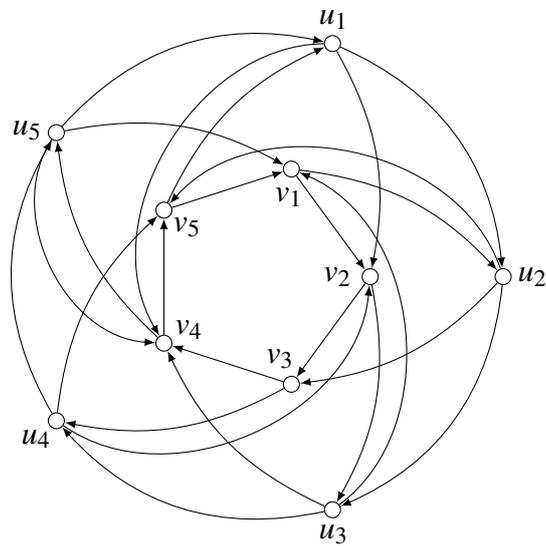
\begin{figure}[hbtp]
\centering
\begin{tikzpicture}[scale=0.5]
\vertex[](v2) at  (0:3) {};  \draw (0:2.8) node[left]{$v_2$};
\vertex[](v1) at  (72:3) {};   \draw (72:2.8) node[below]{$v_1$};
\vertex[](v5) at  (144:3) {};  \draw (152:2.8) node[right]{$v_5$};
\vertex[](v4) at  (216:3) {};  \draw (210:2.8) node[right]{$v_4$};
\vertex[](v3) at  (288:3) {};  \draw (283:2.8) node[above]{$v_3$};

\vertex[](u2) at  (0:6.5) {};  \draw (0:6.6) node[right]{$u_2$};
\vertex[](u1) at  (72:6.5) {};  \draw (72:6.6) node[above]{$u_1$};
\vertex[](u5) at  (144:6.5) {}; \draw (144:6.6) node[left]{$u_5$};
\vertex[](u4) at  (216:6.5) {}; \draw (213:6.9) node[below]{$u_4$};
\vertex[](u3) at  (288:6.5) {}; \draw (288:6.6) node[below]{$u_3$};

\draw[->,>=latex] (v1) to  (v2); 
\draw[->,>=latex] (v2) to  (v3); 
\draw[->,>=latex] (v3) to  (v4); 
\draw[->,>=latex] (v4) to  (v5); 
\draw[->,>=latex] (v5) to  (v1); 

\draw[->,>=latex] (u1) to[bend left=30]   (u2); 
\draw[->,>=latex] (u2) to[bend left=30]   (u3); 
\draw[->,>=latex] (u3) to[bend left=30]   (u4); 
\draw[->,>=latex] (u4) to[bend left=30]   (u5); 
\draw[->,>=latex] (u5) to[bend left=30]   (u1); 

\draw[->,>=latex] (v5) to[bend left=20]  (u1); 
\draw[->,>=latex] (u1) to[bend left=20]  (v2);
\draw[->,>=latex] (u1) to[bend right=60]  (v4);

\draw[->,>=latex] (v1) to[bend left=20]  (u2); 
\draw[->,>=latex] (u2) to[bend left=20]  (v3);
\draw[->,>=latex] (u2) to[bend right=60](v5);

\draw[->,>=latex] (v2) to[bend left=20]  (u3); 
\draw[->,>=latex] (u3) to[bend left=20]  (v4);
\draw[->,>=latex] (u3) to[bend right=60](v1);

\draw[->,>=latex] (v3) to[bend left=20]  (u4); 
\draw[->,>=latex] (u4) to[bend left=20]  (v5);
\draw[->,>=latex] (u4) to[bend right=60](v2);

\draw[->,>=latex] (v4) to[bend left=20]  (u5); 
\draw[->,>=latex] (u5) to[bend left=20]  (v1);
\draw[->,>=latex] (u5) to[bend right=60](v4);
\end{tikzpicture}
  \caption{A $(TT_3,P^+(2,1))$-free oriented graph with chromatic number $4$.  }
  \label{figCol4}
\end{figure}

\begin{theorem}\label{TT3P+(2,1)}
$\chi(\Forb(TT_3, P^+(2,1))) = 4$.  More precisely, if
$D$ is a $(TT_3, P^+(2,1))$-free oriented graph, then the following
hold.
\begin{itemize}
\item $\chi(D) \le 4$; \item If $D$ contains an odd hole of length $7$
or more, then $\chi(D) =3$.
\end{itemize}

\end{theorem}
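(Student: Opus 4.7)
The lower bound $\chi\geq 4$ is witnessed by the explicit two-$\vec{C}_5$ construction described just before the theorem. For the upper bound, I would split on whether $D$ contains an odd hole. If $D$ has no odd hole, then $U(D)$ is odd-hole-free; since $D$ is $TT_3$-free, every $K_4$ in $U(D)$ would contain an induced $TT_3$ by Theorem~\ref{thm:erdos-moser}, so $\omega(U(D))\leq 3$, and Theorem~\ref{thm:CRST10} yields $\chi(D)\leq 4$ directly.

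When $D$ contains an odd hole $H$, Lemma~\ref{oddholes}~(i)(v) places $H$ inside an initial strong component $C$ as a directed cycle. In the strong case $D=C$, Lemmas~\ref{7hole} and~\ref{5hole} already settle everything: $\chi(D)=3$ if $|H|\geq 7$ (and $D$ is then an extension of $H$), and $\chi(D)\leq 4$ if $|H|=5$. For a general $D$, I would argue by induction on $|V(D)|$: by the inductive hypothesis, $D-C$ admits a $4$-colouring $\phi$, and the strong-case treatment yields a $4$-colouring $\psi$ of $D[C]$. Since $C$ is initial, every arc joining $C$ to $D-C$ leaves $C$, so combining $\phi$ and $\psi$ into a proper $4$-colouring of $D$ amounts to ensuring that, for every $v\in C$, the colour $\psi(v)$ differs from $\phi(w)$ for every out-neighbour $w$ of $v$ lying outside $C$.

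Making $\phi$ and $\psi$ simultaneously compatible is the main obstacle. My plan would be to exploit Lemma~\ref{oddholes}~(iii)(iv), which tightly restricts the attachment of vertices outside $C$ to an odd hole inside $C$, together with the flexibility in the colourings built in Lemmas~\ref{7hole} and~\ref{5hole}: in the $|H|\geq 7$ case, $D[C]$ uses only three colours, leaving a fourth colour free to absorb any boundary conflict, which is also what should yield the global $\chi(D)=3$ claimed in the second bullet; in the $|H|=5$ case, the partition-based colouring of Lemma~\ref{5hole} can be permuted over the $A_i, B_i, C_i$ pieces to accommodate the constraints coming from $\phi$. Carrying out this recolouring argument with enough care to preserve the $4$-colour bound across all inter-component arcs is where the technical weight of the proof lies.
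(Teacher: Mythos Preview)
Your reduction to the odd-hole-free case via Theorem~\ref{thm:CRST10} matches the paper, and invoking Lemmas~\ref{7hole} and~\ref{5hole} on a strong component containing an odd hole is also right. The gap is entirely in the merging step, and it is a real one. After you fix a $4$-colouring $\phi$ of $D-C$ and a colouring $\psi$ of $C$, a vertex $v\in C$ may have out-neighbours in $D-C$ carrying all four colours of $\phi$; nothing you have said prevents this, and neither ``use the spare fourth colour'' nor ``permute the $A_i,B_i,C_i$ colours'' addresses it, since different vertices of $C$ would need different spare colours simultaneously. Worse, for the second bullet your induction is in the wrong direction: if $D$ has a $7^+$-hole you must produce a global $3$-colouring, but your inductive hypothesis on $D-C$ only hands you four colours, and your own remark that the fourth colour ``absorbs boundary conflict'' directly contradicts the claimed $\chi(D)=3$.

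The paper does not merge two independent colourings; it \emph{extends} a colouring of $K$ outward. First it shows that $K$ is the \emph{unique} initial strong component (so every vertex is reachable from $K$), then it builds the BFS-type layers $L_i=\{x:\dist(K,x)=i\}$ and proves structural facts: there are no backward arcs, each $L_i$ with $i\ge 2$ is stable, $L_1$ is a disjoint union of directed bipartite pieces whose two sides see complementary colour sets in $L_0$, and for $i\ge 2$ all in-neighbours of a vertex in $L_{i-1}$ share the same neighbourhood in $L_{i-2}$ and hence can be given the same colour. These facts let any $k$-colouring of $L_0=K$ (with $k=3$ or $4$ coming from Lemma~\ref{7hole} or~\ref{5hole}) be pushed layer by layer to all of $D$ without introducing a new colour. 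That outward-extension viewpoint, together with the uniqueness of $K$, is the missing idea.
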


\begin{proof}
Let $D \in \Forb(TT_3, P^+(2,1))$ and assume $D$ is connected.  We may
assume that $D$ admits at least one initial strong component $K$ that
contains an odd hole, otherwise by Lemma~\ref{oddholes}~(v) $D$ is
odd hole-free and thus is $4$-colourable by Theorem~\ref{thm:CRST10}.

\begin{claim}
$K$ is the only initial strong component of $D$.
\end{claim}
\begin{subproof}
Assume $D$ contains another initial strong component $K'$.  Let
$P=(p_1,p_2, \dots ,p_k)$ be a shortest path from $K$ to $K'$, where
$p_1 \in K$ and $p_k \in K'$.  Note that since $K$ and $K'$ are
initial strong components $p_1\ra p_2$ and $p_k\ra p_{k-1}$.  As $K$ is strong and non-trivial there
exists a vertex $p_0$ in $V(K)\setminus \{p_1\}$ such that $p_0 \ra
p_1$.  Observe that by minimality of $P$, and since $D$ is
$TT_3$-free, $P'=(p_0,p_1, \dots ,p_k)$ is an induced path.  Moreover,
since $p_0 \ra p_1 \ra p_2$ and $p_{k-1} \la p_k$, necessarily $P'$
contains a $P^+(2,1)$, a contradiction.
\end{subproof}

Let $\dist(K,x)$ denote the distance from $K$ to $x$, that is the
length of a shortest dipath from $K$ to $x$ in $D$.  Note that
$\dist(K,x)$ is well-defined for every vertex $x\in V(D)$, because $K$
is the only initial strong component, so every vertex can be reached
from $K$.  Set $L_i=\{x: \dist(K,x)=i\}$ (in particular $L_0=K$).
Clearly, the $L_i$ partition $V(D)$.  If $j>i$, an arc from $L_j$ to
$L_i$ is called a \emph{backward arc}.

\begin{claim}
$D$ has no backward arcs.
\end{claim}

\begin{subproof}
Assume for contradiction that $uv$ is a backward arc from $L_j$ to
$L_i$ and assume it has been chosen with respect to the minimality of
$i$.  Observe that $i \ge 1$.  If $i \ge 2$, then there exists a
vertex $v_1 \in L_{i-1}$ and a vertex $v_2 \in L_{i-2}$ such that $v_2
\ra v_1 \ra v$ and thus $(v_2,v_1,v,u)$ is a $P^+(2,1)$ and it is
induced by minimality of $i$, a contradiction.  So we may assume that
$i=1$.  Let $v_1 \in L_0$ such that $v_1 \ra v$.  There exists a
vertex $v_2 \in L_0$ such that $v_2 \ra v_1$ and since $D$ is
$TT_3$-free, $v_2$ is not adjacent to $v$.  Hence $\{v_2,v_1,v,u\}$
induces a $P^+(2,1)$, a contradiction.
\end{subproof}

\begin{claim}
For any $i \ge 2$, $L_i$ is a stable set.
\end{claim}
\begin{subproof}
Let $i \ge 2$ and assume that $uv$ is an arc of $L_i$.  There exists
$v_1 \in L_{i-1}$ and $v_2 \in L_{i-2}$ such that $v_2 \ra v_1 \ra v$.
So $(v_2,v_1,v,u)$ is a $P^+(2,1)$ and it is induced since there is no
$TT_3$ nor backward arcs.
\end{subproof}

A \emph{directed bipartite graph} is an orientation of a connected
bipartite graph such that every vertex is either a source or a sink.

\begin{claim}\label{dibip}
$L_1$ is a  disjoint union of  directed bipartite graphs. 
\end{claim}
\begin{subproof}
Assume for contradiction that there exists $a,b,c \in L_1$ such that
$a \ra b \ra c$ (note that $ca$ might or might not be an arc).  We
distinguish between two cases.  \medskip

\noindent{\bf Case 1}: $c$ admits a neighbour $c_1 \in L_0$ such that
$c_1$ belong to an odd hole $H=(c_1, \dots, c_{2k+1},c_1)$ of $L_0$.
Since $(c_{2k+1},c_1,c,b)$ cannot be induced, $c_{2k+1}\ra b$ and
since $(c_{2k},c_{2k+1},b,a)$ cannot be induced, $c_{2k}\ra a$.
Recall that by Lemma~\ref{oddholes} (iv), a vertex in $L_1$ is
adjacent to at most one vertex in $H$.  Since $(a,b,c,c_1)$ cannot be
induced, we must have $c\ra a$.  But now $(c_1,c,a,c_{2k})$ is an induced
$P^+(2,1)$, a contradiction.  \medskip

\noindent{\bf Case 2}: no neighbour of $c$ in $L_0$ belongs to an odd
hole in $L_0$.  Let $c_1 \in L_0$ be a neighbour of $c$.  By
Lemma~\ref{7hole}, if $L_0$ contains an odd hole of length at least
$7$, then all vertices of $L_0$ belong to an odd hole.  So we may
assume that $L_0$ contains a $5$-hole, say
$H=(u_1,u_2,u_3,u_4,u_5,u_1)$.  By property~\ref{oddholes}~(iii), we
may assume without loss of generality that $u_2 \ra c_1 \ra u_4$ and
that exactly one of $u_5c_1$, $c_1u_1$ is an arc.  Recall again that
by Lemma~\ref{oddholes} (iv), a vertex in $L_1$ is adjacent to at most
one vertex in $H$.

Since $(u_2,c_1,c,b)$ cannot be induced, $u_2b$ is an arc.  Since
$(u_1,u_2,b,a)$ cannot be induced, $u_1a$ is an arc.  Since
$(a,b,c,c_1)$ cannot be induced and $c_1a$ is not an arc by Lemma~\ref{oddholes} (iv), $c$ and $a$ are adjacent and we have $c\ra a$. 
But now 
$(u_5,u_1,a,c)$ is an induced $P^+(2,1)$ (it is indeed
induced because $c$ has no neighbour in $H$), a contradiction.
\end{subproof}

We may now assume that $L_1$ consists of $t$ directed bipartite graphs
$(A_1,B_1), \dots, (A_t,B_t)$ such that all arcs of $L_1$ are from
$A_i$ to $B_i$.

\begin{claim}\label{diBip}
Let $1 \le i \le t$ and let $u,v \in A_i$.  Then $u$ and $v$ have the
same neighbourhood in $L_0$ and the graph induced by $N_{L_0}(B_i)$
and $N_{L_0}(A_i)$ is a complete bipartite graph.
\end{claim}
\begin{subproof}
Assume for a contradiction that there exists a vertex $u' \in L_0$
such that $u'u$ is an arc but $u'v$ is not.  As $(A_i,B_i)$ is connected, we may assume without
loss of generality that $u$ and $v$ have a common neighbour in $B_i$,
say $w$.  Then $(u',u,w,v)$ induces a $P^+(2,1)$, a contradiction.

Let $w \in B_i$ and $w'$ be a neighbour of $w$ in $L_0$.  Let $u\in
A_i$ be a neighbour of $w$.  Let $u' \in N_{L_0}(A_i)$.  Since $u'$
dominates all vertices of $A_i$, $u'$ dominates $u$ and thus $u' \neq
w$, otherwise $(u',u,w)$ is a $TT_3$, and $u'$ is adjacent to $w'$,
otherwise $(u',u,w,w')$ induce a $P^+(2,1)$.
\end{subproof}

\begin{claim}\label{ngbBelow}
Let $i \ge 2$ and let $u \in L_i$.  Then the neighbours of $u$ in
$L_{i-1}$ have the same neighbourhood in $L_{i-2}$.
\end{claim}
\begin{subproof}
Let $v,w$ be two neighbours of $u$ in $L_{i-1}$.  Since there is no
backward arcs, $vu$ and $wu$ are arcs.  If some $z\in L_{i-2}$ was adjacent to precisely one of $v,w$, say $v$, then $(z,v,u,w)$ would induce  a $P^+(2,1)$.
Hence $v$ and $w$ share the
same in-neighbourhood, which implies the claim.
\end{subproof}

We are now going to explain how a   $k$-colouring of $L_0$ (where $k=3$ or $4$), can be extended to the rest of the graph. 
So assume that $L_0$ is coloured with colours from $\{1,2, \dots, k\}$. 

We start by colouring $L_1$.  Let $1 \le i \le t$ and let $I \subseteq
\{1, \dots, k\}$ be the set of colours used to colour $N_{L_0}(A_i)$.
Since $N_{L_0}(B_i)$ is complete to $N_{L_0}(A_i)$, $I\neq \{1, \dots,
k\}$ and only colours from $\{1, \dots, k\}-I$ are used to colour
$N_{L_0}(B)$.  So we can colour the vertices of $A_i$ with a colour
from $\{1, \dots, k\}-I$ and the vertices in $B_i$ with a colour from
$I$.  Hence we can colour all vertices of $L_1$.  Moreover assume we
are doing so in such a way that two vertices of $L_1$ that are sharing
the same neighbourhood in $L_0$ are coloured with the same colour.

Now we colour the rest of the graph layer by layer.  Assume that all
layer below $L_i$ ($i \ge 2$) have already been coloured in such a way
that two vertices in the same layer that have the same neighbour in
the layer below are coloured with the same colour.  Then, by
Claim~\ref{ngbBelow}, each vertex in $L_i$ see a single colour in
$L_{i-1}$, so it is easy to extend the colouring.
\end{proof}

\subsection{Forbidding several orientations of $P_4$}

Observe that, by directional duality, $\Forb(P^+(3), P^+(2,1))
=\Forb(P^+(3), P^-(2,1))$.

\begin{proposition}\label{prop:p3-p21}
An oriented graph in $\Forb(P^+(3), P^+(2,1))$ or $\Forb(P^+(3),
P^+(1,1,1))$ contains no odd hole.
\end{proposition}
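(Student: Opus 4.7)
The plan is to assume for contradiction that $D$ contains an odd hole $C = (v_1, v_2, \ldots, v_{2k+1}, v_1)$ with $k \geq 2$, and to derive a contradiction in each of the two cases. The key observation I will use throughout is that, since $C$ is an induced cycle of length at least $5$, any four consecutive vertices $v_i, v_{i+1}, v_{i+2}, v_{i+3}$ (indices modulo $2k+1$) span an induced $P_4$ in $D$, so whichever orientation $C$ gives them is automatically an induced orientation of $P_4$ in $D$.

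For $D \in \Forb(P^+(3), P^+(2,1))$, the argument is immediate: Lemma~\ref{lem-p21}(1) asserts that $C$ must be a directed cycle, so $v_1, v_2, v_3, v_4$ induce $P^+(3)$, contradicting $D \in \Forb(P^+(3))$.

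For $D \in \Forb(P^+(3), P^+(1,1,1))$, I will encode the orientation of $C$ by the cyclic sequence $(\epsilon_1, \ldots, \epsilon_{2k+1}) \in \{+,-\}^{2k+1}$, with $\epsilon_i = +$ if $v_i \to v_{i+1}$ and $\epsilon_i = -$ otherwise. Forbidding $P^+(3)$ as an induced subdigraph rules out three consecutive equal entries of $\epsilon$. Forbidding $P^+(1,1,1)$ rules out both cyclic patterns $+,-,+$ and $-,+,-$ at any three consecutive positions $i-1, i, i+1$: such a pattern produces, on $\{v_{i-1}, v_i, v_{i+1}, v_{i+2}\}$, either a vertex of in-degree $2$ adjacent to a vertex of out-degree $2$, or the symmetric configuration, both of which are isomorphic to $P^+(1,1,1)$ since that oriented graph is invariant under reversing all its arcs. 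Together, these two constraints force every maximal cyclic run of $\epsilon$ to have length exactly $2$, so the length of $C$ equals twice the number of runs and is therefore even, contradicting the fact that $C$ has $2k+1$ vertices.

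The only (minor) point requiring care is the claim that both $+,-,+$ and $-,+,-$ yield induced copies of $P^+(1,1,1)$; once this is verified, the remaining parity argument is entirely mechanical.
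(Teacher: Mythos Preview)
Your argument is essentially correct, but there is one formal slip and the route is slightly less direct than the paper's.

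\textbf{The citation issue.} For the $\Forb(P^+(3),P^+(2,1))$ case you invoke Lemma~\ref{lem-p21}(1) to conclude that $C$ is directed. That lemma, however, is stated for $(\vec{C}_3,TT_3,P^+(2,1))$-free oriented graphs, and your $D$ need not be $\vec{C}_3$-free or $TT_3$-free. It is true that the \emph{proof} of part~(1) of that lemma uses only $P^+(2,1)$-freeness, so the conclusion you want is valid; but as written, the citation does not apply. You should either repeat the one-line argument (an odd hole has two consecutive arcs in the same direction, and if it is not directed one finds an induced $P^+(2,1)$), or note explicitly that the proof of Lemma~\ref{lem-p21}(1) does not use the extra hypotheses.

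\textbf{Comparison with the paper.} The paper proceeds more directly in both cases and never needs the intermediate fact that $C$ is directed. For $\Forb(P^+(3),P^+(2,1))$ it observes that an odd hole must contain two consecutive arcs in the same direction, say $v_1\to v_2\to v_3$, and then $(v_1,v_2,v_3,v_4)$ is either $P^+(3)$ or $P^+(2,1)$. For $\Forb(P^+(3),P^+(1,1,1))$ it observes (by the same parity obstruction you exploit) that an odd hole must contain two arcs at distance~$1$ in the same direction, say $v_1\to v_2$ and $v_3\to v_4$; then $(v_1,v_2,v_3,v_4)$ is either $P^+(3)$ or $P^+(1,1,1)$. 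Your $\epsilon$-encoding and run-length analysis for the second case is correct (the verification that both $+,-,+$ and $-,+,-$ yield $P^+(1,1,1)$ is fine, since $P^+(1,1,1)$ is self-converse), and the conclusion that all maximal runs have length exactly~$2$ is equivalent to the paper's observation; the paper just phrases it as ``some $\epsilon_i=\epsilon_{i+2}$'' and finishes in one step rather than counting runs.
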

\begin{proof}
Let $D$ be a $(P^+(3), P^+(2,1))$-free oriented graph.  Assume for a
contradiction, that it contains an odd hole $C=(v_1, \dots ,v_p,
v_1)$.  Necessarily, $C$ contains two consecutive edges that are
oriented in the same direction.  Without loss of generality, $v_1\ra
v_2 \ra v_3$.  Now $(v_1, v_2, v_3, v_4)$ is either a $P^+(3)$ or a
$P^+(2,1)$, a contradiction.

Let $D$ be a $(P^+(3), P^+(1,1,1))$-free oriented graph.  Assume for a
contradiction, that it contains an odd hole $C=(v_1, \dots ,v_p,
v_1)$.  Necessarily, $C$ contains two edges at distance $1$ that are
oriented in the same direction.  Without loss of generality, $v_1\ra
v_2$ and $v_3 \ra v_4$.  Now $(v_1, v_2, v_3, v_4)$ is either a
$P^+(3)$ or a $P^+(1,1,1)$, a contradiction.
\end{proof}

A recent and difficult paper of Seymour and Scott (see \cite{ScSe})
proves that the class of odd-hole-free graphs is $\chi$-bounded, which
directly yields the following results.

\begin{corollary}\label{cor:2P4}
$\Forb(P^+(3), P^+(2,1))$, $\Forb(P^+(3), P^-(2,1))$, and
$\Forb(P^+(3), P^+(1,1,1))$ are $\chi$-bounded.
\end{corollary}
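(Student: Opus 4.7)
The plan is to deduce all three cases from Proposition~\ref{prop:p3-p21} together with the Scott--Seymour theorem, cited from \cite{ScSe} just above the statement, which asserts that the class of odd-hole-free graphs is $\chi$-bounded. First, for $\Forb(P^+(3), P^+(2,1))$ and $\Forb(P^+(3), P^+(1,1,1))$, Proposition~\ref{prop:p3-p21} has already established that every oriented graph in either class has no odd hole in its underlying graph. Since $\chi(D)$ and $\omega(D)$ are defined via the underlying graph, the $\chi$-bounding function $f$ given by Scott--Seymour for odd-hole-free graphs applies verbatim, yielding $\chi(D) \le f(\omega(D))$ for every $D$ in either class.

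For $\Forb(P^+(3), P^-(2,1))$, I would invoke the directional-duality remark made just before Proposition~\ref{prop:p3-p21}: the arc-reversal operation $D \mapsto D^R$ preserves both chromatic number and clique number, fixes $P^+(3)$ as an unlabelled oriented graph, and exchanges $P^+(2,1)$ with $P^-(2,1)$. Hence it provides a bijection between $\Forb(P^+(3), P^+(2,1))$ and $\Forb(P^+(3), P^-(2,1))$ under which any $\chi$-bounding function for one class is also a $\chi$-bounding function for the other.

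There is essentially no technical obstacle, since Proposition~\ref{prop:p3-p21} has already absorbed the combinatorial content (a short argument showing that any oriented odd hole must contain two edges in appropriate position oriented the same way, forcing an induced $P^+(2,1)$ or $P^+(1,1,1)$) and the hard external input is entirely encapsulated in the Scott--Seymour theorem. The only point to verify is that the meaning of ``odd hole'' matches in both sources, namely an induced cycle of odd length at least $5$ in the underlying graph, which is indeed the case here.
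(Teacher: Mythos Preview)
Your proof is correct and follows exactly the same route as the paper: it invokes Proposition~\ref{prop:p3-p21} to conclude odd-hole-freeness and then applies the Scott--Seymour theorem, handling $\Forb(P^+(3),P^-(2,1))$ via the directional-duality identification with $\Forb(P^+(3),P^+(2,1))$ noted just before the proposition. There is nothing to add.
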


A natural question is to ask for the values (or nice bounds) of
$\chi(\Forb (\Or(K_k), P^+(3), P^+(2,1)))$ and $\chi(\Forb (\Or(K_k),
P^+(3), P^+(1,1,1)))$ for every $k\geq 3$.  A graph with no odd hole
nor clique of size $3$ contains no odd cycle and thus is bipartite.
Thus
\begin{proposition}
$\chi(\Forb (\vec{C}_3, TT_3, P^+(3), P^+(2,1)))=\chi(\Forb
(\vec{C}_3, TT_3, P^+(3), P^+(1,1,1))) = 2$.
\end{proposition}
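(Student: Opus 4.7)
The upper bound is essentially immediate from what has just been established, so the plan is straightforward. I would argue as follows: let $D$ be an oriented graph in one of the two classes. By Proposition~\ref{prop:p3-p21}, $D$ contains no odd hole. Since $D$ is also $\vec{C}_3$-free and $TT_3$-free and every orientation of $K_3$ is either $\vec{C}_3$ or $TT_3$, the underlying graph of $D$ contains no triangle. A graph with no triangle and no induced odd cycle of length $\ge 5$ has no odd cycle at all, and is therefore bipartite. Hence $\chi(D) \le 2$.

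For the lower bound, it suffices to exhibit a single member of each class with chromatic number exactly $2$. The oriented graph consisting of a single arc lies trivially in both $\Forb(\vec{C}_3, TT_3, P^+(3), P^+(2,1))$ and $\Forb(\vec{C}_3, TT_3, P^+(3), P^+(1,1,1))$ (it has only two vertices, so it cannot contain any of the forbidden configurations as induced subdigraphs) and has chromatic number $2$. Combining both bounds gives the equality.

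The only step requiring any real work is the implication ``no triangle and no induced odd cycle of length $\ge 5$ $\Rightarrow$ bipartite,'' which is a standard observation: if $G$ contained any odd cycle, then it would contain a shortest odd cycle $C$; by shortness, $C$ has no chord (otherwise a chord would split $C$ into a shorter odd cycle and an even one), so $C$ is an induced odd cycle, and either $|C| = 3$ (a triangle) or $|C| \ge 5$ (an odd hole), both of which are excluded. Since this is a routine fact and the heavy lifting was done in Proposition~\ref{prop:p3-p21}, there is no real obstacle; the statement is essentially a clean corollary packaged for reference.
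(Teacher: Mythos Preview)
Your proof is correct and follows essentially the same approach as the paper: use Proposition~\ref{prop:p3-p21} to exclude odd holes, use $(\vec{C}_3, TT_3)$-freeness to exclude triangles, and conclude bipartiteness. The paper's version is terser (it states the implication ``no odd hole and no triangle $\Rightarrow$ bipartite'' without justification and omits the trivial lower bound), but the argument is the same.
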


One can also easily prove the following proposition.
\begin{proposition}
$$\chi(\Forb (\vec{C}_3, TT_3, P^+(2,1), P^+(1,1,1))) =3.$$  
\end{proposition}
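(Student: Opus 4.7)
The plan is to observe that this statement is almost immediate from the previously established Theorem~\ref{thm:p21}, together with a simple witness for the lower bound. I will present both directions separately.

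For the upper bound, I will note the trivial inclusion
\[
\Forb(\vec{C}_3, TT_3, P^+(2,1), P^+(1,1,1)) \subseteq \Forb(\vec{C}_3, TT_3, P^+(2,1)),
\]
which holds because forbidding an additional induced subdigraph only shrinks the class. By Theorem~\ref{thm:p21}, every oriented graph in the right-hand class has chromatic number at most $3$, hence the same bound holds for the left-hand class.

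For the lower bound, I will exhibit $\vec{C}_5$ as a member of $\Forb(\vec{C}_3, TT_3, P^+(2,1), P^+(1,1,1))$ with chromatic number $3$. Indeed, $\vec{C}_5$ is triangle-free (in its underlying graph), so it contains neither $\vec{C}_3$ nor $TT_3$ as an induced subdigraph. Moreover, every set of four consecutive vertices of $\vec{C}_5$ induces an oriented path whose arcs all point in the same direction, i.e.\ a copy of $P^+(3)$, which is neither $P^+(2,1)$ nor $P^+(1,1,1)$. Since the underlying graph of $\vec{C}_5$ is an odd cycle, its chromatic number is exactly $3$.

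Combining the two bounds yields $\chi(\Forb(\vec{C}_3, TT_3, P^+(2,1), P^+(1,1,1))) = 3$. There is no real obstacle here: the only thing to check carefully is that $\vec{C}_5$ truly belongs to the forbidding class (straightforward from the previous paragraph), and the upper bound is inherited for free from the stronger result already proved in Theorem~\ref{thm:p21}.
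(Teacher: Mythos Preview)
Your proof is correct and follows exactly the approach the paper itself indicates: the paper remarks that the proposition ``derives directly from Theorem~\ref{thm:p21} and the fact that directed odd cycles are in $\Forb (\vec{C}_3, TT_3, P^+(2,1), P^+(1,1,1))$.'' Your argument spells this out, using $\vec{C}_5$ as the witness for the lower bound and the trivial inclusion plus Theorem~\ref{thm:p21} for the upper bound.
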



\noindent This proposition also derives directly from
Theorem~\ref{thm:p21} and the fact that directed odd cycles are in
$\Forb (\vec{C}_3, TT_3, P^+(2,1), P^+(1,1,1))$.

\section{Concluding Remarks}

Let us conclude by discussing the remaining open cases.  Conjecture
\ref{conj:star} about stars is still widely open, the next case to
study being $\Forb(\Or(K_4),S_{k,k})$.  About oriented paths, note
that since $\Forb(P^+(3))$ and $\Forb(P^+(1,1,1))$ are not
$\chi$-bounded, the only open cases for orientations of $P_k$ that
would be $\chi$-bounding are paths of the type $P^+(2,2,\ldots,2)$ or
$P^+(1,2,2,\ldots,2)$, or $P^+(1,2,2,\ldots,2,1)$ (following our
notations).  In fact for trees in general, most orientations will
contain either $P^+(3)$ and $P^+(1,1,1)$ and hence when forbidden will
define classes that are not $\chi$-bounded.

Recall that Conjecture \ref{gyarfas-sumner} states that for every tree
$T$, the class of $T$-free graphs is $\chi$-bounded.  A stronger
conjecture could be the following : for every tree $T$, there exists
one orientation $\vec{T}$ of $T$ such that the class of graphs that
admit a $\vec{T}$-free orientation is $\chi$-bounded.  This is false
for many trees, as shown below.
\begin{proposition}
There exists a tree $T$ such that for every orientation $\vec{T}$ of
$T$, $\Forb(\vec{T})$ is not $\chi$-bounded.
\end{proposition}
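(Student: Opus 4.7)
The plan is to exhibit an explicit tree $T$ such that every orientation $\vec T$ contains either $P^+(3)$ or $P^+(1,1,1)$ as an induced subdigraph. Granting this, the proposition follows immediately: if $\vec T$ has $P^+(3)$ (resp.\ $P^+(1,1,1)$) as an induced subdigraph then $\Forb(P^+(3))\subseteq\Forb(\vec T)$ (resp.\ $\Forb(P^+(1,1,1))\subseteq\Forb(\vec T)$), and $\Forb(\vec T)$ fails to be $\chi$-bounded because both $\Forb(P^+(3))$ and $\Forb(P^+(1,1,1))$ already do so, via the Kierstead--Trotter Zykov-type construction and the Erd\H{o}s--Hajnal shift graphs recalled in Subsection~\ref{subsec-nochi}.

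Concretely, I would take $T$ to be the caterpillar obtained from the path $v_1v_2v_3v_4v_5v_6$ by attaching a single pendant leaf $l_3$ at $v_3$ and a single pendant leaf $l_4$ at $v_4$. Given any orientation of $T$, one first examines the spine and encodes its five arcs by a word $(a_1,\dots,a_5)\in\{R,L\}^5$, with $a_i=R$ meaning that the arc between $v_i$ and $v_{i+1}$ agrees with the direction from $v_i$ to $v_{i+1}$. A direct check shows that a three-letter window of this word represents an induced $P^+(2,1)$ or $P^-(2,1)$ exactly when it lies in $\{RRL,RLL,LRR,LLR\}$, while the complementary triples $RRR$, $LLL$, $RLR$, $LRL$ represent exactly $P^+(3)$ and $P^+(1,1,1)$. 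So if any window of the spine is of the latter type, we are done. Otherwise every window has exactly one transition, which forces the difference sequence $d_i := a_i\oplus a_{i+1}$ to alternate, and a short enumeration then shows that the spine word is one of only four possibilities, namely the cyclic shifts $LLRRL$, $RRLLR$, $LRRLL$, $RLLRR$ of the period-four pattern $RRLL$.

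Call an interior vertex $v_i$ of the spine a \emph{transit vertex} if $a_{i-1}=a_i$, that is, if both spine arcs at $v_i$ carry the same symbol. A case-by-case inspection of the four spine words shows that $v_4$ is transit in $LLRRL$ and $RRLLR$, whereas $v_3$ is transit in $LRRLL$ and $RLLRR$; hence at least one of $v_3, v_4$ is a transit vertex in every safe spine orientation. The core local observation is then the following: whenever the pendant $l_i$ is attached at a transit vertex $v_i$ with $i\in\{3,4\}$ (so that both second neighbours $v_{i-2}$ and $v_{i+2}$ exist), the two induced $P_4$'s $l_iv_iv_{i-1}v_{i-2}$ and $l_iv_iv_{i+1}v_{i+2}$ force the pendant edge $l_iv_i$ to be oriented in two opposite directions; consequently, whichever orientation we choose for this edge, one of those two $P_4$'s is $P^+(3)$ or $P^+(1,1,1)$. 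This, combined with the spine analysis, shows that every orientation of $T$ contains an induced $P^+(3)$ or $P^+(1,1,1)$, which is what we wanted. The only non-routine step is this transit-vertex contradiction, which reduces to a short calculation with reading directions; the rest is straightforward enumeration.
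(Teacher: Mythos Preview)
Your proof is correct, and in fact your tree is isomorphic to the paper's: the paper takes a path $v_1v_2v_3v_4$ with a pendant $w_i$ at each $v_i$, but since $w_1$ and $w_4$ are attached at the ends this is the same caterpillar as yours, with longest path $w_1v_1v_2v_3v_4w_4$ and pendants $w_2,w_3$ at the two interior degree-$3$ vertices. The overall strategy is identical---show that every orientation contains an induced $P^+(3)$ or $P^+(1,1,1)$ and then invoke the Kierstead--Trotter and shift-graph constructions---except that the paper simply asserts the key combinatorial claim (``It is easy to see'') without argument, whereas you supply a complete case analysis via the spine word and the transit-vertex contradiction. Your enumeration of the four surviving spine words and the verification that one of $v_3,v_4$ is always transit are accurate, and the transit-vertex argument goes through cleanly once one notes (as you implicitly use) that the alternating difference sequence forces $a_{i-2}\neq a_{i-1}=a_i\neq a_{i+1}$, so the two induced $P_4$'s through $l_i$ have signatures $(b,L,R)$ and $(b,R,L)$ (or their reverses), which indeed demand opposite values of $b$.
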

\begin{proof}
To construct $T$, start with an induced path on four vertices
$\{v_1,v_2,v_3,v_4\}$ and add vertices $\{w_1,w_2,w_3,w_4\}$ such that
$N(w_i)=\{v_i\}$.  It is easy to see that every orientation of this
tree contains either a $P^+(3)$ or $P^+(1,1,1)$.  Therefore
$\Forb(\vec{T})$ contains either $\Forb(P^+(3))$ or
$\Forb(P^+(1,1,1))$ which are both not $\chi$-bounded.
\end{proof}
Of course any tree that contains this tree $T$ will also satisfy the
theorem.  Up to our knowledge, Gy\'arf\'as-Summner conjecture
(Conjecture~\ref{gyarfas-sumner}) is not known to be true for these
trees, so they could be natural candidates for counterexamples.

\section*{Acknowledgement}
The authors would like to thank St\'ephan Thomass\'e for stimulating discussions.

\end{document}